\newtheorem{theorem}{Theorem}
\newtheorem{proposition}[theorem]{Proposition}%
\tikzset{%
    ->,  
    >=stealth, 
    node distance=2.5cm, 
    every state/.style={thick, fill=gray!10}, 
    initial text=$ $, 
}
\newtheorem{Proposition}[theorem]{Proposition}
\newtheorem{Example}{Example}
\newtheorem{Definition}{Definition}
\newtheorem{Corollary}[theorem]{Corollary}
\newtheorem{Lemma}[theorem]{Lemma}
\newtheorem{Theorem}[theorem]{Theorem}
\newcommand{\powerset}[1]{\mathscr{P}(#1)}
\newcommand{\Lstar}{\ensuremath{\mathrm{L^\ast}}}
\newcommand{\Nat}{\ensuremath{\mathbb{N}}}
\newcommand{\Tol}{\ensuremath{\mathcal{S}}}
\newcommand{\RED}{\ensuremath{\mathbf{RED}}}
\newcommand{\BLUE}{\ensuremath{\mathbf{BLUE}}}
\newcommand{\MQ}{\ensuremath{\mathbf{MQ}}}
\newcommand{\ra}{\ensuremath{\rightarrow}}
\newcommand{\indist}{\ensuremath{\equiv}}
\newcommand{\Real}{\ensuremath{\mathbb{R}}}
\newcommand{\Aut}{\ensuremath{A}}
\newcommand{\AutB}{\ensuremath{B}}
\newcommand{\Clique}{\ensuremath{C}}
\newcommand{\quanti}{\ensuremath{I}}
\newcommand{\Sta}{\ensuremath{Q}}
\newcommand{\Func}{\ensuremath{F}}
\newcommand{\rankseq}{\ensuremath{R}}
\newcommand{\NTree}{\ensuremath{T}}
\newcommand{\Wor}{\ensuremath{W}}
\renewcommand{\S}{\ensuremath{S}}
\newcommand{\func}{\ensuremath{f}}
\newcommand{\sta}{\ensuremath{q}}
\newcommand{\worw}{\ensuremath{w}}
\newcommand{\woru}{\ensuremath{u}}
\newcommand{\worv}{\ensuremath{v}}
\newcommand{\tol}{\ensuremath{t}}
\newcommand{\simil}{\ensuremath{{z}}}
\newcommand{\rankk}[1]{\ensuremath{R_{#1}}}
\newcommand{\Symb}{\ensuremath{\Sigma}}
\newcommand{\Psimplex}{\ensuremath{\Delta}}
\newcommand{\symb}{\ensuremath{\sigma}}
\newcommand{\policy}{\ensuremath{\pi}}
\newcommand{\tra}{\ensuremath{\tau}}
\newcommand{\emptyW}{\ensuremath{\lambda}}
\newcommand{\quantp}{\ensuremath{\kappa}}
\newcommand{\pdist}{\ensuremath{\delta}}
\newcommand{\strtosta}{\ensuremath{\beta}}
\newcommand{\PLstar}{\ensuremath{\mathrm{L_{\Stasim}^\ast}}}
\newcommand{\cla}[1]{\ensuremath{\llbracket #1 \rrbracket}}
\newcommand{\ccla}[1]{\ensuremath{\left[#1\right]}}
\newcommand{\eqdef}{\triangleq}
\newcommand{\clique}{\ensuremath{c}}
\newcommand{\Stasim}{\ensuremath{\mathcal{E}}}
\newcommand{\lmodel}{\ensuremath{\mathcal{M}}}
\newcommand{\qindist}[1]{\ensuremath{\overline{#1}}}
\newcommand{\terminal}{\ensuremath{\$}}
\newcommand{\SymbT}{\ensuremath{\Symb_\terminal}}
\newcommand{\staI}{\ensuremath{\sta_{\mathrm{in}}}}
\newcommand{\ProbT}{\ensuremath{\Psimplex(\SymbT)}}
\newcommand{\Words}{\ensuremath{\Symb^\ast}}
\newcommand{\policyW}{\ensuremath{\policy^\ast}}
\newcommand{\traW}{\ensuremath{\tra^\ast}}
\newcommand{\EQ}{\ensuremath{\mathbf{EQ}}}
\newcommand{\Pref}{\ensuremath{\mathit{Pre}}}
\newcommand{\Suff}{\ensuremath{\mathit{Suf}}}
\newcommand{\hstaI}{\ensuremath{\overline{\sta}_{\mathrm{in}}}}
\newcommand{\htraW}{\ensuremath{\overline{\tra}^\ast}}
\newcommand{\hpolicyW}{\ensuremath{\overline{\policy}^\ast}}
\newcommand{\hpolicy}{\ensuremath{\overline{\policy}}}
\newcommand{\htra}{\ensuremath{\overline{\tra}}}
\newcommand{\hSta}{\ensuremath{\overline{\Sta}}}
\newcommand{\hAut}{\ensuremath{\overline{\Aut}}}
\newcommand{\bstaI}{\ensuremath{\overline{\sta}_{\mathrm{in}}}}
\newcommand{\btraW}{\ensuremath{\overline{\tra}^\ast}}
\newcommand{\bpolicyW}{\ensuremath{\overline{\policy}^\ast}}
\newcommand{\bpolicy}{\ensuremath{\overline{\policy}}}
\newcommand{\btra}{\ensuremath{\overline{\tra}}}
\newcommand{\bSta}{\ensuremath{\overline{\Sta}}}
\newcommand{\qstaI}{\qindist{\sta}_{\mathrm{in}}}
\newcommand{\qtraW}{\qindist{\tra}^\ast}
\newcommand{\qpolicyW}{\qindist{\policy}^\ast}
\newcommand{\qpolicy}{\qindist{\policy}}
\newcommand{\qtra}{\qindist{\tra}}
\newcommand{\qsta}{\qindist{\sta}}
\newcommand{\qSta}{\qindist{\Sta}}
\newcommand{\qFunc}{\qindist{\Func}}
\newcommand{\qfunc}{\qindist{\func}}
\newcommand{\qpdfa}{\ensuremath{H}}
\newcommand{\tsim}{\ensuremath{\approx}}
\newcommand{\wsim}{\approxeq}
\newcommand{\Realp}{\ensuremath{\Real_{+}}}
\newcommand{\iffdef}{\ensuremath{\stackrel{\vartriangle}{\iff}}}
\renewcommand{\setminus}{\backslash}
\newcommand{\topr}[1]{\ensuremath{\mathit{top}_{#1}}}
\newcommand{\rank}[1]{\ensuremath{\mathit{rank}_{#1}}}
\newcommand{\rthr}{\ensuremath{r}}
\newcommand{\supsim}{\ensuremath{\mathit{sdr}}}
\newcommand{\supeq}{\ensuremath{\mathit{supeq}}}
\newcommand{\wer}[1]{\ensuremath{\mathit{wer}_{#1}}}
\newcommand{\ndcg}[1]{\ensuremath{\mathit{ndcg}_{#1}}}
\newcommand{\vd}{\ensuremath{\mathit{vd}}}
\DeclareMathOperator{\supp}{supp}
\DeclareMathOperator{\CG}{CG}
\DeclareMathOperator{\DCG}{DCG}
\DeclareMathOperator{\NDCG}{NDCG}
\newcommand{\Equ}{\Stasim}
\newcommand{\qedexample}{\ensuremath{\blacksquare}}
\newcommand{\ce}{\worv}
\newcommand{\card}{\ensuremath{\#}}
\newcommand{\Close}{\textsf{Close}}
\newcommand{\Consistent}{\textsf{Consistent}}
\newcommand{\BuildQPDFA}{\textsf{BuildQPDFA}}
\newcommand{\Update}{\textsf{Update}}
\newcommand{\True}{\textsc{True}}
\newcommand{\prefixes}{\textsf{prefixes}}
\newcommand{\suffixes}{\textsf{suffixes}}
\newcommand{\qlmodel}{\ensuremath{\qindist{\lmodel}}}
\newcommand{\qfuncW}{\ensuremath{\qfunc^\ast}}
\begin{document}

\title{Congruence-based Learning of Probabilistic Deterministic Finite Automata}
\renewcommand{\shorttitle}{Congruence-based Learning of PDFA}

\author{%
    {M. Carrasco \hspace*{1.5ex}
    F. Mayr \hspace*{1.5ex}
    S. Yovine}\\
    Facultad de Ingeniería\\ Universidad ORT Uruguay\\ Montevideo, Uruguay\\
    \texttt{carrasco\_m@ort.edu.uy}\\
    \texttt{mayr@ort.edu.uy}\\
    \texttt{yovine@ort.edu.uy}\\
}

\renewcommand{\headeright}{Carrasco, Mayr, Yovine}

\date{}

\maketitle

\begin{abstract}
This work studies the question of learning probabilistic deterministic automata from language models. 
For this purpose, it focuses on analyzing the relations defined on algebraic structures over strings by equivalences and similarities on probability distributions. We introduce a congruence that extends the classical Myhill-Nerode congruence for formal languages. This new congruence is the basis for defining regularity over language models. We present an active learning algorithm that computes the quotient with respect to this congruence whenever the language model is regular. The paper also defines the notion of recognizability for language models and shows that it coincides with regularity for congruences. For relations which are not congruences, it shows that this is not the case. Finally, it discusses the impact of this result on learning in the context of language models.
\end{abstract}




\section{Introduction}\label{sec:intro}

In the last few years, there has been a growing interest in trying to understand sequence processing neural networks via capturing their behavior with finite automata through active learning by means of adapting Angluin's \Lstar\ learning algorithm~\cite{Angluin:1987}. 
Works like~\cite{WeissGY24, mayr_yovine_2021, bollig_23} deal with neural binary classifiers of finite sequences over finite alphabets with the aim to learning deterministic finite automata (DFA). 
For neural language models, the goal is to learn a probabilistic deterministic finite automaton (PDFA)~\cite{IEEE:Vidal2005}. In this case, two approaches have been studied: 
those which view neural networks as producing the probability of the input sequence~\cite{sico_pdfa_distillation_24, pmlr-v217-muskardin23a},
and those which consider a network to be an autoregressive model that outputs the next-symbol probability distribution~\cite{weiss_WFA_learning, icgi2023_a}. 
This paper focuses on the latter.

The algorithm proposed in~\cite{weiss_WFA_learning} is based on a tolerance relation induced by the supremum distance between probability distributions in order to group states which have similar futures, in the sense that when continued by the same sequence they reach states that remain close to each other. This is achieved by clustering the observations obtained through querying the target neural language model with so-called \emph{membership queries}. An important drawback of this approach is that the non-transitivity of the tolerance inherited from the distance implies that the clusters are not unique. 
In contrast, the learning algorithm developed in~\cite{icgi2023_a} steps on equivalences over probability distributions in order to define a family of congruences over the set of sequences. The advantage of this is that a congruence being a transitive tolerance, it induces a unique partition of the set of observations got by membership queries, which allows avoiding the possibly arbitrary grouping choices made by the clustering method.
Indeed, this approach is aligned with the one behind \Lstar\ whose cornerstone is the relation between regular languages and DFA induced by the Myhill-Nerode congruence. 

The contribution of this paper is two-fold. 
First, it studies the mathematical properties of the tolerance and congruence structures induced on the set of sequences by languages models for any similarity or equivalence relation over probability distributions. 
Second, it steps on these properties to analyze the learning capabilities of these approaches provided the kind of relation they are based on. 

To achieve this, the paper starts by reviewing similarities and equivalences commonly used in the literature for the analysis of language models (Section~\ref{sec:pdist}).
It then shows that these similarities and equivalence relations induce tolerances and congruences over sequences, respectively, it proves several results about them, and it defines the quotient given by a congruence and the notion of regular language model (Section~\ref{sec:lmodels}).  
Section~\ref{sec:pdfa} recalls the formal definition of PDFA, it introduces the concept of quotient PDFA, and it proves several results establishing the formal relations between language models, PDFA and their quotients.
Section~\ref{sec:plstar} proposes the algorithm \PLstar\ which generalizes Angluin's \Lstar\ algorithm with the purpose of learning a quotient PDFA from a language model modulo an equivalence over distributions \Equ. Correctness and termination of \PLstar\ heavily rely on the properties of quotients proved before. It also defines the concept of recognizability of language models and proves that it coincides with regularity in the case of congruences.
Section~\ref{sec:tolerance} is devoted to the analysis of the feasibility of developing learning algorithms that rely on tolerances as opposed to congruences. The main result is that recognizability does not longer imply regularity. The impact of this fact in learning is discussed.
Section~\ref{sec:conclusions} summarizes the contributions.

\section{Probability distributions}\label{sec:pdist}

Let \Symb\ be a finite \emph{alphabet} and $\SymbT \eqdef \Symb \cup \{\terminal\}$, where \terminal\ is a special \emph{terminal} symbol not in \Symb. 
A \emph{probability distribution} over \SymbT\ is a function $\pdist:\SymbT\ra[0,1]$ such that $\sum_{\symb\in\SymbT} \pdist(\symb)=1$.
We denote \ProbT\ the set of all probability distributions over \SymbT.

\subsection{Similarities}

We call \emph{similarity}, denoted \Tol, a reflexive and symmetric binary relation between distributions over \SymbT. We write $\pdist\tsim_\Tol\pdist'$ to indicate that $\pdist$ and $\pdist'$ are related by \Tol.

A natural way of defining a similarity is by means of a function $\simil:\ProbT\times\ProbT\to\Real_+$ satisfying:
\begin{enumerate}
    \item \simil\ is symmetric in the sense that $\simil(\pdist,\pdist')=\simil(\pdist',\pdist)$ for all $\pdist,\pdist'\in\ProbT$;
    \item and $\simil(\pdist,\pdist)=0$ for all $\pdist\in\ProbT$.
\end{enumerate} 
Such a function \simil\ induces a similarity via a threshold $\tol\in\Realp$ denoted $\tsim_{(\simil,\tol)}$ and defined as:
\begin{align}\label{def:simil}
\pdist \tsim_{(\simil,\tol)} \pdist &\iffdef \simil(\pdist,\pdist')\leq\tol
\end{align}
By choosing the function \simil\ one gets different examples of similarities appearing in the literature.
Before listing some of them, let us define $\rank{}(\pdist) : \SymbT \ra \Nat$ to be the \emph{ranking} of symbols $\symb\in\SymbT$ induced by their probability $\pdist(\symb)$:
\begin{align}\label{def:rank}
\rank{}(\pdist)(\symb) &\eqdef \card\{\pdist(\symb')\mid\pdist(\symb')\geq\pdist(\symb),\,\symb'\in\SymbT\}.   
\end{align}
We assume $\rank{}(\pdist)$ to be injective, or equivalently, that there are no ties. We can achieve this assuming $\SymbT$ to be equipped with an arbitrary ordering and break ranking ties using this ordering. For $\rthr\in\Nat$, $\topr{\rthr}(\pdist) \subseteq \SymbT$, gives the set of top-\rthr\ ranked symbols while forgetting their relative order: 
\begin{align}\label{def:top}
\topr{\rthr}(\pdist) &\eqdef \{ \symb\in\SymbT \mid \rank{}(\pdist)(\symb) \leq \rthr\}
\end{align}
The following are examples of similarities which are of interest in our context.
\begin{description}
    \item[Variation Distance] corresponds to the choice of the infinity norm for \simil: 
    \begin{align}\label{def:tolerance}
    \vd(\pdist,\pdist') &\eqdef \max_{\symb\in\SymbT} |\pdist(\symb)-\pdist'(\symb)|
    \end{align}
    This relation was used in \cite{weiss_WFA_learning, clark_thollard}. \\
    
    \item[Support Difference Rate] A basic example of similarity is obtained by taking
    \begin{align}
    \supsim(\pdist,\pdist') &\eqdef \frac{\card\big(\supp(\pdist)\oplus\supp(\pdist')\big)}{\card\SymbT}\in[0,1]        
    \end{align}
    where $\card$ denotes cardinal, $\oplus$ the symmetric difference between sets, and $\supp$ the \emph{support} of the distribution (set of elements with non-zero probability). The function \supsim\ measures the proportion of symbols belonging to the support of one distribution but not to the other. \\
    
    \item[Word Error Rate] For $\rthr\in\Nat$, consider the function:
    \begin{align}\label{def:wer}
    \wer{\rthr}(\pdist,\pdist') &\eqdef \frac{1}{2\rthr}\card\big(\topr{\rthr}(\pdist)\oplus\topr{\rthr}(\pdist')\big)\in[0,1]
    \end{align}
    We obtain in this way the word error rate (WER), which measures the proportion of symbols which are $\topr{\rthr}$ for one distribution but not for the other. 
    This relation was used in~\cite{weiss_WFA_learning} with $\rthr=1$. \\
    
    \item[Cumulative Gain] The previous example can also be obtained in a slightly different way by considering the cumulative gain with respect to a binary relevance measure. Fix $\rthr\in\Nat$ and define the relevance of symbol $\symb$ with respect to $\pdist$ as $\mathds{1}\big[\symb\in\topr{\rthr}(\pdist)\big]$. Then the cumulative gain of $\pdist'$ with respect to $\pdist$ for this relevance measure is given by:
    \begin{align}
    \CG_{\rthr}\left(\pdist'\mid \pdist\right) &\eqdef \sum_{\symb\in \topr{\rthr}(\pdist')} \mathds{1}\big[\symb\in\topr{\rthr}(\pdist)\big]        
    \end{align} 
    Since $\max_{\pdist'}\CG_{\rthr}\left(\pdist'\mid \pdist\right)=r$, the normalized cumulative gain is $\frac{1}{r}\CG_{\rthr}$. Then the previous example (Def.~\ref{def:wer}) can be rewritten as:
    \begin{align}
    \wer{\rthr}(\pdist,\pdist') &\eqdef 1 - \frac{\CG_{\rthr}\left(\pdist'\mid \pdist\right) + \CG_{\rthr}\left(\pdist\mid \pdist'\right)}{2\rthr}
    \end{align}
    Notice that $\CG_{\rthr}\left(\pdist'\mid \pdist\right)$ coincides with the cardinal of the intersection of the respective $\topr{\rthr}$ symbols, which is symmetric. This symmetry breaks down in the general case of an arbitrary relevance measure or when considering a discounted cumulative gain as in the next example. \\

    \item[Discounted Cumulative Gain] Let us denote by $\symb^{\pdist}_{k}$ the symbol whose ranking with respect to $\pdist$ equals $k$. That is $\rank{}(\pdist)\left(\symb_{k}^{\pdist}\right)=k$. Also, let $\rank{\rthr}$ be the function: 
    \begin{equation}\label{def:rankr_fun}
     \rank{\rthr}(\pdist) \eqdef
     \begin{cases}
         \rank{}(\pdist) & \text{if }\rank{}(\pdist)\leq\rthr\\
         r+1 & \text{otherwise}.
     \end{cases}   
    \end{equation}
    In order to account for the specific ranking of symbols, we can modify the previous example by considering a measure of relevance related to the rank together with a discounted cumulative gain: 
    \begin{align}
    \DCG_{\rthr}\left(\pdist'\mid \pdist\right) &\eqdef \sum_{k=1}^r \frac{1}{\log_2(k+1)} \left[r-\rank{\rthr}(\pdist)\left(\symb_{k}^{\pdist'}\right) + 1\right]        
    \end{align}
    The discounting factor weights each ranking position so that the symbols that mostly account for the total sum are those appearing at the top of the ranking.
    Since the maximum of $\DCG_{\rthr}\left(\pdist'\mid \pdist\right)$ is attained when $\pdist'=\pdist$, the normalized discounted cumulative gain is given by:
    \begin{align}\label{eq:NDCG}
    \NDCG_{\rthr}\left(\pdist'\mid \pdist\right) &\eqdef \left.\DCG_{\rthr}\left(\pdist'\mid \pdist\right)
    \middle/
    \sum_{k=1}^r \frac{r-k+1}{\log_2(k+1)} \right.     
    \end{align}
    Therefore we can consider the function:
    \begin{align}\label{def:ndcg}
    \ndcg{\rthr}(\pdist,\pdist') \eqdef 1 - \frac{\NDCG_{\rthr}\left(\pdist'\mid \pdist\right) + \NDCG_{\rthr}\left(\pdist\mid \pdist'\right)}{2}        
    \end{align}
    More generally, we can define the relevance of symbol $\symb$ with respect to a distribution $\pdist$ as $\pdist(\symb)$, and define the respective normalized discounted cumulative gain.
    Such version of this similarity was used in~\cite{pmlr-v57-balle16}.
\end{description}

\subsection{Equivalences}

Let \Stasim\ be an \emph{equivalence} relation between distributions over \SymbT. 
For simplicity, we write $\pdist =_{\Stasim} \pdist'$ to denote that $(\pdist, \pdist')\in\Stasim$.
We denote by $\ccla{\pdist}_\Equ$ the class of $\pdist$ and $\ccla{\cdot}_\Equ$ the quotient map. 
Several examples of equivalences are of interest. The following relations were used in~\cite{learnaut2022, icgi2023_a, icgi2023_b, learnaut2024}.
\begin{description} 
\item[Quantization] 
Given a \emph{quantization parameter} $\quantp\in\Nat$, $\quantp\geq 1$, the quantization \emph{interval} $\quanti_\quantp^n$, for $n\in\Nat$, $0\leq n<\quantp-1$, is the interval $\left[ n \quantp^{-1}, (n+1) \quantp^{-1} \right)$, and for $n=\quantp-1$, is the interval $\left[ n \quantp^{-1}, 1 \right]$. 
For $\pdist, \pdist'\in\ProbT$:
\begin{align}\label{def:quant}
\pdist =_\quantp \pdist' 
&\iffdef \textrm{for all}\ \symb\in\SymbT\ .\
(\pdist(\symb), \pdist'(\symb))\in\quanti_\quantp^n\times\quanti_\quantp^n\ 
\textrm{for some}\ n    
\end{align}
\item[Rank]
For $\rthr\in\Nat$, and given distributions $\pdist$ and $\pdist'$, we define
\begin{align}\label{def:rankr}
\pdist =_{\rank{\rthr}} \pdist' &\iffdef \rank{\rthr}(\pdist) = \rank{\rthr}(\pdist')
\end{align}
\item[Top] 
For $\rthr\in\Nat$ and $\pdist, \pdist'\in\ProbT$:
\begin{align}\label{def:topr}
\pdist =_{\topr{\rthr}} \pdist' &\iffdef \topr{\rthr}(\pdist) = \topr{\rthr}(\pdist') 
\end{align}
\item[Support] 
For $\pdist, \pdist'\in\ProbT$:
\begin{align}\label{def:supp}
\pdist =_{\supeq} \pdist' &\iffdef \supp(\pdist) = \supp(\pdist') 
\end{align}
\item[Combinations] One can combine two equivalence relations $\Stasim_1$ and $\Stasim_2$ by defining $\pdist =_{\Stasim} \pdist'$ if and only if $\pdist =_{\Stasim_i} \pdist'$ for $i=1,2$. For example, $=_\quantp$ and $\topr{\rthr}$ combined with $\supeq$ were used for analyzing constrained language models in~\cite{learnaut2024}.
\end{description}

\subsection{Properties}

Intuitively, the examples of similarities and equivalences defined above suggest that they are related. We show here several results that formalize their relationship. \\

\begin{proposition}\label{prop:quantp_vs_tol}
For every $\pdist, \pdist' \in \Psimplex(\Symb_\terminal)$, if $\pdist=_\quantp\pdist'$ then $\pdist\tsim_{(\vd,\quantp^{-1})}\pdist'$.
\end{proposition}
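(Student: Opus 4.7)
The plan is to unfold both relations on the two sides of the implication and reduce the statement to a pointwise estimate on the length of the quantization intervals. Assume $\pdist =_\quantp \pdist'$. By the definition of $=_\quantp$, for every $\symb \in \SymbT$ there is some index $n$ (depending on $\symb$) with $0 \leq n \leq \quantp - 1$ such that both $\pdist(\symb)$ and $\pdist'(\symb)$ lie in $\quanti_\quantp^n$.

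The key observation is that every quantization interval $\quanti_\quantp^n$ has length at most $\quantp^{-1}$. For $0 \leq n < \quantp - 1$ this is immediate, since $\quanti_\quantp^n = [n\quantp^{-1}, (n+1)\quantp^{-1})$ has length exactly $\quantp^{-1}$. For the top interval $n = \quantp - 1$ one checks that $\quanti_\quantp^{\quantp-1} = [(\quantp-1)\quantp^{-1}, 1]$ has length $1 - (\quantp - 1)\quantp^{-1} = \quantp^{-1}$ as well. Consequently, for every $\symb \in \SymbT$ we obtain
\begin{equation*}
|\pdist(\symb) - \pdist'(\symb)| \leq \quantp^{-1}.
\end{equation*}

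Taking the maximum over $\symb \in \SymbT$ yields $\vd(\pdist,\pdist') = \max_{\symb \in \SymbT} |\pdist(\symb) - \pdist'(\symb)| \leq \quantp^{-1}$, which by the defining equation \eqref{def:simil} of the induced similarity is exactly $\pdist \tsim_{(\vd, \quantp^{-1})} \pdist'$. The only mildly nonroutine step is the length computation for the rightmost interval $n = \quantp - 1$, which needs a separate check because its definition uses a closed rather than a half-open endpoint; everything else is a direct unpacking of definitions.
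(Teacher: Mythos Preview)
Your proof is correct and follows essentially the same approach as the paper: unfold the definition of $=_\quantp$, use that each quantization interval has length $\quantp^{-1}$ to bound $|\pdist(\symb)-\pdist'(\symb)|$ pointwise, then take the maximum. Your version merely spells out the interval-length computation (including the closed top interval) that the paper leaves implicit.
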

\begin{proof}
By Def.~\ref{def:quant}, $\pdist=_\quantp\pdist'$ implies for all $\symb\in\SymbT$, 
$(\pdist(\symb), \pdist'(\symb)) \in \quanti_\quantp^n \times \quanti_\quantp^n$ for some $n$. 
Thus, by definition of $\quanti_\quantp^n$, $|\pdist(\symb)-\pdist'(\symb)| \leq \quantp^{-1}$ for all $\symb\in\SymbT$. 
So, $\max_{\symb\in\SymbT} |\pdist(\symb)-\pdist'(\symb)| \leq \quantp^{-1}$.
Hence, by Def.~\ref{def:tolerance}, $\pdist\tsim_{(\vd,\quantp^{-1})}\pdist'$.
\qedhere
\end{proof}

~\\
In general, it is not true that $\tsim_{(\simil,\tol)}$ is an equivalence relation for $\tol>0$.
Nevertheless, when the function $\simil$ is a pseudometric, meaning that it satisfies the triangle inequality $\simil(\pdist,\pdist'')\leq \simil(\pdist,\pdist')+\simil(\pdist',\pdist'')$, the relation $\tsim_{(\simil,0)}$ with threshold $\tol=0$ is indeed an equivalence.
Clearly, this is the case for \vd\ and \supsim, which become $=$ and \supeq, respectively.
Moreover, the equivalence relations $\topr{\rthr}$ and $\rank{\rthr}$ are related to the evaluation metrics WER and NDCG, sometimes used to compare language models \cite{pmlr-v57-balle16, weiss_WFA_learning}. This is made precise in the following propositions. \\

\begin{Proposition}\label{prop:top_vs_wer}
For every $\pdist, \pdist' \in \Psimplex(\Symb_\terminal)$, 
\(\pdist =_{\topr{\rthr}} \pdist'\) if and only if \(\wer{\rthr}(\pdist,\pdist')=0\).
\end{Proposition}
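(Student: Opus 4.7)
The plan is to unfold the definitions on both sides of the biconditional and observe that each reduces to the same statement about the symmetric difference $\topr{\rthr}(\pdist) \oplus \topr{\rthr}(\pdist')$. Specifically, by Def.~\ref{def:topr}, the relation $\pdist =_{\topr{\rthr}} \pdist'$ is by definition the set equality $\topr{\rthr}(\pdist) = \topr{\rthr}(\pdist')$, which holds if and only if $\topr{\rthr}(\pdist) \oplus \topr{\rthr}(\pdist') = \emptyset$, i.e.\ $\card\bigl(\topr{\rthr}(\pdist) \oplus \topr{\rthr}(\pdist')\bigr) = 0$.

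On the other side, applying Def.~\ref{def:wer} gives
\[
\wer{\rthr}(\pdist,\pdist') = \frac{1}{2\rthr}\card\bigl(\topr{\rthr}(\pdist) \oplus \topr{\rthr}(\pdist')\bigr),
\]
and since $\rthr \geq 1$ (so the denominator $2\rthr$ is strictly positive), this quantity vanishes exactly when the cardinal of the symmetric difference equals zero. Chaining the two equivalences yields the claim.

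There is essentially no obstacle here: the result is a direct consequence of unfolding the two definitions, combined with the elementary fact that a nonnegative integer multiple of a positive rational is zero iff the integer is zero, and that two finite sets are equal iff their symmetric difference is empty. The only point worth flagging is the implicit assumption $\rthr \geq 1$, without which the normalization in Def.~\ref{def:wer} is ill-defined; under that assumption the argument is a two-line chain of "iff"s, so the write-up will be short.
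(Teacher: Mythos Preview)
Your proposal is correct and follows exactly the same approach as the paper's proof, which simply notes that by Def.~\ref{def:wer}, $\wer{\rthr}(\pdist,\pdist')=0$ is equivalent to $\topr{\rthr}(\pdist)=\topr{\rthr}(\pdist')$. Your version is just a more detailed unfolding of the same one-line argument.
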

\begin{proof}
From Def.~\ref{def:wer} we have that \(\wer{\rthr}(\pdist,\pdist')=0\) is equivalent to $\topr{\rthr}(\pdist)=\topr{\rthr}(\pdist')$.
\end{proof}

\begin{Proposition}\label{prop:rank_vs_ndcg}
For every $\pdist, \pdist' \in \Psimplex(\Symb_\terminal)$ we have \(\pdist =_{\rank{\rthr}} \pdist'\) if and only if \(\ndcg{\rthr}(\pdist,\pdist')=0\).
\end{Proposition}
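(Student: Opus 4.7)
The plan is to unfold the definition of $\ndcg{\rthr}$ and characterize precisely when $\NDCG_{\rthr}(\pdist'\mid\pdist)$ attains its maximum value $1$. Since $\NDCG_{\rthr}(\pdist'\mid\pdist),\NDCG_{\rthr}(\pdist\mid\pdist')\in[0,1]$ by construction of the normalization in Eq.~\eqref{eq:NDCG}, Def.~\ref{def:ndcg} immediately gives $\ndcg{\rthr}(\pdist,\pdist')=0$ iff both $\DCG_{\rthr}(\pdist'\mid\pdist)$ and $\DCG_{\rthr}(\pdist\mid\pdist')$ attain the common maximum $\sum_{k=1}^r (r-k+1)/\log_2(k+1)$ appearing in that denominator.

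The core step is to identify, for fixed $\pdist$, when $\DCG_{\rthr}(\pdist'\mid\pdist)$ is maximal in $\pdist'$. Setting $w_k = 1/\log_2(k+1)$ (strictly decreasing in $k$) and $m_k = \rank{\rthr}(\pdist)(\symb_k^{\pdist'})$, I rewrite $\DCG_{\rthr}(\pdist'\mid\pdist) = (r+1)\sum_{k=1}^r w_k - \sum_{k=1}^r w_k m_k$, so maximizing $\DCG$ is equivalent to minimizing $\sum_k w_k m_k$. By Eq.~\eqref{def:rankr_fun} and the injectivity of $\rank{}(\pdist)$ on the distinct symbols $\symb_1^{\pdist'}, \ldots, \symb_r^{\pdist'}$, each value in $\{1, \ldots, r\}$ occurs at most once among the $m_k$, whereas the value $r+1$ may repeat. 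If some $m_j = r+1$ while some $m^{*} \in \{1, \ldots, r\}$ is missing from $\{m_1,\ldots,m_r\}$, substituting $m^{*}$ for $m_j$ strictly decreases the sum by $w_j((r+1) - m^{*}) > 0$; once $\{m_1, \ldots, m_r\} = \{1, \ldots, r\}$, the rearrangement inequality together with the strict monotonicity of $w_k$ forces the unique minimum at $m_k = k$ for all $k$.

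Translated back, the maximum of $\DCG_{\rthr}(\pdist'\mid\pdist)$ is attained iff $\symb_k^{\pdist'} = \symb_k^{\pdist}$ for every $k \leq r$. This coincidence of the top-$r$ ordered sequences is equivalent to $\rank{\rthr}(\pdist) = \rank{\rthr}(\pdist')$: if the sequences agree, every symbol with $\rank{}$-value at most $r$ under $\pdist$ has the same rank under $\pdist'$ and all other symbols are mapped to $r+1$ by both functions defined in Eq.~\eqref{def:rankr_fun}; conversely, equality of $\rank{\rthr}$ pins down both the top-$r$ sets and their internal order. Being symmetric in $\pdist$ and $\pdist'$, this condition also characterizes $\NDCG_{\rthr}(\pdist\mid\pdist') = 1$, so combining both directions yields the stated biconditional.

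I expect the main technical obstacle to be the rearrangement-inequality argument pinning down the \emph{unique} minimizer of $\sum_k w_k m_k$, because one must simultaneously handle the possibility that some $m_k$ equal the tie value $r+1$ (corresponding to the top-$r$ sets of $\pdist$ and $\pdist'$ not fully coinciding) and the ordering constraint imposed by the strict monotonicity of the weights. Once this combinatorial lemma is in place, the translation back to $\rank{\rthr}$ and the symmetry argument are routine bookkeeping.
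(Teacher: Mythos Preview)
Your proposal is correct and follows essentially the same approach as the paper: reduce $\ndcg{\rthr}(\pdist,\pdist')=0$ to both $\DCG_{\rthr}$ terms attaining their maximum, then show by a rearrangement/swap argument that the unique maximizing configuration is $m_k=k$, i.e., $\symb_k^{\pdist'}=\symb_k^{\pdist}$ for $k\leq r$. Your treatment of the case where some $m_j=r+1$ (via an explicit substitution that strictly decreases $\sum_k w_k m_k$) is in fact slightly more careful than the paper's ``contributes $0$, so assume a permutation'' step, but the overall strategy is the same.
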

\begin{proof}
Suppose first that $\rank{\rthr}(\pdist')=\rank{\rthr}(\pdist)$. Then $\rank{\rthr}(\pdist)\left(\symb_k^{\delta'}\right)=k$ since $\symb_k^{\pdist'}=\symb_k^{\pdist}$. Therefore by Def.~\ref{eq:NDCG} we have $\NDCG_{\rthr}\left(\pdist'\mid \pdist\right)=1$. Analogously we have $\NDCG_{\rthr}\left(\pdist\mid \pdist'\right)=1$. Thus, by Def.~\ref{def:ndcg}, $\ndcg{\rthr}(\pdist,\pdist')=0$.

Let us prove that $\ndcg{\rthr}(\pdist,\pdist')=0$ implies $\rank{\rthr}(\pdist')=\rank{\rthr}(\pdist)$. This amounts to showing that the function
\begin{align}\label{eq:rankproperty}
\S(\rankseq) &\eqdef \sum_{k=1}^r\frac{r-\rankk{k}+1}{\log _2(k+1)}
\end{align}
defined over all sequences $\rankseq=\left(\rankk{k}\right)_{k=1}^r$ of integers in $\{1,2, \ldots, r, r+1\}$ with no repetitions except (possibly) for $r+1$, has a unique maximum at $\rankk{k}=k$. To prove this claim, first notice that if $\rankk{k}=r+1$ for some $k$, then $r-\rankk{k}+1=0$ and there is no contribution to the sum in Eq.~\ref{eq:rankproperty}. So we can assume that $\rankseq$ is a permutation of $\{1,2, \ldots, r\}$. Suppose that $\rankk{i}>\rankk{j}$ for some indices $i<j$. Since $\log _2(i+1)<\log _2(j+1)$, swapping $\rankk{i}$ and $\rankk{j}$ yields a higher sum. Indeed, if $\rankseq'$ is the sequence with $\rankk{i}$ and $\rankk{j}$ swapped, then
\[
\S(\rankseq')-\S(\rankseq)
=
\left(\rankk{i}-\rankk{j}\right) \left[\frac{1}{\log_2(i+1)}-\frac{1}{\log_2(j+1)}\right]>0.
\]
Therefore the maximum must satisfy $\rankk{i}<\rankk{j}$ if $i<j$. That is $\rankk{k}=k$ for all $k=1,\ldots,r$.
\end{proof}
\section{Language models}\label{sec:lmodels}

A \emph{language model} is a total function $\lmodel: \Words \ra \ProbT$ that maps every string in \Words\ to a probability distribution over \SymbT, where $\lmodel(\woru)(\symb)$ is the probability of \woru\ to be continued by symbol \symb.

Given a language model \lmodel\ and a similarity relation $\Tol$ on $\ProbT$, we define the relation $\wsim^\lmodel_{\Tol} \subseteq \Words\times\Words$ as follows:
\begin{align}\label{def:simil_str}
\woru \wsim^\lmodel_{\Tol} \woru' &\iffdef 
\forall \worw\in\Words.\ \lmodel(\woru\worw) \tsim_{\Tol} \lmodel(\woru'\worw)
\end{align}
Actually, $\wsim^\lmodel_{\Tol}$ is a \emph{tolerance} relation~\cite{tolerance76} on the algebraic structure $(\Words, \Func)$, where $\Func \eqdef \{ \func_\symb:\Words\ra\Words \mid \symb\in\Symb\}$ such that $\func_\symb(\woru) \eqdef \woru \symb$, i.e., $\func_\symb$ appends symbol \symb\ to strings. That is, $\wsim_{\Tol}$ is a reflexive, symmetric and compatible relation:\\

\begin{Proposition}\label{prop:tolerance_wsim}
$\wsim^\lmodel_{\Tol}$ is a \emph{tolerance} relation on $(\Words, \Func)$.
\end{Proposition}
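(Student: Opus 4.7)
The plan is to unpack the definition of a tolerance relation on the algebraic structure $(\Words,\Func)$ and to verify the three required properties (reflexivity, symmetry, and compatibility with each $\func_\symb\in\Func$) one by one, each time reducing the claim to the corresponding property of the underlying similarity $\tsim_\Tol$ on $\ProbT$. Since $\Tol$ was introduced as a reflexive and symmetric relation between distributions, two of the three properties will be essentially immediate, and the only genuinely structural step is compatibility.

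First, for reflexivity I would take an arbitrary $\woru\in\Words$ and note that, for every $\worw\in\Words$, $\lmodel(\woru\worw)\tsim_\Tol\lmodel(\woru\worw)$ by reflexivity of $\tsim_\Tol$; hence $\woru\wsim^\lmodel_\Tol\woru$ by Eq.~\eqref{def:simil_str}. Symmetry is handled analogously: if $\woru\wsim^\lmodel_\Tol\woru'$, then for every $\worw$, $\lmodel(\woru\worw)\tsim_\Tol\lmodel(\woru'\worw)$, and by symmetry of $\tsim_\Tol$ also $\lmodel(\woru'\worw)\tsim_\Tol\lmodel(\woru\worw)$, which yields $\woru'\wsim^\lmodel_\Tol\woru$.

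The key step is compatibility with the operations $\func_\symb\in\Func$. I need to show that $\woru\wsim^\lmodel_\Tol\woru'$ implies $\func_\symb(\woru)\wsim^\lmodel_\Tol\func_\symb(\woru')$, i.e.\ $\woru\symb\wsim^\lmodel_\Tol\woru'\symb$, for every $\symb\in\Symb$. The idea is simply to specialize the universal quantifier over continuations. Assuming $\woru\wsim^\lmodel_\Tol\woru'$, for any $\worw\in\Words$ the string $\symb\worw$ is again in $\Words$, so by hypothesis $\lmodel(\woru(\symb\worw))\tsim_\Tol\lmodel(\woru'(\symb\worw))$. By associativity of concatenation this is exactly $\lmodel((\woru\symb)\worw)\tsim_\Tol\lmodel((\woru'\symb)\worw)$, which is the defining condition of $\woru\symb\wsim^\lmodel_\Tol\woru'\symb$.

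I expect no real obstacle here: the only subtle point is making sure that the quantification over all continuations $\worw$ in the definition of $\wsim^\lmodel_\Tol$ is strong enough to absorb the right-multiplication by $\symb$, which is precisely why the relation is a right congruence (a tolerance here, since $\tsim_\Tol$ need not be transitive). Combining the three verifications yields that $\wsim^\lmodel_\Tol$ is a tolerance on $(\Words,\Func)$.
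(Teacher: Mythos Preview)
Your proposal is correct and follows essentially the same approach as the paper's own proof: both verify reflexivity, symmetry, and compatibility directly from the definition in Eq.~\eqref{def:simil_str}, reducing each to the corresponding property of $\tsim_\Tol$ and using associativity of concatenation for compatibility. The only cosmetic difference is that your reflexivity argument makes the universal quantification over $\worw$ explicit, whereas the paper abbreviates it.
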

\begin{proof}
\begin{description}
    \item[Reflexivity] Let $\woru\in\Words$:
    \begin{align*}
        \lmodel(\woru) \tsim_{\Tol} \lmodel(\woru) 
        &\implies 
        \woru \wsim_{\Tol} \woru
        & \textrm{by reflexivity of } \tsim_{\Tol}
    \end{align*}
    \item[Symmetry] Let $\woru,\woru'\in\Words$: 
    \begin{align*}
        \woru \wsim^\lmodel_{\Tol} \woru'
        &\implies \forall\worw\in\Words.\ 
            \lmodel(\woru\worw) \tsim_{\Tol} \lmodel(\woru'\worw) 
            & \textrm{by Def.\ref{def:simil_str}} \\
        &\implies \forall\worw\in\Words.\ 
            \lmodel(\woru'\worw) \tsim_{\Tol} \lmodel(\woru\worw) 
            & \textrm{by symmetry of } \tsim_{\Tol} \\
        &\implies \woru' \wsim^\lmodel_{\Tol} \woru      
            & \textrm{by Def.\ref{def:simil_str}}
    \end{align*}
    \item[Compatibility] Let $\woru,\woru'\in\Words$ such that $\woru \wsim^\lmodel_{\Tol} \woru'$, and $\symb\in\Symb$. 
    Then, for all $\worw\in\Words$:
    \begin{align*}
        \lmodel((\woru\symb)\worw) 
        &= \lmodel(\woru(\symb\worw)) \\
        &\tsim_{\Tol} \lmodel(\woru'(\symb\worw)) 
            & \textrm{by hypothesis and Def.~\ref{def:simil_str}}      \\
        &= \lmodel((\woru'\symb)\worw)
    \end{align*}
    Hence, $\woru\symb \wsim^\lmodel_{\Tol} \woru'\symb$. \qedhere   
\end{description}
\end{proof}

~\\
The relation $\tsim_{\Tol}$ induces a reflexive and symmetric relation $\wsim_{\Tol}$ between language models as follows:
\begin{align}\label{def:tolerance_lm}
\lmodel_1 \wsim_{\Tol} \lmodel_2 
&\iffdef \forall\woru\in\Words.\ 
\lmodel_1(\woru) \tsim_{\Tol} \lmodel_2(\woru)
\end{align}
However, $\lmodel_1\wsim_{\Tol}\lmodel_2$ does not imply that $\wsim^{\lmodel_1}_{\Tol}$ and $\wsim^{\lmodel_2}_{\Tol}$ are the same tolerance relation over \Words. 
We will illustrate this with an example.\\

\begin{Example}\label{ex:diff_tolerances}
Consider the alphabet $\Symb=\{a\}$ and the language models defined on $\Words$ by
\begin{equation}\label{eq:example_lm_tolerance}
\lmodel_1\left(a^n\right)\eqdef
\begin{cases}
\{a\mapsto 0.4, \terminal\mapsto 0.6\} & \text{if } n\in N_1\\
\{a\mapsto 0.6, \terminal\mapsto 0.4\} & \text{if } n\in N_2 
\end{cases}
\quad
\lmodel_2\left(a^n\right)\eqdef
\{a\mapsto 0.5, \terminal\mapsto 0.5\}\ \forall n\in\Nat,
\end{equation}
where $(N_1,N_2)$ is a partition of $\Nat$.
We consider the similarity relation $\tsim_{(\vd,\tol)}$ on $\Psimplex\left(\SymbT\right)$ with $\tol=0.15$.
Then for any choice of partition $(N_1, N_2)$ we have that $\lmodel_1\wsim_{(\vd,\tol)}\lmodel_2$.
But by choosing appropriately the partition $(N_1,N_2)$ we get different induced tolerance relations on $\Words$.
For instance, let $N_1 = \{n_k\}_k$ be a set with the property that the increments $n_{k+1}-n_k$ are strictly increasing.
Then the words $a^{n_k}$, with $k\geq 1$, are pairwise non related for $\wsim^{\lmodel_1}_{(\vd,\tol)}$.
On the other hand in $\wsim^{\lmodel_2}_{(\vd,\tol)}$ all words are related.
\qedexample
\end{Example}

~\\
Now, given a language model \lmodel\ and an equivalence $\Equ$, we define the relation $\indist_\Equ \subseteq \Words\times\Words$ as follows: 
\begin{align}\label{def:indist_str}
\woru \indist^\lmodel_{\Equ} \woru' &\iffdef 
\forall \worw\in\Words.\ \lmodel(\woru\worw) =_{\Equ} \lmodel(\woru'\worw) 
\end{align}
Indeed, $\indist^\lmodel_{\Equ}$ is a \emph{congruence}, that is, a transitive tolerance relation:\\

\begin{Proposition}\label{prop:congruence}
$\indist^\lmodel_{\Equ}$ is a \emph{congruence} on $(\Words, \Func)$.
\end{Proposition}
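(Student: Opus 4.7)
The plan is to leverage the structural parallel with Proposition~\ref{prop:tolerance_wsim}. Since $=_\Equ$ is an equivalence on $\ProbT$, it is in particular reflexive and symmetric, so the exact same three arguments given there (reflexivity, symmetry, compatibility with each $\func_\symb\in\Func$) transfer verbatim with $\tsim_\Tol$ replaced by $=_\Equ$ and $\wsim^\lmodel_\Tol$ replaced by $\indist^\lmodel_\Equ$. This already establishes that $\indist^\lmodel_\Equ$ is a tolerance on $(\Words,\Func)$. I would either explicitly rewrite those three short derivations, or simply invoke Proposition~\ref{prop:tolerance_wsim} by noting that every equivalence is a similarity.

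The only new content is transitivity, which is where the equivalence hypothesis on $\Equ$ (as opposed to mere similarity) is used. Assume $\woru\indist^\lmodel_\Equ\woru'$ and $\woru'\indist^\lmodel_\Equ\woru''$. Fix an arbitrary $\worw\in\Words$. Unfolding Def.~\ref{def:indist_str} twice yields $\lmodel(\woru\worw)=_\Equ\lmodel(\woru'\worw)$ and $\lmodel(\woru'\worw)=_\Equ\lmodel(\woru''\worw)$. Transitivity of $=_\Equ$ then gives $\lmodel(\woru\worw)=_\Equ\lmodel(\woru''\worw)$. Since $\worw$ was arbitrary, Def.~\ref{def:indist_str} yields $\woru\indist^\lmodel_\Equ\woru''$.

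I do not expect any real obstacle: the argument is a routine bookkeeping exercise that mirrors the proof of Proposition~\ref{prop:tolerance_wsim} and augments it with a one-line appeal to transitivity of $=_\Equ$. The only care needed is notational: the pointwise relation $=_\Equ$ applies to distributions in $\ProbT$, while $\indist^\lmodel_\Equ$ applies to strings in $\Words$, and the quantifier over the continuation $\worw$ must be handled uniformly (pick $\worw$ first, then chain the two equivalences) rather than trying to compose the two string-level relations directly.
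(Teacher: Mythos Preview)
Your proposal is correct and matches the paper's own proof essentially verbatim: the paper simply states that reflexivity, symmetry, and compatibility follow as in Proposition~\ref{prop:tolerance_wsim}, and that transitivity follows from the transitivity of $\Equ$. If anything, your write-up of the transitivity step is more explicit than the paper's one-line remark.
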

\begin{proof}
For reflexivity, symmetry and compatibility the proof is similar to Prop.~\ref{prop:tolerance_wsim}.
Transitivity follows from the transitivity of $\Equ$. \qedhere
\end{proof}

~\\
Equivalence $=_{\Equ}$ also induces an equivalence relation $\indist_{\Equ}$ between language models as follows:
\begin{align}\label{def:indist_lm}
\lmodel_1 \indist_{\Equ} \lmodel_2 
&\iffdef \forall\woru\in\Words.\ 
\lmodel_1(\woru) =_{\Equ} \lmodel_2(\woru) 
\end{align}
Indeed, $\lmodel_1$ and $\lmodel_2$ induce the same congruence over \Words. \\

\begin{Proposition}\label{prop:same_partition}
If $\lmodel_1 \indist_{\Equ} \lmodel_2$, then the congruences $\indist^{\lmodel_1}_{\Equ}$ and $\indist^{\lmodel_2}_{\Equ}$ are the same.
\end{Proposition}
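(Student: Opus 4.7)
The plan is to unfold both definitions and exploit the fact that, unlike a bare tolerance, $=_\Equ$ is transitive. Fix arbitrary $\woru, \woru'\in\Words$; since the hypothesis $\lmodel_1\indist_\Equ\lmodel_2$ is symmetric in $\lmodel_1$ and $\lmodel_2$, it suffices to prove one direction of the biconditional $\woru\indist^{\lmodel_1}_\Equ\woru'\iff\woru\indist^{\lmodel_2}_\Equ\woru'$.

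First I would assume $\woru\indist^{\lmodel_1}_\Equ\woru'$ and fix an arbitrary $\worw\in\Words$. By Def.~\ref{def:indist_str} this yields $\lmodel_1(\woru\worw)=_\Equ\lmodel_1(\woru'\worw)$. Next I would apply the hypothesis $\lmodel_1\indist_\Equ\lmodel_2$ (Def.~\ref{def:indist_lm}) at the two specific strings $\woru\worw$ and $\woru'\worw$, which gives $\lmodel_1(\woru\worw)=_\Equ\lmodel_2(\woru\worw)$ and $\lmodel_1(\woru'\worw)=_\Equ\lmodel_2(\woru'\worw)$.

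Chaining these three instances using symmetry and transitivity of $=_\Equ$ produces $\lmodel_2(\woru\worw)=_\Equ\lmodel_2(\woru'\worw)$. Since $\worw$ was arbitrary, Def.~\ref{def:indist_str} delivers $\woru\indist^{\lmodel_2}_\Equ\woru'$, and the reverse implication is obtained by swapping the roles of $\lmodel_1$ and $\lmodel_2$.

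There is no real obstacle: the entire argument is the routine three-term chain permitted by an equivalence relation. The only conceptual point worth highlighting is that this proof does \emph{not} go through for a mere tolerance $\tsim_\Tol$, precisely because the middle step needs transitivity; this is exactly why the analogous statement for similarities fails, as witnessed by Example~\ref{ex:diff_tolerances}.
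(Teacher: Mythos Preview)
Your proof is correct and essentially identical to the paper's: both fix $\worw$, invoke the hypothesis at $\woru\worw$ and $\woru'\worw$, and chain the three $=_\Equ$-relations via transitivity to conclude. Your added remark that the argument fails for tolerances because of the transitivity step is also exactly the point the paper makes via Example~\ref{ex:diff_tolerances}.
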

\begin{proof}
Let us show that $\woru \indist^{\lmodel_1}_{\Equ} \woru'$ implies $\woru\indist^{\lmodel_2}_{\Equ}\woru'$, the other one follows by symmetry. Indeed, for all $\worw\in\Words$ we have
\begin{align*}
\lmodel_2(\woru\worw) & =_{\Equ} \lmodel_1(\woru\worw) & \text{by hypothesis and Def.~\ref{def:indist_lm}}\\
& =_{\Equ} \lmodel_1(\woru'\worw) & \text{by Def.~\ref{def:indist_str}}\\
& =_{\Equ} \lmodel_2(\woru'\worw) & \text{by hypothesis and Def.~\ref{def:indist_lm}}
\end{align*}
Then by transitivity $\lmodel_2(\woru\worw)=_{\Equ}\lmodel_2(\woru'\worw)$.
\end{proof}

~\\
Given the congruence $\indist^{\lmodel}_{\Equ}$, we write $\ccla{\Words}^\lmodel_{\Equ}$ for the partition of its equivalence classes and $\ccla{\woru}^\lmodel_{\Equ}$ for the class of $\woru\in\Words$. We use the dot $\ccla{\cdot}^\lmodel_\Equ$ for the quotient map $\woru\in\Words\mapsto\ccla{\woru}^\lmodel_\Equ\in\ccla{\Words}^\lmodel_\Equ$.
Notice that $\ccla{\Words}^\lmodel_\Equ$ is \emph{countable}.\\

\begin{Definition}\label{def:quotient_structure}
The congruence $\indist^\lmodel_{\Equ}$ defines a \emph{quotient} structure
\begin{align}\label{def:qlmodel}
\ccla{\lmodel}_\Equ &\eqdef \left(\ccla{\Words}^\lmodel_\Equ,\, \ccla{\emptyW}^\lmodel_\Equ,\, \qFunc,\, \qlmodel\right)
\end{align}
such that 
\begin{subequations}
\begin{align}
\qfunc_\symb\left(\ccla{\woru}^\lmodel_\Equ\right) 
    &\eqdef \ccla{\woru\symb}^\lmodel_{\Equ} \label{def:qfunc} \\
\qlmodel\left(\ccla{\woru}^\lmodel_\Equ\right)
    &\eqdef \ccla{\lmodel(\woru)}_\Equ \label{def:lmodeldot}
\end{align}    
\end{subequations}    
\end{Definition}
We define \qfuncW\ as the extension of \qfunc\ to strings as follows:
\begin{subequations}\label{def:qfuncW}
\begin{align}
\qfuncW(\emptyW)    &= \ccla{\emptyW}^\lmodel_\Equ \label{def:qfuncW_base} \\
\qfuncW(\woru\symb) &= \qfunc_\symb(\qfuncW(\woru)) \label{def:qfuncW_ih}
\end{align}    
\end{subequations}

\begin{Proposition}\label{prop:qfuncW}
For every $\woru\in\Words$, $\qfuncW(\woru) = \ccla{\woru}^\lmodel_\Equ$.
\end{Proposition}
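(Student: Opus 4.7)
The plan is to prove this by a straightforward structural induction on the string $\woru\in\Words$, using the recursive definition of $\qfuncW$ given in Eqs.~\ref{def:qfuncW_base}--\ref{def:qfuncW_ih}, together with the definition of $\qfunc_\symb$ on classes from Eq.~\ref{def:qfunc}.

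For the base case $\woru=\emptyW$, the conclusion $\qfuncW(\emptyW)=\ccla{\emptyW}^\lmodel_\Equ$ is immediate from Eq.~\ref{def:qfuncW_base}, which literally states this equality.

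For the inductive step, I would decompose $\woru=\worv\symb$ with $\worv\in\Words$ and $\symb\in\Symb$, and assume as the induction hypothesis that $\qfuncW(\worv)=\ccla{\worv}^\lmodel_\Equ$. Then by Eq.~\ref{def:qfuncW_ih} we have $\qfuncW(\worv\symb)=\qfunc_\symb(\qfuncW(\worv))$. Substituting the induction hypothesis yields $\qfunc_\symb(\ccla{\worv}^\lmodel_\Equ)$, which by the definition of $\qfunc_\symb$ on classes (Eq.~\ref{def:qfunc}) equals $\ccla{\worv\symb}^\lmodel_\Equ$, as required.

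There is no real obstacle in this proof: the statement is essentially a sanity check that the inductive extension $\qfuncW$ agrees on all strings with the quotient map $\ccla{\cdot}^\lmodel_\Equ$. The only subtle point worth noting is that $\qfunc_\symb$ is well defined on classes, which follows from the compatibility part of Proposition~\ref{prop:congruence}; since this has already been established, the induction goes through cleanly.
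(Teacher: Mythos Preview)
Your proof is correct and matches the paper's own argument almost verbatim: both proceed by induction on the string, invoking Eq.~\ref{def:qfuncW_base} for the base case and then chaining Eq.~\ref{def:qfuncW_ih}, the induction hypothesis, and Eq.~\ref{def:qfunc} for the inductive step. Your remark about the well-definedness of $\qfunc_\symb$ via Proposition~\ref{prop:congruence} is a nice clarification but is implicit in the paper as well.
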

\begin{proof}
By induction on the length of \woru.  
\begin{description}
    \item[Base case]
    By Def.~\ref{def:qfuncW_base}, $\qfuncW(\emptyW) = \ccla{\emptyW}^\lmodel_\Equ$.
    \item[Inductive step] 
    \begin{align*}
        \qfuncW(\woru\symb) 
            &= \qfunc_\symb(\qfuncW(\woru)) 
                &\text{by definition of \qfuncW}\\
            &= \qfunc_\symb(\ccla{\woru}^\lmodel_\Equ) 
                &\text{by I.H.}\\
            &= \ccla{\woru\symb}^\lmodel_\Equ
                &\text{by Def.~\ref{def:qfunc}} &\qedhere
    \end{align*}
\end{description}
\end{proof}

\begin{Corollary}\label{cor:algebraic_quotient}
$\lmodel_1 \indist_{\Equ} \lmodel_2$ if and only if
$\ccla{\lmodel_1}_\Equ = \ccla{\lmodel_2}_\Equ$.
\end{Corollary}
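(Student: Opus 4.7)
The plan is to prove the two implications separately, exploiting that the equality $\ccla{\lmodel_1}_\Equ = \ccla{\lmodel_2}_\Equ$ is equality of 4-tuples, hence a conjunction of equalities of each component, most crucially of the state sets $\ccla{\Words}^{\lmodel_1}_\Equ$, $\ccla{\Words}^{\lmodel_2}_\Equ$ and of the output maps $\qlmodel^{\lmodel_1}$, $\qlmodel^{\lmodel_2}$.

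For the forward direction I would assume $\lmodel_1 \indist_\Equ \lmodel_2$. By Proposition~\ref{prop:same_partition} the induced congruences $\indist^{\lmodel_1}_\Equ$ and $\indist^{\lmodel_2}_\Equ$ coincide, so the partitions $\ccla{\Words}^{\lmodel_1}_\Equ$ and $\ccla{\Words}^{\lmodel_2}_\Equ$ are equal as sets of blocks, and in particular $\ccla{\emptyW}^{\lmodel_1}_\Equ = \ccla{\emptyW}^{\lmodel_2}_\Equ$. The transition families $\qFunc$ agree because each $\qfunc_\symb$ is defined (Def.~\ref{def:qfunc}) purely in terms of the common partition via $\ccla{\woru}_\Equ \mapsto \ccla{\woru\symb}_\Equ$. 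For the output map, fix $\woru \in \Words$; from $\lmodel_1(\woru) =_\Equ \lmodel_2(\woru)$ we get $\ccla{\lmodel_1(\woru)}_\Equ = \ccla{\lmodel_2(\woru)}_\Equ$, which by Def.~\ref{def:lmodeldot} is exactly $\qlmodel^{\lmodel_1}(\ccla{\woru}^{\lmodel_1}_\Equ) = \qlmodel^{\lmodel_2}(\ccla{\woru}^{\lmodel_2}_\Equ)$. Hence all four components of the tuple match.

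For the backward direction I would assume $\ccla{\lmodel_1}_\Equ = \ccla{\lmodel_2}_\Equ$ and extract, componentwise, the equality of state sets $\ccla{\Words}^{\lmodel_1}_\Equ = \ccla{\Words}^{\lmodel_2}_\Equ$ and of output maps. Because partitions of the same set that agree as families of blocks assign each element to the same (unique) block, for every $\woru \in \Words$ one has $\ccla{\woru}^{\lmodel_1}_\Equ = \ccla{\woru}^{\lmodel_2}_\Equ$. Equality of the output components then yields $\ccla{\lmodel_1(\woru)}_\Equ = \ccla{\lmodel_2(\woru)}_\Equ$, so $\lmodel_1(\woru) =_\Equ \lmodel_2(\woru)$ for every $\woru$, which by Def.~\ref{def:indist_lm} is $\lmodel_1 \indist_\Equ \lmodel_2$.

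The only subtle point, which I would flag as the main obstacle, is the passage from equality of partitions viewed as sets of blocks to equality of the block containing a given $\woru$; I would resolve it by the elementary observation that the block of $\woru$ is uniquely determined by $\woru$ together with the partition. Everything else is routine unfolding of Definitions~\ref{def:quotient_structure}, \ref{def:qfunc}, \ref{def:lmodeldot} together with Proposition~\ref{prop:same_partition}.
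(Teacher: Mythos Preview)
Your proposal is correct and follows essentially the same approach as the paper: both directions are handled by unpacking the 4-tuple componentwise, invoking Proposition~\ref{prop:same_partition} to identify the underlying partitions in the forward direction, and reading off $\ccla{\lmodel_1(\woru)}_\Equ = \ccla{\lmodel_2(\woru)}_\Equ$ from the equality of output maps in the backward direction. Your explicit remark about recovering $\ccla{\woru}^{\lmodel_1}_\Equ = \ccla{\woru}^{\lmodel_2}_\Equ$ from equality of partitions is a point the paper leaves implicit, but otherwise the arguments are the same.
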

\begin{proof}
\begin{description}
\item[$\implies$] 
Suppose $\lmodel_1 \indist_{\Equ} \lmodel_2$.
Proposition~\ref{prop:same_partition} implies that
$\ccla{\Words}^{\lmodel_1}_{\Equ}$ = $\ccla{\Words}^{\lmodel_2}_{\Equ}$ 
and also that $\ccla{\emptyW}^{\lmodel_1}_{\Equ}$ = $\ccla{\emptyW}^{\lmodel_2}_{\Equ}$.
We show, $\qFunc_1 = \qFunc_2$:
\begin{align*}
\qfunc^1_\symb\left(\ccla{\woru}^{\lmodel_1}_\Equ\right) 
    &= \ccla{\woru\symb}^{\lmodel_1}_{\Equ}   
        &\text{by Def.~\ref{def:qfunc}} \\
    &= \ccla{\woru\symb}^{\lmodel_2}_{\Equ}   
        &\text{by Proposition~\ref{prop:same_partition}}
             \\
    &= \qfunc^2_\symb\left(\ccla{\woru}^{\lmodel_2}_\Equ\right) 
        &\text{by Def.~\ref{def:qfunc}} \\
    &= \qfunc^2_\symb\left(\ccla{\woru}^{\lmodel_1}_\Equ\right) 
        &\text{by Proposition~\ref{prop:same_partition}}
\end{align*}
$\qlmodel_1 = \qlmodel_2$ follows analogously:
\begin{align*}
\qlmodel_1\left(\ccla{\woru}^{\lmodel_1}_\Equ\right) 
    &= \ccla{\lmodel_1(\woru)}_{\Equ}   
        &\text{by Def.~\ref{def:lmodeldot}} \\
    &= \ccla{\lmodel_2(\woru)}_{\Equ}   
        &\text{by Def.~\ref{def:indist_lm}} \\
    &= \qlmodel_2\left(\ccla{\woru}^{\lmodel_2}_\Equ\right) 
        &\text{by Def.~\ref{def:lmodeldot}} \\
    &= \qlmodel_2\left(\ccla{\woru}^{\lmodel_1}_\Equ\right) 
        &\text{by Proposition~\ref{prop:same_partition}}
\end{align*}
Hence, $\ccla{\lmodel_1}_\Equ = \ccla{\lmodel_2}_\Equ$.
\item[$\impliedby$] 
Suppose $\ccla{\lmodel_1}_\Equ = \ccla{\lmodel_2}_\Equ$. 
Then, for all $\woru\in\Words$:
\begin{align*}
    \ccla{\lmodel_1\left(\woru\right)}_\Equ
        &= \qlmodel_1\left(\ccla{\woru}^{\lmodel_1}_\Equ\right) 
            &\text{by Def.~\ref{def:lmodeldot}} \\
        &= \qlmodel_2\left(\ccla{\woru}^{\lmodel_2}_\Equ\right) 
            &\text{by hypothesis} \\
        &= \ccla{\lmodel_2(\woru)}_\Equ
            &\text{by Def.~\ref{def:lmodeldot}}
\end{align*}
Hence, $\lmodel_1 \indist_{\Equ} \lmodel_2$.\qedhere
\end{description}
\end{proof}

~\\
We end this section by defining the concept of \Equ-\emph{regularity}.\\

\begin{Definition}\label{def:regular}
Given an equivalence \Equ, a language model \lmodel\ is \Equ-regular if $\ccla{\lmodel}_\Equ$ is finite.
\end{Definition}

\section{Probabilistic Deterministic Finite Automata}\label{sec:pdfa}

A \emph{probabilistic deterministic finite automaton} (PDFA)~\cite{IEEE:Vidal2005} over \Symb, denoted \Aut, is a tuple $(\Sta, \staI, \policy, \tra)$, where:
\begin{itemize}
    \item \Sta\ is a finite set of states,
    \item $\staI \in \Sta$ is an initial state, 
    \item $\policy : \Sta \ra \ProbT$ maps each state to a probability distribution over \SymbT, and 
    \item $\tra : \Sta \times \Symb \ra \Sta$ is the transition function. 
\end{itemize}
Both \policy\ and \tra\ are total functions.
We define $\traW$ to be the extension of \tra\ to \Words:
\begin{subequations}\label{def:traW}
\begin{align}
& \traW(\sta, \emptyW) \eqdef \sta 
\label{def:tau_star_a} \\
& \traW(\sta, \symb \woru)  \eqdef \traW( \tra(\sta, \symb), \woru ) \label{def:tau_star_b}
\end{align}
\end{subequations}
and $\policyW$ to be the extension of \policy\ to \Words:
\begin{align}\label{def:policyW}
\policyW(\sta, \woru) &\eqdef \policy( \traW( \sta, \woru) )       
\end{align}
When the state is \staI, we simply write $\traW(\woru)$ and $\policyW(\woru)$.
Without loss of generality, we assume that every state $\sta\in\Sta$ is \emph{reachable}, that is, $\sta = \traW(\woru)$ for some string $\woru\in\Words$. Any such \woru\ is called an \emph{access} string of \sta.

\subsection{Congruences defined by PDFA}

A PDFA \Aut\ defines the language model such that:
\begin{align}\label{def:Aut}
\lmodel_\Aut(\woru) &\eqdef \policyW(\woru)   
\end{align}
Now, Definition~\ref{def:simil_str} can be rephrased over \Sta:
\begin{align}\label{def:simil_sta}
\sta \wsim^\Aut_{\Tol} \sta' 
&\iffdef \forall \worw\in\Words.\ 
\policyW(\sta,\worw) \tsim_{\Tol} \policyW(\sta',\worw)
\end{align}
and similarly for Definition~\ref{def:indist_str}:
\begin{align}\label{def:indist_sta}
\sta \indist^\Aut_{\Equ} \sta' 
&\iffdef \forall \worw\in\Words.\ 
\policyW(\sta,\worw) =_{\Equ} \policyW(\sta',\worw)
\end{align}
This implies the following relationship between states and strings:\\

\begin{Proposition}\label{prop:t_suff_str_traW}
$\forall \woru, \woru' \in \Words.\ 
\woru \wsim^{\lmodel_\Aut}_{\Tol} \woru' \iff \traW(\woru) \wsim^\Aut_{\Tol} \traW(\woru')$.
\end{Proposition}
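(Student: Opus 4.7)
The plan is to reduce the claim to the standard identity $\traW(\sta,\woru\worw)=\traW(\traW(\sta,\woru),\worw)$ for PDFA transition extensions, and then unfold the definitions of $\wsim^{\lmodel_\Aut}_{\Tol}$ and $\wsim^{\Aut}_{\Tol}$ side by side.

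First, I would prove the auxiliary lemma that, for all $\sta\in\Sta$ and $\woru,\worw\in\Words$,
\[
\traW(\sta,\woru\worw)=\traW(\traW(\sta,\woru),\worw).
\]
This is a routine induction on $\woru$, using the recursive Def.~\ref{def:traW}: the base case $\woru=\emptyW$ follows directly from Eq.~\ref{def:tau_star_a}, and the step case $\woru=\symb\woru_0$ unfolds via Eq.~\ref{def:tau_star_b} and the inductive hypothesis. As an immediate consequence, together with Eq.~\ref{def:policyW} and Eq.~\ref{def:Aut}, we obtain for every $\worw\in\Words$ the identity
\[
\lmodel_\Aut(\woru\worw)=\policyW(\woru\worw)=\policy(\traW(\woru\worw))=\policy(\traW(\traW(\woru),\worw))=\policyW(\traW(\woru),\worw),
\]
and analogously $\lmodel_\Aut(\woru'\worw)=\policyW(\traW(\woru'),\worw)$.

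With this rewriting in hand, both directions of the equivalence collapse to the same statement. For $(\Rightarrow)$, assume $\woru\wsim^{\lmodel_\Aut}_{\Tol}\woru'$. By Def.~\ref{def:simil_str}, for every $\worw\in\Words$ we have $\lmodel_\Aut(\woru\worw)\tsim_{\Tol}\lmodel_\Aut(\woru'\worw)$; rewriting both sides using the identity above yields $\policyW(\traW(\woru),\worw)\tsim_{\Tol}\policyW(\traW(\woru'),\worw)$, which by Def.~\ref{def:simil_sta} means exactly $\traW(\woru)\wsim^{\Aut}_{\Tol}\traW(\woru')$. The converse $(\Leftarrow)$ is obtained by reading the same chain backwards: the rewriting identity is an equality, so applying $\tsim_{\Tol}$-hypothesis at $\worw$ on the PDFA side is literally the same statement as the language-model-side condition at $\worw$.

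The only non-mechanical point is the auxiliary identity $\traW(\sta,\woru\worw)=\traW(\traW(\sta,\woru),\worw)$; everything else is unfolding definitions. This is the expected main (very mild) obstacle, since once this identity is in place the two sides of the biconditional are, quantifier by quantifier over $\worw$, the same relation $\tsim_\Tol$ applied to the same pair of distributions.
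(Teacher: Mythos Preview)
Your proof is correct and follows essentially the same approach as the paper's: both unfold the definitions of $\wsim^{\lmodel_\Aut}_{\Tol}$ and $\wsim^{\Aut}_{\Tol}$ and bridge them via the identity $\traW(\woru\worw)=\traW(\traW(\woru),\worw)$, which the paper invokes directly ``by Def.~\ref{def:traW}'' while you spell out the routine induction. The only difference is presentational---you separate the two implications, whereas the paper writes a single chain of $\iff$'s.
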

\begin{proof}
Let $\woru, \woru' \in \Words$:
\begin{align*}
    \woru \wsim^{\lmodel_\Aut}_{\Tol} \woru' 
    &\iff \forall \worw\in\Words.\ 
    \lmodel_\Aut(\woru\worw) \tsim_{\Tol} \lmodel_\Aut(\woru'\worw)                           & \text{by Def.~\ref{def:simil_str}}      \\
    &\iff \forall \worw\in\Words.\ 
    \policyW(\woru\worw) \tsim_{\Tol} \policyW(\woru'\worw)                   & \text{by Def.~\ref{def:Aut}}  \\
    &\iff \forall \worw\in\Words.\ 
    \policy(\traW(\woru\worw)) \tsim_{\Tol} \policy(\traW(\woru'\worw))      & \text{by Def.~\ref{def:policyW}}              \\
    &\iff \forall \worw\in\Words.\ 
    \policy(\traW(\traW(\woru), \worw)) \tsim_{\Tol} \policy(\traW(\traW(\woru'), \worw)) & \text{by Def.~\ref{def:traW}}     \\
    &\iff \forall \worw\in\Words.\ 
    \policyW(\traW(\woru), \worw) \tsim_{\Tol} \policyW(\traW(\woru'), \worw)             & \text{by Def.~\ref{def:policyW}}  \\
    &\iff \traW(\woru) \wsim^\Aut_{\Tol} \traW(\woru')                             & \text{by Def.~\ref{def:simil_sta}}       
    &\qedhere
\end{align*}
\end{proof}
\begin{Proposition}\label{prop:s_suff_str_traW}
$\forall \woru, \woru' \in \Words.\ 
\woru \indist^{\lmodel_\Aut}_{\Equ} \woru' \iff \traW(\woru) \indist^\Aut_{\Equ} \traW(\woru')$.
\end{Proposition}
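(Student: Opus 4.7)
The plan is to mirror the proof of Proposition~\ref{prop:t_suff_str_traW}, replacing the tolerance $\tsim_\Tol$ by the equivalence $=_\Equ$ (and correspondingly $\wsim$ by $\indist$) throughout. Since $=_\Equ$ is used purely symbolically in the chain of rewrites there, transitivity of $\Equ$ is not even needed for this statement: the result holds because of the definitions of $\lmodel_\Aut$, $\policyW$, and $\traW$, together with how the relations on \Words\ and on \Sta\ are defined by a universal quantifier over the continuation \worw.

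Concretely, I would fix $\woru, \woru' \in \Words$ and proceed as a single chain of equivalences, parameterized by $\worw \in \Words$. First, unfold $\woru \indist^{\lmodel_\Aut}_\Equ \woru'$ using Definition~\ref{def:indist_str} to get $\lmodel_\Aut(\woru\worw) =_\Equ \lmodel_\Aut(\woru'\worw)$ for all \worw. Then apply Definition~\ref{def:Aut} to rewrite each side as $\policyW(\woru\worw)$ and $\policyW(\woru'\worw)$. Next, expand $\policyW$ by Definition~\ref{def:policyW} into $\policy(\traW(\woru\worw)) =_\Equ \policy(\traW(\woru'\worw))$, and use the recursive clause (\ref{def:tau_star_b}) of Definition~\ref{def:traW} to factor $\traW(\woru\worw) = \traW(\traW(\woru),\worw)$ and likewise for $\woru'$. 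Refolding by Definition~\ref{def:policyW} yields $\policyW(\traW(\woru),\worw) =_\Equ \policyW(\traW(\woru'),\worw)$ for all \worw, which is exactly $\traW(\woru) \indist^\Aut_\Equ \traW(\woru')$ by Definition~\ref{def:indist_sta}.

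Every step in the chain is a biconditional (pure unfolding of a definition, or the associativity-style identity (\ref{def:tau_star_b}) applied inside the quantifier), so both directions of the \iff\ are obtained simultaneously; no separate argument is needed for $\impliedby$. I do not expect any real obstacle: the only thing to be careful about is that the substitution $\traW(\woru\worw) = \traW(\traW(\woru),\worw)$ relies on iterating clause (\ref{def:tau_star_b}) along \woru, which is a straightforward induction on $|\woru|$ that is implicit already in Proposition~\ref{prop:t_suff_str_traW}. Thus the proof can safely be presented as a single displayed \texttt{align*} chain, exactly in the style used for Proposition~\ref{prop:t_suff_str_traW}.
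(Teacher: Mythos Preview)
Your proposal is correct and is exactly the approach the paper takes: its own proof of this proposition simply reads ``Analogous to Proposition~\ref{prop:t_suff_str_traW},'' and what you have written is precisely that analogy spelled out in full.
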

\begin{proof}
Analogous to Proposition~\ref{prop:t_suff_str_traW}. 
\qedhere
\end{proof}
\begin{Proposition}\label{prop:congruence_indist}
1) $\wsim^\Aut_{\Tol}$ is a tolerance over $(\Sta,\tra)$.  
2) $\indist^\Aut_{\Equ}$ is a congruence over $(\Sta,\tra)$.
\end{Proposition}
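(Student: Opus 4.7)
The plan is to prove both parts by direct verification of the defining properties of a tolerance and a congruence on $(\Sta, \tra)$, in complete analogy with Propositions~\ref{prop:tolerance_wsim} and~\ref{prop:congruence}. The only thing to check beyond what is already done there is that the unary operation $\tra(\cdot,\symb)$ on $\Sta$ plays exactly the role that $\func_\symb$ plays for strings, which is captured by a simple suffix-shift identity derived from the definitions of $\traW$ and $\policyW$.

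For part 1, reflexivity and symmetry of $\wsim^\Aut_{\Tol}$ follow immediately from reflexivity and symmetry of $\tsim_{\Tol}$, applied pointwise under the universal quantifier over $\worw\in\Words$ in Definition~\ref{def:simil_sta}. For compatibility I would fix $\sta\wsim^\Aut_{\Tol}\sta'$ and $\symb\in\Symb$, and rely on the identity
\[
\policyW(\tra(\sta,\symb),\worw) \;=\; \policy(\traW(\tra(\sta,\symb),\worw)) \;=\; \policy(\traW(\sta,\symb\worw)) \;=\; \policyW(\sta,\symb\worw),
\]
which is immediate from Definitions~\ref{def:traW} and~\ref{def:policyW}. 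Instantiating the hypothesis $\sta\wsim^\Aut_{\Tol}\sta'$ at the suffix $\symb\worw$ and applying the same identity to $\sta'$ yields $\policyW(\tra(\sta,\symb),\worw)\tsim_{\Tol}\policyW(\tra(\sta',\symb),\worw)$ for every $\worw\in\Words$, which is exactly $\tra(\sta,\symb)\wsim^\Aut_{\Tol}\tra(\sta',\symb)$.

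Part 2 is obtained by the very same argument after replacing $\tsim_{\Tol}$ with $=_{\Equ}$; the only additional property to establish is transitivity of $\indist^\Aut_{\Equ}$, which is inherited pointwise from the transitivity of $=_{\Equ}$. Alternatively, both parts can be derived from Propositions~\ref{prop:t_suff_str_traW} and~\ref{prop:s_suff_str_traW} combined with Propositions~\ref{prop:tolerance_wsim} and~\ref{prop:congruence}, using that every state is accessed by some string, so that $\traW$ is surjective and conjugates $\func_\symb$ on $\Words$ with $\tra(\cdot,\symb)$ on $\Sta$. I do not anticipate any genuine obstacle: once the suffix-shift identity is isolated, the verification is mechanical and mirrors the string-level proofs already presented.
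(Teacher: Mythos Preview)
Your proposal is correct, and your ``alternatively'' clause is precisely the paper's proof: it simply cites Propositions~\ref{prop:tolerance_wsim}, \ref{prop:congruence}, \ref{prop:t_suff_str_traW}, and~\ref{prop:s_suff_str_traW} and leaves the transfer via the surjection $\traW$ implicit. Your primary direct verification via the suffix-shift identity is a slightly more explicit unfolding of the same argument, so there is no genuine difference in approach.
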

\begin{proof}
Follows from Propositions~\ref{prop:tolerance_wsim}, \ref{prop:congruence}, \ref{prop:t_suff_str_traW}, and \ref{prop:s_suff_str_traW}.
\qedhere
\end{proof}

\subsection{Quotient PDFA}

Given an equivalence \Equ, we define a \emph{quotient PDFA} over $\Words$ as a tuple
\begin{align}\label{def:qpdfa}
\qpdfa &\eqdef \left(\bSta,\bstaI,\btra,\bpolicy\right)   
\end{align} 
where as in the case of PDFAs, $\bSta$ is a finite set of states, $\bstaI\in\bSta$ is an initial state, $\btra:\bSta\times\Symb\to\bSta$ is a transition function, and with the sole difference that the map $\bpolicy:\bSta\to\ccla{\ProbT}_\Equ$ associates an $\Equ$-equivalence class of probability distributions over $\SymbT$.
The extensions $\btraW$ and $\bpolicyW$ to $\Words$ are defined in an analogous way as in Definitions \ref{def:traW} and \ref{def:policyW}.

A PDFA $\Aut = \left(\Sta,\staI,\tra,\policy\right)$ is a \emph{realization} of the quotient PDFA \qpdfa\ if
\begin{align}\label{def:realization}
\Sta = \qSta, &\qquad 
\staI = \qstaI, \qquad 
\tra = \qtra, \qquad 
\forall\sta\in\Sta.\ 
\ccla{\policy\left(\sta\right)}_\Equ = \qpolicy\left(\sta\right)
\end{align}
%

Conversely, given a PDFA \Aut\ we can define its quotient PDFA $\ccla{\Aut}_\Equ$ as follows.
We denote $\qSta$ the set of equivalence classes $\ccla{\Sta}^\Aut_\Equ$ for the congruence defined in Def.~\ref{def:simil_sta}, and $q\in\Sta\mapsto\ccla{q}^\Aut_\Equ\in \qSta$ the associated quotient map. The transition function $\qtra$ is such that for all $\sta\in\Sta$ and $\symb\in\Symb$:
\begin{align}\label{def:quotient_Q}
\qtra\left(\ccla{\sta}^\Aut_\Equ, \symb\right) \eqdef \ccla{\tra(\sta,\symb)}^\Aut_\Equ 
\end{align}
which is well defined by Proposition~\ref{prop:congruence_indist}-2).
From Definition~\ref{def:indist_sta}, the composition $\ccla{\policy(\cdot)}_\Equ$ showed on the diagram in Fig.~\ref{fig:qpolicy}~(left) is constant on the equivalence classes $\qsta\in\qSta$, and therefore it factors through the quotient $\qSta$ giving the commutative diagram on the right.
\begin{figure}[h]
\centering
\begin{tikzcd}
\Sta \arrow[r, "\policy"] \arrow[dr, "\ccla{\policy(\cdot)}_\Equ"'] & \ProbT \arrow[d, "\ccla{\cdot}_\Equ"] \\
& \ccla{\ProbT}_\Equ
\end{tikzcd}
\qquad
\begin{tikzcd}
\Sta \arrow[r, "\policy"] \arrow[d, "\ccla{\cdot}^\Aut_\Equ"'] \arrow[dr, dotted] & \ProbT \arrow[d, "\ccla{\cdot}_\Equ"] \\
\qSta \arrow[r, "\qpolicy"] & \ccla{\ProbT}_\Equ
\end{tikzcd}
\caption{Definition of $\qpolicy$.\label{fig:qpolicy}}
\end{figure}

Summarizing, the quotient PDFA of \Aut\ is then:
\begin{align}\label{def:quotient_pdfa}
\ccla{\Aut}_\Equ &\eqdef ( \qSta, \qstaI, \qpolicy, \qtra )
\end{align}
where:
\begin{itemize}
    \item $\qSta\eqdef\ccla{\Sta}^\Aut_\Equ$, 
    \item $\qtra$ is given by Definition~\ref{def:quotient_Q}, 
    \item $\qstaI \eqdef \ccla{\staI}^\Aut_\Equ$, and
    \item $\qpolicy:\qSta\to \ccla{\ProbT}_\Equ$ is uniquely defined by $\qpolicy\left(\ccla{\sta}^\Aut_\Equ\right)=\ccla{\policy(\sta)}_\Equ$.
\end{itemize}

\begin{Proposition}\label{prop:qbijection}
The transition function $\traW:\Words \to \Sta$ induces an isomorphism between the quotients $\ccla{\lmodel_\Aut}_\Equ$ and $\ccla{\Aut}_\Equ$.
\end{Proposition}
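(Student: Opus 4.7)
The plan is to show that $\traW$ descends to a well-defined bijection $\widehat{\traW}:\ccla{\Words}^{\lmodel_\Aut}_\Equ \to \qSta$ and then check that this bijection intertwines the initial states, the transition maps, and the labeling maps of the two quotient structures $\ccla{\lmodel_\Aut}_\Equ$ and $\ccla{\Aut}_\Equ$.

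First, I define $\widehat{\traW}\!\left(\ccla{\woru}^{\lmodel_\Aut}_\Equ\right)\eqdef\ccla{\traW(\woru)}^\Aut_\Equ$. Well-definedness is immediate from the forward direction of Proposition~\ref{prop:s_suff_str_traW}: if $\woru\indist^{\lmodel_\Aut}_\Equ\woru'$, then $\traW(\woru)\indist^\Aut_\Equ\traW(\woru')$, so the two states lie in the same class. Injectivity of $\widehat{\traW}$ comes from the reverse direction of the same proposition, while surjectivity follows from the standing assumption that every $\sta\in\Sta$ is reachable, i.e.\ $\sta=\traW(\woru)$ for some $\woru\in\Words$, giving $\ccla{\sta}^\Aut_\Equ=\widehat{\traW}\!\left(\ccla{\woru}^{\lmodel_\Aut}_\Equ\right)$.

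Next, I verify that $\widehat{\traW}$ is a morphism of the two structures. For the initial states, Definition~\ref{def:tau_star_a} gives $\traW(\emptyW)=\staI$, so $\widehat{\traW}\!\left(\ccla{\emptyW}^{\lmodel_\Aut}_\Equ\right)=\ccla{\staI}^\Aut_\Equ=\qstaI$. For transitions, using Definitions~\ref{def:qfunc}, \ref{def:tau_star_b} and~\ref{def:quotient_Q} in sequence, I get
\[
\widehat{\traW}\!\left(\qfunc_\symb\!\left(\ccla{\woru}^{\lmodel_\Aut}_\Equ\right)\right)
=\ccla{\traW(\woru\symb)}^\Aut_\Equ
=\ccla{\tra(\traW(\woru),\symb)}^\Aut_\Equ
=\qtra\!\left(\widehat{\traW}\!\left(\ccla{\woru}^{\lmodel_\Aut}_\Equ\right),\symb\right).
\]
For the labeling, Definitions~\ref{def:lmodeldot}, \ref{def:Aut}, \ref{def:policyW}, together with the definition of $\qpolicy$ via Fig.~\ref{fig:qpolicy}, yield
\[
\qlmodel\!\left(\ccla{\woru}^{\lmodel_\Aut}_\Equ\right)
=\ccla{\lmodel_\Aut(\woru)}_\Equ
=\ccla{\policy(\traW(\woru))}_\Equ
=\qpolicy\!\left(\widehat{\traW}\!\left(\ccla{\woru}^{\lmodel_\Aut}_\Equ\right)\right).
\]

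The only subtle point is well-definedness and injectivity of $\widehat{\traW}$, but these are packaged exactly in Proposition~\ref{prop:s_suff_str_traW}, so the main effort is bookkeeping the four structural components. No separate argument is needed for finiteness of $\ccla{\Words}^{\lmodel_\Aut}_\Equ$, since the bijection with the finite set $\qSta$ gives it for free as a by-product.
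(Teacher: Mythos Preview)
Your proof is correct and follows essentially the same approach as the paper: both define the induced map on quotients by $\ccla{\woru}^{\lmodel_\Aut}_\Equ\mapsto\ccla{\traW(\woru)}^\Aut_\Equ$, obtain well-definedness and injectivity from the two directions of Proposition~\ref{prop:s_suff_str_traW}, surjectivity from reachability, and then verify compatibility with initial state, transitions, and labeling. The only differences are cosmetic (the paper names the map $\beta$ and orders the structural checks slightly differently).
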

\begin{proof}
Consider the diagram showed on Fig.~\ref{fig:strtosta}~(left). 
\begin{figure}[h]
\centering
\begin{tikzcd}
\Words \arrow[r, "\traW"] \arrow[dr, "\ccla{\traW(\cdot)}^\Aut_\Equ"'] & \Sta \arrow[d, "\ccla{\cdot}^\Aut_\Equ"] \\
& \qSta
\end{tikzcd}
\qquad
\begin{tikzcd}
\Words \arrow[r, "\traW"] \arrow[d, "\ccla{\cdot}^{\lmodel_\Aut}_\Equ"'] \arrow[dr, dotted] & \Sta \arrow[d, "\ccla{\cdot}^\Aut_\Equ"] \\
\ccla{\Words}^{\lmodel_\Aut}_\Equ \arrow[r, "\beta"] & \qSta
\end{tikzcd}
\caption{Definition of $\strtosta$.\label{fig:strtosta}}
\end{figure}
By the ``only if'' implication of Proposition~\ref{prop:s_suff_str_traW}, the composition $\ccla{\traW(\cdot)}^\Aut_\Equ$ is constant on the equivalence classes of $\indist^\Aut_{\Equ}$. Therefore this composition factors through the quotient $\ccla{\Words}^{\lmodel_\Aut}_\Equ$ giving a map \strtosta\ showed on the above diagram on the right. By construction \strtosta\ maps the class of a string to the class of a state:
\begin{align}\label{def:strtosta}
\strtosta\left(\ccla{u}^{\lmodel_\Aut}_\Equ\right) &\eqdef \ccla{\traW(\woru)}^\Aut_\Equ     
\end{align}
By the ``if'' implication of Proposition~\ref{prop:s_suff_str_traW}, the map \strtosta\ is injective. Since every state in $\Sta$ is reachable, \strtosta\ is surjective.

Let us show that \strtosta\ preserves the algebraic operations. For all $\woru\in\Words$ and $\symb\in\Symb$ we have:
\begin{align*}
\qtra\left(\strtosta\left(\ccla{\woru}^{\lmodel_\Aut}_\Equ\right), \symb\right) & = \qtra\left(\ccla{\traW(\woru)}^\Aut_\Equ,\symb\right) & \text{by Def.~\ref{def:strtosta} }\\
& = \ccla{\tra\left(\traW(\woru),\symb\right)}^\Aut_\Equ & \text{by Def.~\ref{def:quotient_Q}}\\
& = \ccla{\traW(\woru\symb)}^\Aut_\Equ & \text{by Def.~\ref{def:tau_star_b}}\\
& = \strtosta\left(\ccla{\woru\symb}^{\lmodel_\Aut}_\Equ\right)& \text{by Def.~\ref{def:strtosta}}\\
& = \strtosta\left(\qfunc_\symb\left(\ccla{\woru}^{\lmodel_\Aut}_\Equ\right)\right) & \text{by Def.~\ref{def:qfunc}}
\end{align*}
Also $\strtosta$ maps the class of the empty word to the initial state:
\begin{align*}
\strtosta\left(\ccla{\emptyW}^{\lmodel_\Aut}_\Equ\right) & = \ccla{\traW(\emptyW)}^\Aut_\Equ & \text{by Def.~\ref{def:strtosta}}\\
& = \ccla{\staI}^\Aut_\Equ & \text{by Def.~\ref{def:tau_star_a}}\\
& = \qstaI & \text{by Def.~\ref{def:quotient_pdfa}}
\end{align*}
Finally, 
\begin{align*}
\qpolicy\left(\strtosta\left(\ccla{\woru}^{\lmodel_\Aut}_\Equ\right)\right) & = \qpolicy\left(\ccla{\traW(\woru)}^\Aut_\Equ\right) & \text{by Def.~\ref{def:strtosta}}\\
& = \ccla{\policy\left(\traW(\woru)\right)}_\Equ & \text{by Fig.~\ref{fig:qpolicy}~(right)}\\
& = \ccla{\policyW(u)}_\Equ & \text{by Def.~\ref{def:policyW}}\\
& = \ccla{\lmodel_\Aut(u)}_\Equ & \text{by Def.~\ref{def:Aut}}\\
& = \qlmodel_\Aut\left(\ccla{u}^{\lmodel_\Aut}_\Equ\right) & \text{by Def.~\ref{def:lmodeldot}}
\end{align*}
This shows that \strtosta\ is an isomorphism.
\end{proof}

\begin{Corollary}\label{cor:qtraW}
For all $\woru\in\Words$ we have $\qtraW(\woru)=\strtosta\left(\qfuncW(\woru)\right)$.
\end{Corollary}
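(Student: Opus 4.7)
The plan is to establish the identity by induction on the length of $\woru$, exploiting the fact that the proof of Proposition~\ref{prop:qbijection} has already verified the two structural equations that make $\strtosta$ a homomorphism of algebras: namely $\strtosta(\ccla{\emptyW}^{\lmodel_\Aut}_\Equ)=\qstaI$, and $\qtra(\strtosta(\bar\woru),\symb)=\strtosta(\qfunc_\symb(\bar\woru))$ for every class $\bar\woru\in\ccla{\Words}^{\lmodel_\Aut}_\Equ$ and every $\symb\in\Symb$. Together with the defining recursion of $\qfuncW$ (Def.~\ref{def:qfuncW}) and the analogous recursion of $\qtraW$ inherited from Def.~\ref{def:traW} in the quotient PDFA, these two facts pin down both $\strtosta\circ\qfuncW$ and $\qtraW$ uniquely on $\Words$; since they agree on the empty word and on the one-step extension, they must coincide everywhere.

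Concretely, the base case reduces to chaining Def.~\ref{def:qfuncW_base}, the image of the empty class computed in Proposition~\ref{prop:qbijection}, and the base clause of $\qtraW$, giving $\strtosta(\qfuncW(\emptyW))=\strtosta(\ccla{\emptyW}^{\lmodel_\Aut}_\Equ)=\qstaI=\qtraW(\emptyW)$. For the inductive step, assuming $\qtraW(\woru)=\strtosta(\qfuncW(\woru))$, I would apply in order Def.~\ref{def:qfuncW_ih}, the intertwining equation for $\strtosta$ extracted from Proposition~\ref{prop:qbijection}, the inductive hypothesis, and the right-append form $\qtraW(\woru\symb)=\qtra(\qtraW(\woru),\symb)$ to conclude $\strtosta(\qfuncW(\woru\symb))=\qtraW(\woru\symb)$.

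There is essentially no real obstacle: the corollary is just Proposition~\ref{prop:qbijection} iterated one symbol at a time along $\woru$. The only bookkeeping worth flagging is that Def.~\ref{def:traW} consumes the \emph{leftmost} symbol of a word first, whereas the induction above naturally appends a symbol on the right; a short auxiliary induction on $|\woru|$, of the same flavour as the one already implicit in the proof of Proposition~\ref{prop:t_suff_str_traW}, shows that $\traW(\sta,\woru\symb)=\tra(\traW(\sta,\woru),\symb)$ (and likewise in the quotient), so the main induction goes through without friction.
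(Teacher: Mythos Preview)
Your proposal is correct and matches the paper's own proof essentially line for line: both argue by induction on the length of $\woru$, use Def.~\ref{def:qfuncW_base} together with $\strtosta(\ccla{\emptyW}^{\lmodel_\Aut}_\Equ)=\qstaI$ for the base case, and combine the inductive hypothesis with the intertwining identity $\qtra(\strtosta(\cdot),\symb)=\strtosta(\qfunc_\symb(\cdot))$ from Proposition~\ref{prop:qbijection} and Def.~\ref{def:qfuncW_ih} for the step. Your remark about the left-versus-right consumption in Def.~\ref{def:traW} is a nice bit of extra care that the paper leaves implicit.
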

\begin{proof}
By induction on $\woru\in\Words$: 
\begin{description}
    \item[Base case]
    $\qtraW(\emptyW) = \qstaI = \strtosta\left(\ccla{\emptyW}^{\lmodel_\Aut}_\Equ\right)=\strtosta\left(\qfuncW(\emptyW)\right)$.
    \item[Inductive step] 
    \begin{align*}
        \qtraW(\woru\symb) 
            &= \qtra\left(\qtraW(\woru), \symb\right) 
                &\text{by definition of $\qtraW$}\\
            &= \qtra\left(\strtosta\left(\qfuncW(\woru)\right),\symb\right) & \text{by I.H.}\\
            &= \strtosta\left(\qfunc_\symb\left(\qfuncW(\woru)\right)\right)       
                &\text{by Prop.~\ref{prop:qbijection}}\\
            &= \strtosta\left(\qfuncW(\woru\symb)\right)
                &\text{by Def.~\ref{def:qfuncW_ih}}&\qedhere
    \end{align*}
\end{description}
\end{proof}


\begin{Corollary}\label{cor:quotient_cardinality}
Let $\Aut=(\Sta_\Aut,\staI^{\Aut}, \policy_\Aut, \tra_\Aut)$ and $\AutB=(\Sta_\AutB,\staI^{\AutB}, \policy_\AutB, \tra_\AutB)$ be two PDFA such that
$\lmodel_\Aut \indist_\Equ \lmodel_\AutB$.
Then $\ccla{\Aut}_\Equ$ and $\ccla{\AutB}_\Equ$ are isomorphic. 
In particular $\card\qSta_\Aut = \card\qSta_\AutB$.
\end{Corollary}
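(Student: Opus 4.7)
The plan is to chain the two main structural results already proved: Corollary~\ref{cor:algebraic_quotient}, which identifies the quotient of the language model with the equivalence class of the associated language model, and Proposition~\ref{prop:qbijection}, which produces an isomorphism between the quotient of the language model and the quotient of the PDFA. Combining these should give the desired isomorphism almost immediately.

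More precisely, first I would apply Corollary~\ref{cor:algebraic_quotient} to the hypothesis $\lmodel_\Aut \indist_\Equ \lmodel_\AutB$ in order to obtain the equality of quotient structures $\ccla{\lmodel_\Aut}_\Equ = \ccla{\lmodel_\AutB}_\Equ$. Then I would invoke Proposition~\ref{prop:qbijection} twice, once for \Aut\ and once for \AutB, to obtain two isomorphisms $\strtosta_\Aut : \ccla{\lmodel_\Aut}_\Equ \to \ccla{\Aut}_\Equ$ and $\strtosta_\AutB : \ccla{\lmodel_\AutB}_\Equ \to \ccla{\AutB}_\Equ$, both induced by the respective extended transition functions $\traW_\Aut$ and $\traW_\AutB$ as in Eq.~\ref{def:strtosta}.

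The composition $\strtosta_\AutB \circ \strtosta_\Aut^{-1}: \ccla{\Aut}_\Equ \to \ccla{\AutB}_\Equ$ is then an isomorphism between the two quotient PDFA, since the composition of isomorphisms is an isomorphism and both quotients of language models coincide. This preserves initial states, transition function, and output map, because each $\strtosta$ does so by Proposition~\ref{prop:qbijection}. In particular, the underlying map on states is a bijection, which yields $\card\qSta_\Aut = \card\qSta_\AutB$.

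I do not expect any real obstacle here: the work has been done in Proposition~\ref{prop:qbijection} and Corollary~\ref{cor:algebraic_quotient}. The only minor care needed is to notice that Proposition~\ref{prop:qbijection} is stated for a single PDFA, so one has to apply it separately to \Aut\ and \AutB, and then observe that the equality (not merely isomorphism) of the intermediate quotients $\ccla{\lmodel_\Aut}_\Equ$ and $\ccla{\lmodel_\AutB}_\Equ$ justifies composing $\strtosta_\AutB$ with $\strtosta_\Aut^{-1}$ without any extra identification.
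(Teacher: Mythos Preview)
Your proposal is correct and follows essentially the same approach as the paper: apply Proposition~\ref{prop:qbijection} to each of \Aut\ and \AutB, and use Corollary~\ref{cor:algebraic_quotient} to identify $\ccla{\lmodel_\Aut}_\Equ$ with $\ccla{\lmodel_\AutB}_\Equ$. The paper's proof is just a terser version of what you wrote, omitting the explicit construction of the composite $\strtosta_\AutB \circ \strtosta_\Aut^{-1}$.
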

\begin{proof}
By Proposition~\ref{prop:qbijection} we have that $\ccla{\Aut}_\Equ$ and $\ccla{\lmodel_\Aut}_\Equ$ are isomorphic. Analogously, the same holds for $\ccla{\AutB}_\Equ$ and $\ccla{\lmodel_\AutB}_\Equ$. The claim then follows since $\ccla{\lmodel_\Aut}_\Equ$ and $\ccla{\lmodel_\AutB}_\Equ$ are the same by Corollary~\ref{cor:algebraic_quotient}.
\end{proof}

\begin{Example}\label{ex:pdfa_quotient}
    Fig.~\ref{fig:pdfa_quotient} depicts two PDFA $\Aut$ and $\AutB$. Notice that $\Aut$ is like $\lmodel_1$ in Eq.~\ref{eq:example_lm_tolerance} with $N_1=\{1\}$. 
    %
%
    Now, if we take $\quantp = 3$, we have that 
    $\lmodel_\Aut \indist_\quantp \lmodel_\AutB$   
    since $\{0.4, 0.5, 0.6\} \subset \left[\frac{1}{3}, \frac{2}{3} \right)$. From Proposition~\ref{prop:same_partition}, congruences $\indist^{\Aut}_\quantp$ and $\indist^{\AutB}_\quantp$ are the same, 
    by Corollary~\ref{cor:algebraic_quotient}, 
    $\ccla{\lmodel_\Aut}_\quantp = \ccla{\lmodel_\AutB}_\quantp$,
    and Corollary~\ref{cor:quotient_cardinality} implies 
    $\ccla{\Aut}_\quantp = \ccla{\AutB}_\quantp$. 
    Moreover, \AutB\ is a realization of $\ccla{\AutB}_\quantp$.
    However, \Aut\ is not a realization of $\ccla{\Aut}_\quantp$.
    %
    \qedexample
\end{Example}

\begin{figure}[htbp]
\centering
    \begin{minipage}[c]{0.65\textwidth}%
    \begin{tikzpicture}
        \node[state, initial] (qlambda) {\stackanchor{$\sta^\Aut_0$}{0.4}};
        \node[state, right of = qlambda] (qa) {\stackanchor{$\sta^\Aut_1$}{0.6}};
        \node[state, right of = qa] (qaa) {\stackanchor{$\sta^\Aut_2$}{0.4}};
        \draw           
            (qlambda) edge[above] node{$a/0.6$} (qa)
            (qa) edge[above] node{$a/0.4$}(qaa)
            (qaa) edge[loop right] node{$a/0.6$}(qaa);
    \end{tikzpicture}
    \end{minipage}
    \begin{minipage}[c]{0.3\textwidth}%
    \begin{tikzpicture}
        \node[state, initial] (q0) {\stackanchor{$q^\AutB_0$}{0.5}};
        \draw   
            (q0) edge[loop right] node{$a/0.5$} (q0);
    \end{tikzpicture}
    \end{minipage}
    \caption{(Left) $\Aut$. (Right) $\AutB$.}
    \label{fig:pdfa_quotient}
\end{figure}

Example~\ref{ex:pdfa_quotient} shows that a PDFA may not be realization of its quotient. On the other hand, every realization of the quotient of a PDFA \Aut\ is equivalent to \Aut. To show this, we first prove the following useful result.\\

\begin{Proposition}\label{prop:realization_of_qpdfa}
Given an equivalence \Equ\ and quotient PDFA \qpdfa, for every realization \Aut\ of \qpdfa\ we have that $\ccla{\lmodel_\Aut(\woru)}_\Equ = \qpolicyW(\woru)$ for all $\woru\in\Words$. 
\end{Proposition}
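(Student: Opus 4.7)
The plan is to unfold both sides of the claimed equality using the definitions of $\lmodel_\Aut$, $\policyW$, $\qpolicyW$, and the definition of a realization (Def.~\ref{def:realization}), and then apply the equation $\ccla{\policy(\sta)}_\Equ = \qpolicy(\sta)$ which is exactly the content of the realization condition on outputs.

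First, I would observe that the extensions $\traW$ and $\qtraW$ to $\Words$ agree as functions from $\Words$ to $\Sta = \qSta$. This is a quick induction on the length of $\woru$: the base case follows from $\traW(\emptyW) = \staI = \qstaI = \qtraW(\emptyW)$, and the inductive step uses $\tra = \qtra$ together with the analogues of Def.~\ref{def:traW} for $\qpdfa$. In particular, $\traW(\woru) = \qtraW(\woru)$ for every $\woru\in\Words$.

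Second, I would compute, for arbitrary $\woru\in\Words$,
\[
\ccla{\lmodel_\Aut(\woru)}_\Equ
= \ccla{\policyW(\woru)}_\Equ
= \ccla{\policy(\traW(\woru))}_\Equ
= \qpolicy(\traW(\woru))
= \qpolicy(\qtraW(\woru))
= \qpolicyW(\woru),
\]
where the first equality is Def.~\ref{def:Aut}, the second is Def.~\ref{def:policyW}, the third is the realization condition applied to the state $\traW(\woru)\in\Sta$, the fourth is the equality of the two extensions established in the first step, and the last is the analogue of Def.~\ref{def:policyW} for $\qpdfa$.

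The main obstacle is essentially nil: the result is a direct unfolding of definitions once one notices that realization forces $\traW$ and $\qtraW$ to coincide, so that the extra pointwise condition $\ccla{\policy(\sta)}_\Equ = \qpolicy(\sta)$ transports unchanged along the extensions. The only point worth stating explicitly is the short induction identifying $\traW$ with $\qtraW$; everything else is substitution.
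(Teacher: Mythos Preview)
Your argument is correct and is essentially the paper's own proof: the paper unfolds Def.~\ref{def:Aut} and Def.~\ref{def:policyW}, then uses the realization conditions $\tra=\qtra$ and $\ccla{\policy(\sta)}_\Equ=\qpolicy(\sta)$ to pass from $\ccla{\policy(\traW(\woru))}_\Equ$ directly to $\qpolicy(\qtraW(\woru))$, and finally applies the analogue of Def.~\ref{def:policyW}. The only difference is cosmetic: you spell out the one-line induction giving $\traW=\qtraW$, whereas the paper leaves it implicit in the step labeled ``$\Aut$ is a realization of $\qpdfa$''.
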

\begin{proof}
Let \Aut\ be a  realization of \qpdfa. By Def.~\ref{def:realization}, $\tra=\qtra$ and $\ccla{\policy(\sta)}_\Equ = \qpolicy(\sta)$.
Then, for all $\woru\in\Words$:
\begin{align*}
\ccla{\lmodel_\Aut(\woru)}_\Equ 
    &= \ccla{\policyW(\woru)}_\Equ 
            &\text{by Def.~\ref{def:Aut}} \\
    &= \ccla{\policy(\traW(\woru))}_\Equ 
            &\text{by Def.~\ref{def:policyW}} \\
    &= \qpolicy(\qtraW(\woru)) 
            &\text{$\Aut$ is a realization of \qpdfa} \\
    &= \qpolicyW(\woru)
            &\text{by Def.~\ref{def:policyW}}&\qedhere
\end{align*}
\end{proof}

\begin{Proposition}\label{prop:realizatioin_equiv}
For all PDFA \Aut, 
every realization \AutB\ of $\ccla{\Aut}_\Equ$ is such that 
$\lmodel_\AutB \indist_\Equ \lmodel_\Aut$.
\end{Proposition}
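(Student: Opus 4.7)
The plan is to show that both $\ccla{\lmodel_\Aut(\woru)}_\Equ$ and $\ccla{\lmodel_\AutB(\woru)}_\Equ$ coincide with $\qpolicyW(\woru)$, where $\qpolicy$ denotes the policy of the quotient PDFA $\ccla{\Aut}_\Equ$ and $\qpolicyW$ is its extension to $\Words$. Once this is in hand, equating the two expressions gives $\lmodel_\Aut(\woru) =_\Equ \lmodel_\AutB(\woru)$ for every $\woru\in\Words$, which by Def.~\ref{def:indist_lm} is exactly $\lmodel_\AutB \indist_\Equ \lmodel_\Aut$.

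The first of the two equalities is immediate: since $\AutB$ is by hypothesis a realization of $\ccla{\Aut}_\Equ$, Proposition~\ref{prop:realization_of_qpdfa} applies directly to $\AutB$ and yields $\ccla{\lmodel_\AutB(\woru)}_\Equ = \qpolicyW(\woru)$ for every $\woru\in\Words$.

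The second equality requires a bit more care, because $\Aut$ itself need not be a realization of $\ccla{\Aut}_\Equ$, as illustrated by Example~\ref{ex:pdfa_quotient}; hence Proposition~\ref{prop:realization_of_qpdfa} cannot be invoked on $\Aut$ in the same way. The approach is to unfold the construction of $\ccla{\Aut}_\Equ$ directly: combining Corollary~\ref{cor:qtraW} with Proposition~\ref{prop:qfuncW} and Def.~\ref{def:strtosta} gives $\qtraW(\woru) = \ccla{\traW(\woru)}^\Aut_\Equ$, and then applying $\qpolicy$ together with the identity $\qpolicy(\ccla{\sta}^\Aut_\Equ) = \ccla{\policy(\sta)}_\Equ$ built into the quotient PDFA produces $\qpolicyW(\woru) = \ccla{\policyW(\woru)}_\Equ = \ccla{\lmodel_\Aut(\woru)}_\Equ$. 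The main obstacle is precisely this asymmetry between $\Aut$ and $\AutB$; once it is dealt with, concatenating the two equalities finishes the argument in one line.
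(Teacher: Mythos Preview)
Your proposal is correct and follows essentially the same route as the paper's proof: both apply Proposition~\ref{prop:realization_of_qpdfa} to $\AutB$ to obtain $\ccla{\lmodel_\AutB(\woru)}_\Equ = \qpolicyW(\woru)$, and then unfold $\qpolicyW(\woru)$ via Corollary~\ref{cor:qtraW} and Proposition~\ref{prop:qfuncW} to reach $\ccla{\lmodel_\Aut(\woru)}_\Equ$. The only cosmetic difference is that the paper cites Proposition~\ref{prop:qbijection} for the final identity $\qpolicy\big(\strtosta(\ccla{\woru}^{\lmodel_\Aut}_\Equ)\big)=\ccla{\lmodel_\Aut(\woru)}_\Equ$, whereas you obtain the same equality by invoking Def.~\ref{def:strtosta} together with the defining relation $\qpolicy(\ccla{\sta}^\Aut_\Equ)=\ccla{\policy(\sta)}_\Equ$; your remark that $\Aut$ need not be a realization of $\ccla{\Aut}_\Equ$ (cf.\ Example~\ref{ex:pdfa_quotient}) is a welcome clarification of why this extra unfolding is needed.
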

\begin{proof}
Let $\Aut = (\Sta, \staI, \tra, \policy)$,
$\ccla{\Aut}_\Equ = (\qSta, \qstaI, \qtra, \qpolicy)$,
and $\AutB = (\qSta, \qstaI, \tra_\AutB, \policy_\AutB)$,
with \AutB\ a realization of $\ccla{\Aut}_\Equ$.
For every $\woru\in\Words$:
\begin{align*}
\ccla{\lmodel_\AutB(\woru)}_\Equ 
    &= \qpolicyW(\woru)
            &\text{by Prop.~\ref{prop:realization_of_qpdfa}} \\
    &= \qpolicy(\qtraW(\woru)) 
            &\text{by Def.~\ref{def:policyW}} \\
    &= \qpolicy\left(\strtosta\left(\qfuncW(u)\right)\right) 
            &\text{by Cor.~\ref{cor:qtraW}} \\
    &= \qpolicy\left(\strtosta\left(\ccla{\woru}^{\lmodel_\Aut}_\Equ\right)\right) 
            &\text{by Prop.~\ref{prop:qfuncW}} \\
    &= \ccla{\lmodel_\Aut(\woru)}_\Equ 
            &\text{by Prop.~\ref{prop:qbijection}}
\end{align*}
Hence, $\lmodel_\AutB \indist_\Equ \lmodel_\Aut$. \qedhere
\end{proof}

\subsection{Minimality}

The following result states that any realization of $\ccla{\Aut}_\Equ$ is minimal, with respect to the number of states, among all PDFA which are $\indist_{\Equ}$-equivalent to \Aut. \\

\begin{Proposition}\label{prop:congruence_stasim_minimal}
For all PDFA $\Aut$,
\[
\card\qSta=\min_{\lmodel_\AutB\indist_\Equ \lmodel_\Aut}\card\Sta_\AutB
\]
where the minimum is taken over all PDFA $\AutB$ which are $\indist_\Equ$-equivalent to $\Aut$.
\end{Proposition}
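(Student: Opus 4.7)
The plan is to establish the identity by proving the two matching inequalities $\min_{\lmodel_\AutB\indist_\Equ\lmodel_\Aut}\card\Sta_\AutB\leq\card\qSta$ and $\card\Sta_\AutB\geq\card\qSta$ for every PDFA $\AutB$ with $\lmodel_\AutB\indist_\Equ\lmodel_\Aut$.

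For the upper bound, I would exhibit a concrete PDFA attaining exactly $\card\qSta$ states. For each class $\qsta\in\qSta$ pick a representative distribution $\pdist_{\qsta}\in\qpolicy(\qsta)$, which is non-empty by construction of the quotient, and define $\AutB\eqdef(\qSta,\qstaI,\qtra,\policy_\AutB)$ with $\policy_\AutB(\qsta)\eqdef\pdist_{\qsta}$. By construction, $\ccla{\policy_\AutB(\qsta)}_\Equ=\qpolicy(\qsta)$, so $\AutB$ is a realization of $\ccla{\Aut}_\Equ$ in the sense of Eq.~\ref{def:realization}. Proposition~\ref{prop:realizatioin_equiv} then gives $\lmodel_\AutB\indist_\Equ\lmodel_\Aut$, and since $\card\Sta_\AutB=\card\qSta$, the minimum is at most $\card\qSta$ and is attained.

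For the lower bound, let $\AutB$ be any PDFA with $\lmodel_\AutB\indist_\Equ\lmodel_\Aut$. Corollary~\ref{cor:quotient_cardinality} gives $\card\qSta=\card\qSta_\AutB$. But $\qSta_\AutB$ is by construction the set of equivalence classes of $\Sta_\AutB$ under the congruence $\indist^\AutB_\Equ$ of Definition~\ref{def:indist_sta}, so the quotient map $\Sta_\AutB\to\qSta_\AutB$ is surjective, yielding $\card\qSta_\AutB\leq\card\Sta_\AutB$. Chaining these inequalities gives $\card\qSta\leq\card\Sta_\AutB$, as required.

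No single step is particularly hard, since all the heavy lifting has already been done by Proposition~\ref{prop:realizatioin_equiv} (any realization reproduces the language model up to $\Equ$) and by Corollary~\ref{cor:quotient_cardinality} ($\indist_\Equ$-equivalent PDFA have the same number of quotient states). The only mild subtlety is verifying that the realization constructed for the upper bound is a well-formed PDFA: totality of $\policy_\AutB$ reduces to non-emptiness of the classes $\qpolicy(\qsta)\in\ccla{\ProbT}_\Equ$, while reachability of every state of $\qSta$ follows from reachability in $\Sta$ together with Corollary~\ref{cor:qtraW}.
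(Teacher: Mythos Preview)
Your proof is correct and follows essentially the same approach as the paper: the lower bound uses Corollary~\ref{cor:quotient_cardinality} together with the surjectivity of the quotient map $\Sta_\AutB\to\qSta_\AutB$, and attainment of the minimum comes from taking a realization of $\ccla{\Aut}_\Equ$ and invoking Proposition~\ref{prop:realizatioin_equiv}. The paper's version is terser, simply writing $\card\Sta_\AutB\geq\card\qSta_\AutB=\card\qSta_\Aut$ and then appealing to Proposition~\ref{prop:realizatioin_equiv} without spelling out the construction of a realization or the well-formedness checks you include; your extra care there is fine but not strictly necessary.
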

\begin{proof}
Let $\AutB=(\Sta_\AutB,\staI^{\AutB}, \policy_\AutB, \tra_\AutB)$ be an arbitrary PDFA $\indist_{\Equ}$-equivalent to \Aut. Then
\[
\card\Sta_\AutB\geq \card\qSta_\AutB = \card\qSta_\Aut
\]
where the last equality follows from Corollary~\ref{cor:quotient_cardinality}. That the minimum is attained follows from Proposition~\ref{prop:realizatioin_equiv} by letting $\AutB$ to be any realization of $\ccla{\Aut}_\Equ$.
\end{proof}

\begin{Example} PDFA $\AutB$ in Fig.~\ref{fig:pdfa_quotient} is a realization of the quotient $\ccla{\Aut}_\quantp$, for $\quantp=3$, and therefore, it is a minimal PDFA $\indist_\quantp$-equivalent to $\Aut$.    
\qedexample
\end{Example}

\section{Learning with equivalence relations}\label{sec:plstar}
We present \PLstar, an adaptation of \Lstar~\cite{Angluin:1987} for language models. 
Given an unknown \emph{target} language model \lmodel\ and an equivalence \Equ, the goal of \PLstar\ is to learn a quotient PDFA \qpdfa\ isomorphic to $\ccla{\lmodel}_\Equ$.
If \lmodel\ is \Equ-regular, \PLstar\ is guaranteed to terminate.
Hereinafter, \lmodel\ and \Equ\ are fixed.

\subsection{Queries}
The algorithm makes use of a so-called \emph{membership query} \MQ\ defined as follows:
\begin{align}\label{def:MQ}
    \MQ(\woru) &\eqdef \lmodel(\woru)
\end{align} 
together with an \emph{equivalence query} \EQ\ defined as follows:
\begin{align}\label{def:EQ}
    \EQ(\qpdfa, \Equ) &\eqdef
    \begin{cases}
        \textsc{True} & \textrm{if}\ \forall\woru\in\Words.\ \ccla{\lmodel(\woru)}_\Equ = \qpolicyW(\woru) \\
        \ce  & \textrm{such that}\ \ccla{\lmodel(\ce)}_\Equ \neq \qpolicyW(\ce)
    \end{cases}
\end{align}
where \ce\ is called a \emph{counterexample}.

\subsection{\Wor-equivalence}
For any set of strings $\Wor\subseteq\Words$, we define:
\begin{align}\label{def:s_suff_str}
\forall \woru, \woru' \in \Words.\ \woru =_\Equ^{\Wor} \woru' &\iff \forall \worw\in\Wor.\ \lmodel(\woru\worw) =_\Equ \lmodel(\woru'\worw)
\end{align}
It is straightforward to show that $=_\Equ^{\Wor}$ is an equivalence relation.
Notice that $=_\Equ^{\Words}$ is $\indist^\lmodel_\Equ$. 
We denote $\cla{\cdot}_\Equ^{\Wor}$ the classes defined by the equivalence $=_\Equ^{\Wor}$. 

Recall that given two relations $R_1$ and $R_2$ on any set $X$, $R_1$ is \emph{finer} than $R_2$ if only if for all $x, y \in X$, $x R_1 y$ implies $x R_2 y$. It is also said that $R_2$ is \emph{coarser} than $R_1$. \\
\begin{Proposition}\label{prop:equal_equ}
Let $\Wor_1, \Wor_2 \subseteq \Words$ such that $\Wor_1 \subseteq \Wor_2$.
Then $=_\Equ^{\Wor_2}$ is finer than $=_\Equ^{\Wor_1}$.
\begin{proof}
Let $\woru, \woru' \in \Words$:
\begin{align*}
\woru =_\Equ^{\Wor_2} \woru' 
&\implies \forall \worw\in\Wor_2.\ \lmodel(\woru\worw) =_\Equ \lmodel(\woru'\worw) 
    & \text{by Def.~\ref{def:s_suff_str}} \\
&\implies \forall \worw\in\Wor_1.\ \lmodel(\woru\worw) =_\Equ \lmodel(\woru'\worw) 
    & \text{by}\ \Wor_1 \subseteq \Wor_2 \\
&\implies \woru =_\Equ^{\Wor_1} \woru' 
    & \text{by Def.~\ref{def:s_suff_str}} 
    & \qedhere
\end{align*}
\end{proof}
\end{Proposition}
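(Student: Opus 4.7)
The plan is essentially to unfold the definition of $=_\Equ^{\Wor}$ and exploit the set containment $\Wor_1 \subseteq \Wor_2$ directly, since a universal quantification over a larger set of suffixes clearly implies the universal quantification over any subset of suffixes. There is no intermediate construction needed and no case analysis; the argument is just a rephrasing in terms of Definition~\ref{def:s_suff_str}.

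Concretely, I would start by recalling the convention about ``finer'': a relation $R$ on $\Words$ is finer than $R'$ when $\woru\, R\, \woru'$ implies $\woru\, R'\, \woru'$ for every $\woru, \woru' \in \Words$. So it suffices to pick two arbitrary strings $\woru, \woru' \in \Words$, assume $\woru =_\Equ^{\Wor_2} \woru'$, and derive $\woru =_\Equ^{\Wor_1} \woru'$. By Definition~\ref{def:s_suff_str}, the hypothesis unfolds to $\lmodel(\woru\worw) =_\Equ \lmodel(\woru'\worw)$ for every $\worw \in \Wor_2$. Using $\Wor_1 \subseteq \Wor_2$, this statement specialises to every $\worw \in \Wor_1$, and then Definition~\ref{def:s_suff_str} repackages the specialised condition as $\woru =_\Equ^{\Wor_1} \woru'$.

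There is really no hard part here: the whole argument is a one-line application of monotonicity of universal quantifiers with respect to the quantification domain. The only thing worth being careful about is the orientation of ``finer versus coarser'', to make sure we are going from $\Wor_2$-equivalence to $\Wor_1$-equivalence and not the other way around; enlarging the test suffix set can only identify fewer string pairs, hence yields the finer partition, which is consistent with the statement.
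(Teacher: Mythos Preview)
Your proposal is correct and follows essentially the same approach as the paper: unfold Definition~\ref{def:s_suff_str}, restrict the universal quantifier from $\Wor_2$ to the subset $\Wor_1$, and repackage via the same definition. The extra remark you make about the orientation of ``finer'' is a helpful sanity check but does not change the argument.
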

%
%
\begin{Corollary}\label{coro:finer}
For all $\Wor\subseteq\Words$, $\indist^\lmodel_\Equ$ is finer than $=_\Equ^\Wor$. Moreover, if the quotient $\ccla{\Words}^\lmodel_\Equ$ is finite and $\card\cla{\Words}^W_\Equ = \card\ccla{\Words}^\lmodel_\Equ$, then $=^\Wor_\Equ$ and $\indist^\lmodel_\Equ$ are the same.
\end{Corollary}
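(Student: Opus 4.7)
The plan is to derive both parts from Proposition~\ref{prop:equal_equ} combined with the earlier observation that $=_\Equ^{\Words}$ coincides with $\indist^\lmodel_\Equ$. For the first part, I would simply instantiate Proposition~\ref{prop:equal_equ} with $\Wor_1 = \Wor$ and $\Wor_2 = \Words$: since $\Wor \subseteq \Words$, the proposition yields that $=_\Equ^{\Words}$ is finer than $=_\Equ^{\Wor}$, and rewriting $=_\Equ^{\Words}$ as $\indist^\lmodel_\Equ$ gives the claim.

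For the second part, I would argue via partitions. The first part implies that the partition $\ccla{\Words}^\lmodel_\Equ$ induced by $\indist^\lmodel_\Equ$ refines the partition $\cla{\Words}^\Wor_\Equ$ induced by $=_\Equ^\Wor$, in the sense that every class of $=_\Equ^\Wor$ is a disjoint union of classes of $\indist^\lmodel_\Equ$. Under the finiteness and cardinality hypothesis, each class of $=_\Equ^\Wor$ must consist of exactly one class of $\indist^\lmodel_\Equ$: if some class $C \in \cla{\Words}^\Wor_\Equ$ contained two or more distinct $\indist^\lmodel_\Equ$-classes, then summing over all classes of $=_\Equ^\Wor$ we would obtain $\card\ccla{\Words}^\lmodel_\Equ > \card\cla{\Words}^\Wor_\Equ$, contradicting the assumed equality. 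Hence the two partitions coincide, which means $=^\Wor_\Equ$ and $\indist^\lmodel_\Equ$ are the same equivalence relation.

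There is no real obstacle here: the first part is an immediate specialization of the previous proposition, and the second is a standard pigeonhole-style argument about partitions of a set with finitely many classes where a refinement cannot strictly refine without increasing the class count. The only thing to state carefully is why the refinement property (finer) corresponds to the partition-inclusion statement that each $=_\Equ^\Wor$-class is a union of $\indist^\lmodel_\Equ$-classes, which is immediate from the definition of ``finer'' recalled just before Proposition~\ref{prop:equal_equ}.
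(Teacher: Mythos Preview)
Your proposal is correct and matches the paper's approach: the first part is exactly the instantiation of Proposition~\ref{prop:equal_equ} with $\Wor_1=\Wor$, $\Wor_2=\Words$, and for the second part the paper simply observes that the well-defined surjection $\ccla{\woru}^\lmodel_\Equ \mapsto \cla{\woru}^\Wor_\Equ$ becomes a bijection under the cardinality hypothesis, which is precisely your pigeonhole argument phrased in terms of the quotient map rather than unions of classes.
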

\begin{proof}
The first claim follows directly from Proposition \ref{prop:equal_equ}. The second follows since equality of cardinals implies that the map $\ccla{\woru}^\lmodel_\Equ\mapsto\cla{\woru}^\Wor_\Equ$ is a bijection.
\end{proof}

\subsection{Algorithm \PLstar}
\PLstar\ pseudocode (Algorithm~\ref{alg:pl_star}) is analogue to \Lstar.
It uses an \emph{observation table}
\[
OT: \Pref \times \Suff \to \Psimplex\left(\SymbT\right)
\]
for storing outcomes of \MQ, where $\Pref \subset \Symb^\ast$ is a finite prefix-closed set (stored in row indices) and $\Suff \subset \Symb^\ast$ is a finite suffixed-closed set (stored in column indices). 
Given $\woru\in\Words$, we denote $\prefixes(\woru)$ and $\suffixes(\woru)$ the set of prefixes and suffixes of \woru, including \woru\ and \emptyW.
$OT$ is defined as follows: 
\begin{align}
\label{def:OT}
\forall p\in\Pref, s\in\Suff.\ OT[p][s] \eqdef \MQ(p s) 
\end{align}
\Pref\ is divided into two parts: a prefix-closed set \RED\ which are the rows used to construct the \emph{states} of the quotient PDFA \qpdfa, and $\BLUE\eqdef\left(\RED\right)\Symb$ which are the rows representing continuations of \RED\ by every symbol $\symb\in\Symb$~\cite{De_la_higuera:2010}. 
The fact that \RED\ is prefix-closed implies \Pref\ is also prefix-closed. 

\PLstar\ expands $OT$ through the use of \MQ\ until it becomes \emph{closed} and \emph{consistent} (lines~\ref{line:begin_while} to~\ref{line:end_while}). Then, it constructs a hypothesis quotient PDFA (line~\ref{line:build_pdfa}) and calls \EQ\ to check if it is equivalent to the target language model (line~\ref{line:EQ}). If \EQ\ returns a counterexample \ce, $OT$ is updated (line~\ref{line:update}). These steps are repeated until \EQ\ answers \True, in which case \PLstar\ terminates and returns the last hypothesis \qpdfa\ (line~\ref{line:return}).\\

\begin{figure}
\centering
\begin{minipage}{.6\textwidth}
\begin{algorithm}[H]
    \SetKwInOut{Input}{Input}
    \SetKwInOut{Output}{Output}
    
    \Input{ An alphabet $\Symb$, a language model $\lmodel$, an equivalence $\Equ$}
    \Output{ Quotient PDFA \qpdfa}

    Initialize\;
    \Repeat{Answer = \True}{
        \While{ $OT$ is not closed or not consistent}
        {\label{line:begin_while}
            \If{$OT$ is not closed}
            {
               $OT$ $\leftarrow$ \Close($OT$, $\Equ$)\; \label{line:close}
            }
            \If{$OT$ is not consistent}
            {
                $OT$ $\leftarrow$ \Consistent($OT$, $\Symb$, $\Equ$)\; \label{line:consistent}
            }
        } \label{line:end_while}
        \qpdfa $\leftarrow$ \BuildQPDFA($OT$, $\Equ$)\; \label{line:build_pdfa}
        $Answer$ $\leftarrow$ \EQ($\qpdfa, \Equ$)\; \label{line:EQ}      
        \If{Answer $=$ \ce}
        { 
            $OT$ $\leftarrow$ \Update($OT$, \ce)\; \label{line:update} 
        }       
    }
   
    \Return \qpdfa\; \label{line:return}
    
    \caption{$\PLstar$ learning algorithm}
    \label{alg:pl_star}
\end{algorithm}    
\end{minipage}
\end{figure}

\begin{description}

\item[Closedness]
$OT$ is \emph{closed} if and only if
\begin{align}\label{def:closed}
\forall p \in \BLUE,\ \exists p' \in \RED\ &\text{ such that }\ p =_\Equ^{\Suff} p'    
\end{align}
Equivalently, $OT$ is closed if and only if $\cla{\BLUE}^\Suff_\Equ \subseteq \cla{\RED}^\Suff_\Equ$.
While $OT$ is not closed, 
\Close\ finds $p'\in \BLUE$ such that $p'\neq^{\Suff}_\Equ p$ for all $p\in \RED$, and updates $OT$ as follows:
\begin{equation}\label{def:Closed}
\begin{aligned}
\RED'  &\gets \RED\cup\{p'\}\\ 
\BLUE' &\gets \BLUE\setminus\{p'\} \cup \{p' \symb \mid \symb\in\Symb \} \\
\Suff' &\gets \Suff\\
OT[p' \symb][s] &\gets \MQ(p' \symb s),\ \textrm{for all } \symb\in\Symb,\ s \in\Suff \\
\end{aligned}
\end{equation}
Notice that \RED\ remains prefix-closed, and so \Pref, and that \Suff\ remains unchanged.\\

\item[Consistency] 
$OT$ is \emph{consistent} if and only if 
\begin{align}\label{def:consistency}
\forall p, p' \in \RED,\ &\text{ if }\ p =_\Equ^{\Suff} p'\ \text{ then }\ 
\forall \symb \in \Symb.\ p \symb=_\Equ^{\Suff} p' \symb    
\end{align}
While $OT$ is not consistent, \Consistent\ finds two rows $p, p' \in \RED$ such that $p =_\Equ^{\Suff} p'$ but $p \symb s \neq_\Equ p' \symb s$ for some $s \in \Suff$, adds $\symb s$ to \Suff, and updates $OT$ as follows:
\begin{equation}\label{def:Consistent}
\begin{aligned}
\RED'  &\gets \RED\\ 
\BLUE' &\gets \BLUE \\
\Suff' &\gets \Suff\cup\{\sigma s\} \\
OT[p][\symb s] &\gets \MQ(p \symb s),\ p \in\Pref
\end{aligned}
\end{equation}
Notice that \Suff\ remains suffixed-closed and \Pref\ remains unchanged.\\

\item[Quotient PDFA construction]
\BuildQPDFA\ returns a quotient PDFA \qpdfa\  
such that:
\begin{subequations}
\begin{align}
    \hSta & \eqdef \cla{\RED}^\Suff_\Equ,\text{ and } \label{def:hSta}\\
    \hstaI & \eqdef \cla{\emptyW}^\Suff_\Equ. \label{def:hstaI} 
\end{align}
\end{subequations}
Closedness and consistency ensure that we can define a transition function
\[
\htra: \cla{\RED}^\Suff_\Equ\times \Symb\to\cla{\RED}^\Suff_\Equ
\]
by letting
\begin{equation}\label{def:tau_hat}
\htra\left(\cla{p}^\Suff_\Equ, \symb\right) \eqdef \cla{p \symb}^\Suff_\Equ
\end{equation}
The map \hpolicy\ is defined as follows:
\begin{equation}\label{def:pi_hat}
\hpolicy(\cla{p}^\Suff_\Equ) \eqdef \ccla{OT[p][\emptyW]}_\Equ,\quad p\in\RED.
\end{equation}
It is well defined by consistency.

\item[Update]
When \EQ\ returns a counterexample \ce, \Update\ adds the set $\prefixes(\ce)$ to \RED, expands \BLUE\ with the missing continuations, and fills $OT$ with appropriate \MQ s:

\begin{equation}\label{def:Update}
\begin{aligned}
\RED'  &\gets \RED\cup\prefixes(\ce)\\ 
\BLUE' &\gets \BLUE\cup \{p\symb:p\in\prefixes(\ce),\ \symb\in\Symb\} \\
\Suff' &\gets \Suff \\
OT[p][s] &\gets \MQ(ps),\ \text{for all } p\in\prefixes(\ce),\ s \in\Suff \\
OT[p\symb][s] &\gets \MQ(p\symb s),\ \text{for all } p\in\prefixes(\ce),\ \symb\in\Symb,\ s \in\Suff
\end{aligned}
\end{equation}

\end{description}

\subsection{Properties of the quotient PDFA built from an $OT$}

The following lemmas state basic properties of the quotient PDFA built from a closed and consistent observation table $OT$ via the procedure $\BuildQPDFA$. We will use them in the next section in the proof of termination of Algorithm~\ref{alg:pl_star}.\footnote{Lemma~\ref{lemma:plstar-quant-A} and Lemma~\ref{lemma:plstar-quant-B} are analogous to Angluin's Lemmas for regular languages~\cite{Angluin:1987}.}
It is worth mentioning that Lemma~\ref{lemma:plstar-quant-A} and Lemma~\ref{lemma:plstar-quant-B} are adapted versions of Proposition~\ref{prop:qfuncW} and Proposition~\ref{prop:realization_of_qpdfa}, respectively, that hold for equivalence $=^\Suff_\Equ$, subject to closedness and consistency of $OT$. \\

\begin{Lemma}\label{lemma:plstar-quant-A}
Let $OT$ be closed and consistent and \qpdfa\ be the quotient PDFA built from $OT$.
Then for all $p \in \RED$, we have $\htraW(p) = \cla{p}^\Suff_\Equ$.
\end{Lemma}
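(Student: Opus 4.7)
The plan is to proceed by induction on the length of $p \in \RED$, mirroring the structure of Proposition~\ref{prop:qfuncW} but working modulo the finite suffix set \Suff\ rather than all of \Words. The base case is immediate: for $p = \emptyW$, the definition of $\htraW$ (analogous to Def.~\ref{def:tau_star_a}) together with Def.~\ref{def:hstaI} gives $\htraW(\emptyW) = \hstaI = \cla{\emptyW}^\Suff_\Equ$, and $\emptyW \in \RED$ since \RED\ is prefix-closed and nonempty.

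For the inductive step, I would write $p = p'\symb$ with $p \in \RED$ and $|p'| < |p|$. The crucial observation here is that \RED\ is prefix-closed: this invariant is established at initialization and preserved by the updates in~\Close, \Consistent, and~\Update\ (see Eqs.~\ref{def:Closed}, \ref{def:Consistent}, and~\ref{def:Update}). Therefore $p' \in \RED$, and the inductive hypothesis applies, yielding $\htraW(p') = \cla{p'}^\Suff_\Equ$. Chaining with the definitions,
\begin{align*}
\htraW(p'\symb)
&= \htra\!\left(\htraW(p'),\, \symb\right) & \text{by definition of } \htraW \\
&= \htra\!\left(\cla{p'}^\Suff_\Equ,\, \symb\right) & \text{by I.H.} \\
&= \cla{p'\symb}^\Suff_\Equ & \text{by Def.~\ref{def:tau_hat}}.
\end{align*}

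The only non-routine point I would flag is that the final step implicitly invokes the hypotheses that $OT$ is closed and consistent. Consistency is what makes Def.~\ref{def:tau_hat} independent of the chosen representative $p' \in \RED$ of the class $\cla{p'}^\Suff_\Equ$, so that $\htra$ is a well-defined function on $\cla{\RED}^\Suff_\Equ \times \Symb$. Closedness is what guarantees that the value $\cla{p'\symb}^\Suff_\Equ$ actually lies in $\cla{\RED}^\Suff_\Equ = \hSta$ even in the case $p'\symb \in \BLUE$. Since $p \in \RED$ itself, $\cla{p}^\Suff_\Equ \in \hSta$ trivially, which is the output of the induction. Beyond keeping this well-definedness in mind, there is no real obstacle: the argument is a direct transcription of the $\qfuncW$ computation to the finite-suffix setting.
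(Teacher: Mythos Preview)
Your proof is correct and follows essentially the same approach as the paper: induction on the length of $p \in \RED$, using prefix-closedness of $\RED$ to apply the inductive hypothesis to $p'$ when $p = p'\symb$, then invoking Def.~\ref{def:tau_hat}. Your additional remarks on why closedness and consistency are needed for $\htra$ to be well defined are accurate and make explicit what the paper leaves implicit.
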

\begin{proof}
By induction over $p \in \RED$.
\begin{description}
\item[Base case] $p = \emptyW$.
By construction, 
$\htraW(\emptyW) = \hstaI = \cla{\emptyW}^\Suff_\Equ$.
\item[Inductive step] $p = p' \symb$.
\begin{align*}
\htraW(p' \symb) 
 &= \htra\left(\htraW(p'), \symb\right)   
    &\qquad \text{by definition of \htraW} \\ 
 &= \htra\left(\cla{p'}^\Suff_\Equ, \symb\right)  
    &\qquad \text{by IH and \RED\ prefix-closed} \\ 
 &= \cla{p'\symb}^\Suff_\Equ 
 &\qquad \text{by Def.~\ref{def:tau_hat}}
 &\qedhere
\end{align*}
\end{description}
\end{proof}

\allowdisplaybreaks
\begin{Lemma}\label{lemma:plstar-quant-B}
Let $OT$ be closed and consistent and \qpdfa\ be the quotient PDFA built from $OT$.
Then for all $p \in \RED$ and $s \in \Suff$, we have $\hpolicyW(p s) = \ccla{\lmodel(p s)}_\Equ$.
\end{Lemma}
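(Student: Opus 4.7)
The plan is to prove the lemma by induction on the length of $s \in \Suff$, leveraging both Lemma~\ref{lemma:plstar-quant-A} and the closedness/consistency of $OT$.

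For the \textbf{base case} $s = \emptyW$, I would compute $\hpolicyW(p) = \hpolicy(\htraW(p))$, apply Lemma~\ref{lemma:plstar-quant-A} to rewrite this as $\hpolicy(\cla{p}^\Suff_\Equ)$, then use Def.~\ref{def:pi_hat} to get $\ccla{OT[p][\emptyW]}_\Equ$, which by Def.~\ref{def:OT} equals $\ccla{\MQ(p)}_\Equ = \ccla{\lmodel(p)}_\Equ$ by Def.~\ref{def:MQ}.

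For the \textbf{inductive step}, write $s = \symb s'$ where $s' \in \Suff$ by suffix-closure of $\Suff$, and $|s'| < |s|$. The key observation is: since $p \in \RED$, we have $p\symb \in \RED \cup \BLUE$, so there exists $p' \in \RED$ with $p' =^\Suff_\Equ p\symb$ (taking $p' = p\symb$ if $p\symb \in \RED$, or invoking closedness otherwise). I would then compute
\begin{align*}
\htraW(p\symb s')
 &= \htraW(\htra(\htraW(p),\symb), s') \\
 &= \htraW(\htra(\cla{p}^\Suff_\Equ, \symb), s') & \text{by Lemma~\ref{lemma:plstar-quant-A}} \\
 &= \htraW(\cla{p\symb}^\Suff_\Equ, s') & \text{by Def.~\ref{def:tau_hat}} \\
 &= \htraW(\cla{p'}^\Suff_\Equ, s') & \text{since } p' =^\Suff_\Equ p\symb \\
 &= \htraW(\htraW(p'), s') & \text{by Lemma~\ref{lemma:plstar-quant-A} applied to } p' \\
 &= \htraW(p's').
\end{align*}
Thus $\hpolicyW(p\symb s') = \hpolicyW(p's')$. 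The inductive hypothesis applied to $p' \in \RED$ and $s' \in \Suff$ yields $\hpolicyW(p's') = \ccla{\lmodel(p's')}_\Equ$. Finally, since $p' =^\Suff_\Equ p\symb$ and $s' \in \Suff$, Def.~\ref{def:s_suff_str} gives $\lmodel(p's') =_\Equ \lmodel(p\symb s')$, so their $\Equ$-classes coincide, completing the step.

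The \textbf{main subtlety} is in the inductive step: one has to ``relocate'' the trajectory from $p\symb$ (which may lie in $\BLUE$, outside the domain where Lemma~\ref{lemma:plstar-quant-A} directly applies) to some $p' \in \RED$. Closedness is exactly what provides this $p'$, and Def.~\ref{def:tau_hat} of the hypothesis transition ensures the class $\cla{p\symb}^\Suff_\Equ$ is equal to $\cla{p'}^\Suff_\Equ$ as required. Consistency plays no direct role in this particular proof (it is used instead to guarantee that $\hpolicy$ and $\htra$ are well-defined, i.e., that $\BuildQPDFA$ produces a valid quotient PDFA in the first place).
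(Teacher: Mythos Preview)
Your proof is correct and follows essentially the same approach as the paper: induction on the length of $s$, the base case via Lemma~\ref{lemma:plstar-quant-A} and Def.~\ref{def:pi_hat}, and the inductive step via closedness to relocate $p\symb$ to some $p'\in\RED$ before applying the inductive hypothesis. Your remark that consistency is used only to make $\htra$ and $\hpolicy$ well-defined (rather than in the induction itself) is also accurate.
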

\begin{proof}
By induction in the length of $s \in \Suff$.
\begin{description}
\item[Base case] Let $|s|=0$, i.e., $s = \emptyW$.
\begin{align*}
   \hpolicyW(p \emptyW) &= \hpolicyW(p) 
                            &\qquad p \emptyW = p \\
                        &= \widehat{\policy}\left(\htraW(p)\right) 
                            &\qquad \text{by definition of $\hpolicyW$}\\
                        &= \widehat{\policy}\left(\cla{p}^\Suff_\Equ\right) 
                            &\qquad \text{by Lemma \ref{lemma:plstar-quant-A}}\\
                        &= \ccla{OT[p][\emptyW]}_\Equ
                            &\qquad  \text{by Def.~\ref{def:pi_hat}} \\
                        &= \ccla{\MQ(p \emptyW)}_\Equ 
                            &\qquad \text{by Def.~\ref{def:OT}}\\
                        &= \ccla{\lmodel(p\emptyW)}_\Equ
                            &\qquad \text{by Def.~\ref{def:MQ}}
\end{align*}
\item[Inductive step] Assume it holds for all $s' \in \Suff$, with $|s'| = n$. Let $s = \symb s'$, with $|s'| = n$.
 \begin{align*}
    \hpolicyW(p \symb s') 
    &=  \hpolicy\left(\htraW(p \symb s')\right) 
        &\qquad \text{by definition of $\hpolicyW$}\\
    &=  \hpolicy\left(\htraW\left(\htraW(p), \symb s'\right)\right) 
        &\qquad \text{by definition of $\htraW$}\\
    &=  \hpolicy\left(\htraW\left(\cla{p}^\Suff_\Equ, \symb s'\right)\right) 
        &\qquad \text{by Lemma~\ref{lemma:plstar-quant-A}}\\
    &=  \hpolicy\left(\htraW\left(\htra\left(\cla{p}^\Suff_\Equ, \symb\right), s'\right)\right)
        &\qquad \text{by definition of $\htraW$}\\ 
    &=  \hpolicy\left(\htraW\left(\cla{p \symb}^\Suff_\Equ, s'\right)\right) 
        &\qquad \text{by Def.~\ref{def:tau_hat}}\\
    &=  \hpolicy\left(\htraW\left(\cla{p'}^\Suff_\Equ, s'\right)\right) 
        &\qquad \text{with }p'\in\RED\text{ since }OT\text{ closed}\\
    &=  \hpolicy\left(\htraW\left(\htraW\left(p'\right), s'\right)\right) 
        &\qquad \text{by Lemma~\ref{lemma:plstar-quant-A}}\\
    &=  \hpolicy\left(\htraW\left(p' s'\right)\right) 
        &\qquad \text{by definition of $\htraW$}\\
    &=  \hpolicyW\left(p's'\right) 
        &\qquad \text{by definition of $\hpolicyW$}\\
    &=  \ccla{\lmodel\left(p's'\right)}_\Equ  
        &\qquad \text{by IH: } p'\in \RED \text{ and } \left|s'\right|=n\\
    &=  \ccla{\lmodel(p \symb s')}_\Equ 
        &\qquad \text{since } \cla{p\symb}^\Suff_\Equ = \cla{p'}^\Suff_\Equ \text{ and }s'\in\Suff &\qedhere
 \end{align*}
\end{description} 
\end{proof}

\subsection{Correctness and termination}
We start by proving that \PLstar\ is correct. For this we need the following lemma.\\

\begin{Lemma}\label{lem:size-quotient}
Let $OT$ be an observation table. Then, $\card\cla{\RED}^{\Suff}_\Equ\leq \card\ccla{\Words}^\lmodel_\Equ$. In particular, if $OT$ is closed and consistent and \qpdfa\ is the quotient PDFA built from $OT$, then $\card \hSta\leq \card\ccla{\Words}^\lmodel_\Equ$.
\end{Lemma}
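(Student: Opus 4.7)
The plan is to chain two cardinality inequalities relating $\cla{\RED}^{\Suff}_\Equ$ and $\ccla{\Words}^{\lmodel}_\Equ$ through the intermediate quotient $\cla{\Words}^{\Suff}_\Equ$. The first inequality comes from $\RED \subseteq \Words$: the set $\cla{\RED}^{\Suff}_\Equ$ is by definition the image under $p\mapsto\cla{p}^{\Suff}_\Equ$ of $\RED$, and this is a subset of the full quotient $\cla{\Words}^{\Suff}_\Equ$. Hence $\card\cla{\RED}^{\Suff}_\Equ \leq \card\cla{\Words}^{\Suff}_\Equ$.

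For the second inequality, I would invoke Corollary~\ref{coro:finer}, which asserts that $\indist^{\lmodel}_\Equ$ is finer than $=^{\Suff}_\Equ$. This means that the correspondence $\ccla{\woru}^{\lmodel}_\Equ \mapsto \cla{\woru}^{\Suff}_\Equ$ is well defined (every $\indist^{\lmodel}_\Equ$-class is contained in a unique $=^{\Suff}_\Equ$-class) and surjective onto $\cla{\Words}^{\Suff}_\Equ$. Consequently $\card\cla{\Words}^{\Suff}_\Equ \leq \card\ccla{\Words}^{\lmodel}_\Equ$, and composing with the first bound yields $\card\cla{\RED}^{\Suff}_\Equ \leq \card\ccla{\Words}^{\lmodel}_\Equ$.

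The ``in particular'' clause is then immediate: whenever $OT$ is closed and consistent, the construction \BuildQPDFA\ sets $\hSta \eqdef \cla{\RED}^{\Suff}_\Equ$ by Def.~\ref{def:hSta}, so $\card\hSta = \card\cla{\RED}^{\Suff}_\Equ \leq \card\ccla{\Words}^{\lmodel}_\Equ$. The argument is essentially a two-line corollary of the relational machinery already developed, and I do not foresee any obstacle; the only subtlety worth noting is being explicit that $\cla{\RED}^{\Suff}_\Equ$ refers to the image in the full quotient (so the bound does not require $\RED$ to cover all equivalence classes) and that closedness/consistency are needed only for the second assertion, since they are what make $\hSta$ well defined as a set of $=^{\Suff}_\Equ$-classes indexed by \RED.
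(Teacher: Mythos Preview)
Your proposal is correct and follows essentially the same route as the paper: both chain the inclusion $\RED\subseteq\Words$ with Corollary~\ref{coro:finer} to obtain $\card\cla{\RED}^{\Suff}_\Equ \leq \card\cla{\Words}^{\Suff}_\Equ \leq \card\ccla{\Words}^\lmodel_\Equ$, and then invoke Definition~\ref{def:hSta} for the ``in particular'' clause. The only difference is presentational order.
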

\begin{proof}
We have
\begin{align*}
\card\ccla{\Words}^\lmodel_\Equ 
    &\geq \card\cla{\Words}^\Suff_\Equ  
        &\text{by Corollary~\ref{coro:finer}} \\
    &\geq \card\cla{\RED}^\Suff_\Equ
        &\text{by}\ \RED\subseteq\Words
\end{align*}
When $OT$ is closed and consistent, we have $\hSta=\cla{\RED}^\Suff_\Equ$ by Definition~\ref{def:hSta}, and therefore $\card\hSta\leq \card\ccla{\Words}^\lmodel_\Equ$.
\end{proof}

\begin{Proposition}\label{prop:plstar-correctness-realization}
For any equivalence \Equ, quotient PDFA \qpdfa, and language model \lmodel, if $\EQ(\qpdfa,\Equ)$ returns \textsc{True}, then for every realization \Aut\ of \qpdfa, $\ccla{\lmodel}_\Equ = \ccla{\lmodel_\Aut}_\Equ$.    
\end{Proposition}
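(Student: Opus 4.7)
The plan is to chain together two results that are already in place: the semantics of \EQ\ returning \textsc{True} (Def.~\ref{def:EQ}), and Proposition~\ref{prop:realization_of_qpdfa}, which tells us what the language model of any realization looks like modulo \Equ. The conclusion is then packaged using Corollary~\ref{cor:algebraic_quotient}.

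Concretely, I would first unfold the hypothesis: by Def.~\ref{def:EQ}, $\EQ(\qpdfa,\Equ)=\textsc{True}$ means that for every $\woru\in\Words$,
\[
\ccla{\lmodel(\woru)}_\Equ \;=\; \qpolicyW(\woru).
\]
Next, fix an arbitrary realization $\Aut$ of $\qpdfa$. Proposition~\ref{prop:realization_of_qpdfa} gives, for every $\woru\in\Words$,
\[
\ccla{\lmodel_\Aut(\woru)}_\Equ \;=\; \qpolicyW(\woru).
\]

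Combining the two equalities through the common term $\qpolicyW(\woru)$, we get $\ccla{\lmodel(\woru)}_\Equ = \ccla{\lmodel_\Aut(\woru)}_\Equ$ for every $\woru\in\Words$, i.e.\ $\lmodel(\woru)=_\Equ \lmodel_\Aut(\woru)$ for all $\woru$. By Def.~\ref{def:indist_lm} this is exactly $\lmodel \indist_\Equ \lmodel_\Aut$, and then Corollary~\ref{cor:algebraic_quotient} yields $\ccla{\lmodel}_\Equ = \ccla{\lmodel_\Aut}_\Equ$, as required.

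There is essentially no obstacle here: the proof is a one-line composition of Def.~\ref{def:EQ}, Proposition~\ref{prop:realization_of_qpdfa}, and Corollary~\ref{cor:algebraic_quotient}. The only thing to be careful about is to quantify over \emph{all} realizations $\Aut$ uniformly (which is automatic since Proposition~\ref{prop:realization_of_qpdfa} holds for every realization), and to note that the equality $\ccla{\lmodel}_\Equ = \ccla{\lmodel_\Aut}_\Equ$ in the conclusion is the equality of quotient structures from Def.~\ref{def:quotient_structure}, which is precisely what Corollary~\ref{cor:algebraic_quotient} delivers from the pointwise equivalence $\lmodel \indist_\Equ \lmodel_\Aut$.
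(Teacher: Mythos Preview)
Your proposal is correct and follows essentially the same route as the paper: unfold $\EQ(\qpdfa,\Equ)=\textsc{True}$ via Def.~\ref{def:EQ}, invoke Proposition~\ref{prop:realization_of_qpdfa} for the realization, conclude $\lmodel \indist_\Equ \lmodel_\Aut$ by Def.~\ref{def:indist_lm}, and finish with Corollary~\ref{cor:algebraic_quotient}.
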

\begin{proof}
Let \Aut\ be any realization of \qpdfa. 
Then, for all $\woru\in\Words$:
\begin{align*}
\ccla{\lmodel(\woru)}_\Equ 
    &= \qpolicyW(\woru)
        &\text{by Def.~\ref{def:EQ}} \\
    &= \ccla{\lmodel_\Aut(\woru)}_\Equ 
        &\text{by Prop.~\ref{prop:realization_of_qpdfa}}
\end{align*}
Def.~\ref{def:indist_lm} implies $\lmodel \indist_\Equ \lmodel_\Aut$.
Hence, by Corollary~\ref{cor:algebraic_quotient},
$\ccla{\lmodel}_\Equ = \ccla{\lmodel_\Aut}_\Equ$. \qedhere
\end{proof}

\begin{Proposition}\label{prop:plstar-correctness}
For any language model \lmodel, if \PLstar\ terminates, it computes a quotient PDFA isomorphic to $\ccla{\lmodel}_\Equ$.
\end{Proposition}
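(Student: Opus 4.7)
The plan is to combine Proposition~\ref{prop:plstar-correctness-realization}, Proposition~\ref{prop:qbijection}, and a cardinality argument based on Lemma~\ref{lem:size-quotient}, in order to identify $\qpdfa$ with the quotient of any one of its realizations, and hence with $\ccla{\lmodel}_\Equ$.

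First, I would observe that at termination, $\EQ(\qpdfa,\Equ)$ has answered \True, so by Proposition~\ref{prop:plstar-correctness-realization} any realization $\Aut$ of $\qpdfa$ satisfies $\lmodel\indist_\Equ\lmodel_\Aut$, and by Proposition~\ref{prop:qbijection} we have $\ccla{\lmodel_\Aut}_\Equ\cong \ccla{\Aut}_\Equ$, where $\ccla{\Aut}_\Equ$ has state set $\qSta_\Aut$ of $\indist^\Aut_\Equ$-equivalence classes of $\Sta_\Aut=\hSta$.

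The main point to establish is that the quotient map $\ccla{\cdot}^\Aut_\Equ:\Sta_\Aut\to\qSta_\Aut$ is already a bijection, i.e.\ that $\Aut$ has no two $\indist^\Aut_\Equ$-equivalent states. I would argue by a cardinality sandwich: on one side $\card\qSta_\Aut\leq \card\Sta_\Aut=\card\hSta$ because $\qSta_\Aut$ is a quotient of $\Sta_\Aut$; on the other side, Lemma~\ref{lem:size-quotient} gives $\card\hSta\leq\card\ccla{\Words}^\lmodel_\Equ$, and since $\ccla{\lmodel}_\Equ\cong\ccla{\Aut}_\Equ$ we have $\card\ccla{\Words}^\lmodel_\Equ=\card\qSta_\Aut$ (in particular finite). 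Chaining gives
\[
\card\qSta_\Aut\leq\card\hSta\leq\card\ccla{\Words}^\lmodel_\Equ=\card\qSta_\Aut,
\]
so all three are equal, and the surjective quotient map between finite sets of the same cardinality is a bijection.

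Finally, I would check that this bijection $\sta\leftrightarrow \ccla{\sta}^\Aut_\Equ$ intertwines the whole PDFA structure: the initial state $\hstaI=\staI$ is sent to $\ccla{\staI}^\Aut_\Equ=\qstaI$; for every $\symb\in\Symb$, $\htra(\sta,\symb)=\tra(\sta,\symb)$ is sent to $\ccla{\tra(\sta,\symb)}^\Aut_\Equ=\qtra(\ccla{\sta}^\Aut_\Equ,\symb)$ by Definition~\ref{def:quotient_Q}; and the output agrees because $\Aut$ being a realization gives $\hpolicy(\sta)=\ccla{\policy(\sta)}_\Equ=\qpolicy(\ccla{\sta}^\Aut_\Equ)$. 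Hence $\qpdfa\cong\ccla{\Aut}_\Equ\cong\ccla{\lmodel_\Aut}_\Equ=\ccla{\lmodel}_\Equ$ by Corollary~\ref{cor:algebraic_quotient}. The non-routine step is the cardinality sandwich; everything else is a bookkeeping of definitions already proved earlier.
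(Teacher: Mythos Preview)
Your proposal is correct and follows essentially the same route as the paper: invoke Proposition~\ref{prop:plstar-correctness-realization} and Proposition~\ref{prop:qbijection}, then run the cardinality sandwich $\card\qSta_\Aut\leq\card\hSta\leq\card\ccla{\Words}^\lmodel_\Equ=\card\qSta_\Aut$ (using Lemma~\ref{lem:size-quotient}) to force the quotient map $\Sta_\Aut\to\qSta_\Aut$ to be a bijection, and conclude $\qpdfa\cong\ccla{\Aut}_\Equ\cong\ccla{\lmodel}_\Equ$. The only cosmetic difference is that the paper cites Proposition~\ref{prop:congruence_stasim_minimal} for the inequality $\card\qSta_\Aut\leq\card\Sta_\Aut$, whereas you (more directly) just note that a quotient has no more elements than the original set.
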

\begin{proof}
If \PLstar\ terminates with \qpdfa, it means $\EQ(\qpdfa,\Equ)$ returns \textsc{True}.
Then, Proposition~\ref{prop:plstar-correctness-realization} implies  $\ccla{\lmodel}_\Equ = \ccla{\lmodel_\Aut}_\Equ$.
By Proposition~\ref{prop:qbijection}, it follows that $\ccla{\lmodel_\Aut}_\Equ$ is isomorphic to $\ccla{\Aut}_\Equ$.
Then
\begin{align*}
\card\ccla{\Words}^\lmodel_\Equ & =\card\hSta_\Aut & \text{by the above}\\
& \leq \card\Sta_\Aut & \text{by Proposition~\ref{prop:congruence_stasim_minimal}}\\
& = \card\hSta & \text{by definition of realization}\\
& \leq \card\ccla{\Words}^\lmodel_\Equ & \text{by Lemma~\ref{lem:size-quotient}}
\end{align*}
Hence:
\begin{equation}\label{eq:equal_cardinal}
\card\hSta_\Aut = \card\Sta_\Aut=\card\hSta=\card\ccla{\Words}^\lmodel_\Equ
\end{equation} 
Therefore $\qpdfa$ is isomorphic to $\ccla{\Aut}_\Equ$.
Then, \qpdfa\ is isomorphic to $\ccla{\lmodel}_\Equ$.
\qedhere
\end{proof}

\begin{Corollary}\label{cor:minimal}
If $\qpdfa$ is the resulting quotient PDFA returned by \PLstar, then any realization of $\qpdfa$ is a minimal PDFA $\indist_\Equ$-equivalent to $\lmodel$.
\end{Corollary}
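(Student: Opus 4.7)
The plan is to reduce the claim to the minimality statement already proved in Proposition~\ref{prop:congruence_stasim_minimal}. Let $\Aut$ be any realization of the quotient PDFA $\qpdfa$ returned by \PLstar. The two things to verify are (a) that $\lmodel_\Aut \indist_\Equ \lmodel$, and (b) that no PDFA $\indist_\Equ$-equivalent to $\lmodel$ can have strictly fewer states than $\Aut$.

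For (a), I would appeal directly to Proposition~\ref{prop:plstar-correctness-realization}: termination of \PLstar\ means that the final call to \EQ\ returned \True, and so $\ccla{\lmodel}_\Equ = \ccla{\lmodel_\Aut}_\Equ$. By Corollary~\ref{cor:algebraic_quotient} this is equivalent to $\lmodel_\Aut \indist_\Equ \lmodel$.

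For (b), the key observation is that the proof of Proposition~\ref{prop:plstar-correctness} already yields equation~(\ref{eq:equal_cardinal}), namely
\[
\card\Sta_\Aut \;=\; \card\hSta_\Aut,
\]
so $\Aut$ has the same number of states as its own quotient $\ccla{\Aut}_\Equ$. Applying Proposition~\ref{prop:congruence_stasim_minimal} to $\Aut$ gives that every PDFA $\Comp$ with $\lmodel_\Comp \indist_\Equ \lmodel_\Aut$ satisfies $\card\Sta_\Comp \geq \card\hSta_\Aut = \card\Sta_\Aut$. Combining with (a) and the transitivity of $\indist_\Equ$, any PDFA $\Comp$ with $\lmodel_\Comp \indist_\Equ \lmodel$ also satisfies $\lmodel_\Comp \indist_\Equ \lmodel_\Aut$, hence $\card\Sta_\Comp \geq \card\Sta_\Aut$, which is exactly minimality of $\Aut$ among PDFA $\indist_\Equ$-equivalent to $\lmodel$.

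There is no real obstacle: the work was done in Proposition~\ref{prop:congruence_stasim_minimal} and in the chain of cardinality equalities inside Proposition~\ref{prop:plstar-correctness}. The only subtlety is making sure to use the bridge $\lmodel_\Aut \indist_\Equ \lmodel$ together with transitivity of $\indist_\Equ$, so that minimality phrased in terms of PDFA $\indist_\Equ$-equivalent to the PDFA $\Aut$ (as in Proposition~\ref{prop:congruence_stasim_minimal}) transfers to PDFA $\indist_\Equ$-equivalent to the language model $\lmodel$.
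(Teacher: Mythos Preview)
Your proof is correct and follows essentially the same route as the paper's: both invoke Proposition~\ref{prop:plstar-correctness-realization} for $\lmodel_\Aut \indist_\Equ \lmodel$, extract the equality $\card\Sta_\Aut = \card\hSta_\Aut$ from equation~(\ref{eq:equal_cardinal}) in the proof of Proposition~\ref{prop:plstar-correctness}, and then use transitivity of $\indist_\Equ$ together with Proposition~\ref{prop:congruence_stasim_minimal} to conclude minimality. Your write-up is in fact slightly more explicit about the role of transitivity and about invoking Corollary~\ref{cor:algebraic_quotient} to pass from equality of quotients to $\indist_\Equ$-equivalence.
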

\begin{proof}
Let $\Aut$ be a realization of $\qpdfa$. Then, 
$\lmodel \indist_\Equ \lmodel_\Aut$ by Prop.~\ref{prop:plstar-correctness-realization}, and 
$\ccla{\Aut}_\Equ$ is isomorphic to \qpdfa\ and $\ccla{\lmodel}_\Equ$,
with
$\card\Sta_\Aut=\card\hSta$ by Eq.~\ref{eq:equal_cardinal} in the proof of Prop.~\ref{prop:plstar-correctness}.
Moreover, any PDFA $\AutB$ such that $\lmodel_\AutB \indist_\Equ\lmodel$ also satisfies $\lmodel_\AutB \indist_\Equ \lmodel_\Aut$. 
Then, by Prop.~\ref{prop:congruence_stasim_minimal},
$\card\Sta_\Aut\leq \card\Sta_\AutB$.
\end{proof}

~\\
To prove termination we need to show some auxiliary results.\\

\begin{Lemma}\label{lem:close_terminates}
Let $OT$ be a non-closed observation table and $OT'$ be the result of the procedure given in \ref{def:Closed}. Then
\[
\card\cla{\RED'}^{\Suff'}_\Equ > \card\cla{\RED}^\Suff_\Equ
\]
\end{Lemma}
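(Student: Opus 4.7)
The plan is to exploit the fact that the Close procedure described in \ref{def:Closed} does not touch $\Suff$, so $\Suff'=\Suff$, meaning both sides of the inequality concern the \emph{same} equivalence $=^\Suff_\Equ$ restricted to two different index sets, namely $\RED$ and $\RED'=\RED\cup\{p'\}$. The entire argument therefore reduces to showing that the single element $p'$ added by \Close\ contributes a brand-new equivalence class.

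First I would unfold what ``$OT$ is not closed'' together with the definition of \Close\ gives us: by the negation of \eqref{def:closed}, the element $p'\in\BLUE$ selected by \Close\ is characterized by $p'\neq^\Suff_\Equ p$ for every $p\in\RED$. In particular $\cla{p'}^\Suff_\Equ$ is a class of $=^\Suff_\Equ$ that contains no element of $\RED$, hence it is not an element of $\cla{\RED}^\Suff_\Equ$.

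Next I would compare the two quotient sets. Since $\RED\subseteq\RED'$ and $\Suff'=\Suff$, every class $\cla{p}^\Suff_\Equ$ with $p\in\RED$ remains a class of $\cla{\RED'}^\Suff_\Equ$, giving the inclusion $\cla{\RED}^\Suff_\Equ\subseteq\cla{\RED'}^{\Suff'}_\Equ$. By the previous paragraph, $\cla{p'}^\Suff_\Equ\in\cla{\RED'}^{\Suff'}_\Equ\setminus\cla{\RED}^\Suff_\Equ$, so this inclusion is strict, and hence
\[
\card\cla{\RED'}^{\Suff'}_\Equ\geq \card\cla{\RED}^\Suff_\Equ+1>\card\cla{\RED}^\Suff_\Equ,
\]
which is the claim. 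There is no real obstacle here: the only subtlety is remembering that the Close procedure in \eqref{def:Closed} leaves $\Suff$ unchanged, so the equivalence relation $=^\Suff_\Equ$ used to form the quotient is identical before and after, and hence classes cannot merge or split when passing from $\RED$ to $\RED'$; they can only be augmented by the class of the newly promoted witness $p'$.
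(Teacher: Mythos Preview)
Your proof is correct and follows essentially the same approach as the paper: both observe that $\Suff'=\Suff$, that the element $p'$ selected by \Close\ satisfies $p'\neq^\Suff_\Equ p$ for all $p\in\RED$, and hence that adding $p'$ to $\RED$ introduces a genuinely new class in the quotient. Your write-up is slightly more explicit about why the inclusion $\cla{\RED}^\Suff_\Equ\subseteq\cla{\RED'}^{\Suff'}_\Equ$ is strict, but the argument is the same.
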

\begin{proof}
Since $OT$ is not closed we have that $\cla{\BLUE}^\Suff_\Equ\not\subseteq\cla{\RED}^\Suff_\Equ$. Procedure \ref{def:Closed} finds $p'\in \BLUE\setminus\RED$ such that $p'\neq^{\Suff}_\Equ p$ for all $p\in \RED$.
Since $p'\in\RED'\setminus\RED$ and $\Suff'=\Suff$, we have $\card\cla{\RED'}^{\Suff'}_\Equ>\card\cla{\RED}^{\Suff}_\Equ$.
\end{proof}

\begin{Lemma}\label{lem:consistent_terminates}
Let $OT$ be a non-consistent observation table and $OT'$ be the result of procedure given in \ref{def:Consistent}. Then
\[
\card\cla{\RED'}^{\Suff'}_\Equ > \card\cla{\RED}^\Suff_\Equ
\]
\end{Lemma}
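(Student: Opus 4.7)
The plan is to leverage Proposition~\ref{prop:equal_equ} to get the weak inequality $\card\cla{\RED'}^{\Suff'}_\Equ \geq \card\cla{\RED}^{\Suff}_\Equ$, and then to exhibit a single class of $\cla{\RED}^{\Suff}_\Equ$ that splits under refinement, yielding the strict increase. Note that $\RED' = \RED$ and $\Suff \subseteq \Suff'$, so the only relation that moves is $=_\Equ$ relative to the enlarged suffix set.

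First, I would observe that since $\Suff \subseteq \Suff'$, Proposition~\ref{prop:equal_equ} tells us that $=_\Equ^{\Suff'}$ is finer than $=_\Equ^{\Suff}$. Consequently, the partition of \RED\ induced by $=_\Equ^{\Suff'}$ refines the one induced by $=_\Equ^{\Suff}$: every class of the finer partition is contained in a unique class of the coarser one. This already yields $\card\cla{\RED}^{\Suff'}_\Equ \geq \card\cla{\RED}^{\Suff}_\Equ$. Since $\RED'=\RED$, this is exactly $\card\cla{\RED'}^{\Suff'}_\Equ \geq \card\cla{\RED}^{\Suff}_\Equ$.

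Next, I would unpack the witness of non-consistency given by procedure~\ref{def:Consistent}: there exist $p,p'\in\RED$ and $\symb\in\Symb$, $s\in\Suff$ such that $p =_\Equ^{\Suff} p'$ while $\lmodel(p\symb s) \neq_\Equ \lmodel(p'\symb s)$. In the coarser partition, $p$ and $p'$ belong to the common class $\cla{p}^{\Suff}_\Equ$. However, since $\symb s \in \Suff'$ by the definition of \Consistent, the pair $(p,p')$ is separated by $=_\Equ^{\Suff'}$: applying Def.~\ref{def:s_suff_str} with $\worw = \symb s \in \Suff'$ shows that $p\neq_\Equ^{\Suff'} p'$. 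Thus the class $\cla{p}^{\Suff}_\Equ$ contains two elements that lie in different classes of $\cla{\RED}^{\Suff'}_\Equ$, so it splits into at least two pieces.

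The main (and only) remaining step is to turn this single split into a strict inequality on cardinals. Because the finer partition refines the coarser one, we can group the classes of $\cla{\RED}^{\Suff'}_\Equ$ according to which class of $\cla{\RED}^{\Suff}_\Equ$ they sit inside; each coarse class contributes at least one fine class, and the coarse class containing $p$ and $p'$ contributes at least two. Summing over all coarse classes then yields $\card\cla{\RED'}^{\Suff'}_\Equ > \card\cla{\RED}^{\Suff}_\Equ$, which is the required strict inequality. No real obstacle arises here; the only care needed is in checking that $p,p'\in\RED=\RED'$ so the splitting we identified genuinely occurs inside $\cla{\RED'}^{\Suff'}_\Equ$ rather than only in $\cla{\Words}^{\Suff'}_\Equ$.
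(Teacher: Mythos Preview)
Your proof is correct and follows essentially the same approach as the paper: both use Proposition~\ref{prop:equal_equ} with $\Suff\subset\Suff'$ to get that $=^{\Suff'}_\Equ$ refines $=^{\Suff}_\Equ$, then use the inconsistency witness $p,p'\in\RED$ with $\symb s\in\Suff'$ to show the refinement is strict on $\RED=\RED'$. The paper is terser, simply asserting ``strictly finer'' where you spell out the class-splitting and summation argument, but the logical content is the same.
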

\begin{proof}
Procedure \ref{def:Consistent} finds $p, p' \in \RED$ such that $p =_\Equ^{\Suff} p'$ but $p \symb s \neq_\Equ p' \symb s$ for some $\symb\in\Symb$ and $s \in \Suff$.
Since $\Suff\subset \Suff'$, then by Proposition~\ref{prop:equal_equ} we have that $=^{\Suff'}_\Equ$ is finer than $=^{\Suff}_\Equ$ over all $\Words$. Since $\symb s\in\Suff'\setminus\Suff$, we have $\cla{p}^{\Suff'}_\Equ\neq\cla{p'}^{\Suff'}_\Equ$, which implies $=^{\Suff'}_\Equ$ is strictly finer than $=^{\Suff}_\Equ$ over $\RED$. Therefore $\card\cla{\RED'}^{\Suff'}_\Equ=\card\cla{\RED}^{\Suff'}_\Equ>\card\cla{\RED}^{\Suff}_\Equ$.
\end{proof}

\begin{Lemma}\label{lemma:plstar-quant-E}
Let $OT_i$ be closed and consistent, $\ce\in\Words$ a counterexample, and $OT_{i+1}$ the new closed and consistent table obtained by the algorithm in the next iteration. 
Then
\[
\card\cla{\RED_i}^{\Suff_i}_\Equ < \card\cla{\RED_{i+1}}^{\Suff_{i+1}}_\Equ.
\]
\end{Lemma}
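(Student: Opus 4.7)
The plan is to split the transition from $OT_i$ to $OT_{i+1}$ into the \Update\ phase followed by the (possibly empty) sequence of \Close\ and \Consistent\ steps. Let $OT'$ denote the intermediate table produced by \Update\ on the counterexample $\ce$, so that $\RED' = \RED_i \cup \prefixes(\ce)$ and $\Suff' = \Suff_i$. Since $\RED' \supseteq \RED_i$ and $\Suff'=\Suff_i$, we immediately have $\card\cla{\RED'}^{\Suff'}_\Equ \geq \card\cla{\RED_i}^{\Suff_i}_\Equ$. By Lemmas~\ref{lem:close_terminates} and~\ref{lem:consistent_terminates}, every subsequent invocation of \Close\ or \Consistent\ strictly increases the class count, so whenever at least one such invocation occurs on the way to $OT_{i+1}$ the desired strict inequality follows at once.

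The hard case, and the main obstacle, is when $OT'$ is already closed and consistent and thus $OT_{i+1}=OT'$, in particular $\Suff_{i+1}=\Suff_i$. I will argue by contradiction, supposing $\card\cla{\RED_{i+1}}^{\Suff_{i+1}}_\Equ = \card\cla{\RED_i}^{\Suff_i}_\Equ$. Combined with $\RED_i\subseteq\RED_{i+1}$ and $\Suff_{i+1}=\Suff_i$, this forces every element of $\RED_{i+1}$ to be $=_\Equ^{\Suff_i}$-equivalent to some element of $\RED_i$, so the two sets of classes $\cla{\RED_i}^{\Suff_i}_\Equ$ and $\cla{\RED_{i+1}}^{\Suff_{i+1}}_\Equ$ coincide.

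A direct inspection of Definitions~\ref{def:hSta}--\ref{def:pi_hat} then reveals that the quotient PDFAs $\qpdfa_i$ and $\qpdfa_{i+1}$ produced by \BuildQPDFA\ from $OT_i$ and $OT_{i+1}$, respectively, have identical state sets, initial states, transition functions (both given by $\cla{p}^{\Suff_i}_\Equ \mapsto \cla{p\symb}^{\Suff_i}_\Equ$), and output functions (both assigning $\ccla{\lmodel(p)}_\Equ$ to the class of $p\in\RED_i$). In particular $\hpolicyW_{i+1}(\ce)=\hpolicyW_i(\ce)$. However, since $\ce\in\prefixes(\ce)\subseteq\RED_{i+1}$ and $\emptyW\in\Suff_{i+1}$, Lemma~\ref{lemma:plstar-quant-B} applied to $OT_{i+1}$ yields $\hpolicyW_{i+1}(\ce)=\ccla{\lmodel(\ce)}_\Equ$, whereas by the defining property of the counterexample (Def.~\ref{def:EQ}) applied to $\qpdfa_i$ we have $\hpolicyW_i(\ce)\neq \ccla{\lmodel(\ce)}_\Equ$. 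This contradiction rules out the supposed equality and completes the proof.
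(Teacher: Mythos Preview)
Your proof is correct and follows essentially the same approach as the paper: both reduce to the contradiction obtained by applying Lemma~\ref{lemma:plstar-quant-B} to the updated table (yielding $\hpolicyW_{i+1}(\ce)=\ccla{\lmodel(\ce)}_\Equ$) against the counterexample property $\hpolicyW_i(\ce)\neq\ccla{\lmodel(\ce)}_\Equ$ once the two hypotheses are seen to coincide. The only cosmetic difference is where the case split is placed---you split immediately after \Update\ (some \Close/\Consistent\ call occurs versus none), whereas the paper splits after \Update\ and \Close\ (strict growth of the class set versus equality, then shows \Consistent\ must fire)---but the substance is the same.
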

\begin{proof}
Let $OT'_i$ be the resulting table after \Update\ (Procedure \ref{def:Update}) and \Close.
%
By Lemma~\ref{lemma:plstar-quant-B}, $\ce\not\in\RED_i$. Therefore, $\RED_i\subset \RED_i'$ strictly. Also $\Suff_i'=\Suff_i$. We have to consider the following two cases.
\begin{description}
\item[Case 1 -- $\cla{\RED_i}^{\Suff_i}_\Equ \subset \cla{\RED'_i}^{\Suff'_i}_\Equ$ strictly.]
In this case we have:
\begin{align*}
    \card\cla{\RED_i}^{\Suff_i}_\Equ 
        &< \card\cla{\RED'_i}^{\Suff_i}_\Equ
            & \text{by case hypothesis}\\
        &\leq \card\cla{\RED_{i+1}}^{\Suff_i}_\Equ
            & \text{by}\ \RED'_i \subseteq \RED_{i+1}   \\ 
        &\leq \card\cla{\RED_{i+1}}^{\Suff_{i+1}}_\Equ
            & \text{by Prop.~\ref{prop:equal_equ},}\ 
                \Suff_i \subseteq \Suff_{i+1}  
\end{align*}
Hence,
$\card\cla{\RED_i}^{\Suff_i}_\Equ < \card\cla{\RED_{i+1}}^{\Suff_{i+1}}_\Equ$. \\

\item[Case 2 -- $\cla{\RED'_i}^{\Suff'_i}_\Equ = \cla{\RED_i}^{\Suff_i}_\Equ$.]
Let us first show that $OT'_i$ is not consistent.
Suppose on the contrary that $OT'_i$ is consistent. 
Since $OT'_i$ is closed, we have that $OT_{i+1} = OT'_i$.
Then, $\hSta_{i+1} = \hSta_i$ and so $\hpolicyW_{i+1} = \hpolicyW_i$.
Since $\ce\in\RED_{i+1}$ and $\emptyW\in\Suff_{i+1}$, 
Lemma~\ref{lemma:plstar-quant-B} implies 
$\hpolicyW_{i+1}(\ce) = \ccla{\lmodel(\ce)}_\Equ$.
Since $\ce$ is a counterexample, we have that $\hpolicyW_i(\ce) \neq \ccla{\lmodel(\ce)}_\Equ$,
which is a contradiction.
    
Let $OT''_i=\Consistent(OT'_i,\Equ)$, thus
\[
\begin{aligned}
\card\cla{\RED_i}^{\Suff_i}_\Equ 
    &= \card\cla{\RED'_i}^{\Suff'_i}_\Equ 
        & \text{by case hypothesis}\\
    &< \card\cla{\RED''_i}^{\Suff''_i}_\Equ 
        & \text{by Lemma~\ref{lem:consistent_terminates}}\\
    &\leq \card\cla{\RED_{i+1}}^{\Suff_{i+1}}_\Equ 
        & \text{by Prop.~\ref{prop:equal_equ},}\ 
                \Suff''_i \subseteq \Suff_{i+1},
                \RED''_i \subseteq \RED_{i+1}
\end{aligned}
\]
\end{description} 
In both cases we conclude $\card\cla{\RED_i}^{\Suff_i}_\Equ < \card\cla{\RED_{i+1}}^{\Suff_{i+1}}_\Equ$.
\end{proof}

\begin{Corollary}\label{cor:whileloop}
If $\lmodel$ is $\Equ$-regular then the while loop from lines \ref{line:begin_while} -- \ref{line:end_while} terminates.
\end{Corollary}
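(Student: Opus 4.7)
The plan is to combine the strict-monotonicity lemmas for \Close\ and \Consistent\ with the finite upper bound from $\Equ$-regularity. Concretely, since $\lmodel$ is $\Equ$-regular, Definition~\ref{def:regular} gives that $\card\ccla{\Words}^\lmodel_\Equ$ is a finite number, call it $N$. By Lemma~\ref{lem:size-quotient}, at every point in the execution of the algorithm the observation table satisfies $\card\cla{\RED}^{\Suff}_\Equ \leq N$.

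Next, I would argue that every pass through the body of the while loop strictly increases the quantity $\card\cla{\RED}^{\Suff}_\Equ$. Each iteration executes either \Close\ (when $OT$ is not closed) or \Consistent\ (when $OT$ is not consistent). Lemma~\ref{lem:close_terminates} shows that the former strictly increases $\card\cla{\RED}^{\Suff}_\Equ$, and Lemma~\ref{lem:consistent_terminates} shows the same for the latter. Therefore the sequence of values of $\card\cla{\RED}^{\Suff}_\Equ$ along iterations is strictly increasing.

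Finally, a strictly increasing sequence of natural numbers bounded above by $N$ must be finite, and of length at most $N$. Hence the while loop cannot run forever and must terminate. There is no real obstacle here: the work was done in Lemmas~\ref{lem:size-quotient}, \ref{lem:close_terminates}, and \ref{lem:consistent_terminates}, and this corollary is just the straightforward combination of a monotone variant with a finite upper bound, in the spirit of the classical termination argument for \Lstar.
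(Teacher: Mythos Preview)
Your proposal is correct and follows essentially the same approach as the paper: invoke Lemmas~\ref{lem:close_terminates} and~\ref{lem:consistent_terminates} to show that $\card\cla{\RED}^{\Suff}_\Equ$ strictly increases at each step, use Lemma~\ref{lem:size-quotient} together with $\Equ$-regularity (Definition~\ref{def:regular}) to bound it by the finite quantity $\card\ccla{\Words}^\lmodel_\Equ$, and conclude termination. The only cosmetic difference is that the paper phrases it more tersely, but the argument is the same.
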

\begin{proof}
By Lemmas \ref{lem:close_terminates} and \ref{lem:consistent_terminates} imply that $\card\cla{\RED}^\Suff_\Equ$ strictly increases. By Lemma \ref{lem:size-quotient}, we have that $\card\cla{\RED}^\Suff_\Equ\leq \card\ccla{\Words}^\lmodel_\Equ$ which is finite by  Definition~\ref{def:regular}. Therefore \Close, \Consistent, and the while loop from lines \ref{line:begin_while} -- \ref{line:end_while} terminate.
\end{proof}

\begin{Proposition}\label{prop:pLstar-termination}
If $\lmodel$ is $\Equ$-regular then \PLstar\ terminates.   
\end{Proposition}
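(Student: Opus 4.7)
The plan is to combine Corollary~\ref{cor:whileloop} (termination of the inner loop) with Lemma~\ref{lemma:plstar-quant-E} (strict progress on every counterexample) to bound the number of outer iterations by $\card\ccla{\Words}^\lmodel_\Equ$, which is finite by $\Equ$-regularity and Definition~\ref{def:regular}.

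More concretely, I would proceed as follows. First, observe that at the start of each outer iteration $i$, the call to the inner while loop (lines \ref{line:begin_while}--\ref{line:end_while}) terminates by Corollary~\ref{cor:whileloop}, producing a closed and consistent table $OT_i$ with associated quotient PDFA $\qpdfa_i$ built on line \ref{line:build_pdfa}. Then \EQ\ is invoked on line \ref{line:EQ}. If it returns \True, the algorithm returns $\qpdfa_i$ and we are done. Otherwise \EQ\ returns a counterexample $\ce$, \Update\ is applied, and the loop is re-entered, producing a new closed and consistent table $OT_{i+1}$ in the next iteration.

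Next, by Lemma~\ref{lemma:plstar-quant-E}, whenever the outer loop performs such an update step, we have
\[
\card\cla{\RED_i}^{\Suff_i}_\Equ < \card\cla{\RED_{i+1}}^{\Suff_{i+1}}_\Equ.
\]
Hence the sequence $\card\cla{\RED_i}^{\Suff_i}_\Equ$ is strictly increasing in $i$. On the other hand, Lemma~\ref{lem:size-quotient} gives the uniform upper bound $\card\cla{\RED_i}^{\Suff_i}_\Equ \leq \card\ccla{\Words}^\lmodel_\Equ$, and by hypothesis together with Definition~\ref{def:regular}, $\card\ccla{\Words}^\lmodel_\Equ$ is finite.

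Therefore the number of iterations in which \EQ\ returns a counterexample is bounded by $\card\ccla{\Words}^\lmodel_\Equ$, so after finitely many outer iterations \EQ\ must return \True\ and \PLstar\ terminates, returning the last quotient PDFA. There is no real obstacle here: all the technical content has been absorbed into the preceding lemmas, and the proof is just a packaging of the strict-monotonicity argument against the finite upper bound.
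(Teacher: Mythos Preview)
Your proof is correct and follows essentially the same approach as the paper: invoke Corollary~\ref{cor:whileloop} for termination of the inner loop, use Lemma~\ref{lemma:plstar-quant-E} to show strict growth of $\card\cla{\RED_i}^{\Suff_i}_\Equ$ at each counterexample, and bound this quantity by the finite $\card\ccla{\Words}^\lmodel_\Equ$ via Lemma~\ref{lem:size-quotient} and Definition~\ref{def:regular}. Your write-up is in fact slightly more explicit than the paper's about the outer-loop structure, but the argument is identical.
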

\begin{proof}
Let $\qpdfa_i$ be the quotient PDFA obtained at the $i$-th iteration of the main loop of Algorithm~\ref{alg:pl_star}.
Suppose $\EQ(\qpdfa_i, \Equ)$ returns a counterexample which is used to update the observation table. 
In this case, Corollary~\ref{cor:whileloop} implies the while loop in lines \ref{line:begin_while}-\ref{line:end_while} terminates and results in a closed and consistent observation table $OT_{i+1}$. Then, by Lemma~\ref{lemma:plstar-quant-E} the sequence $\left\{\card\cla{\RED_i}^{\Suff_{i}}_\Equ\right\}_i$ is strictly increasing. By Lemma~\ref{lem:size-quotient} it is bounded by $\card\ccla{\Words}^\lmodel_\Equ$, and therefore it must be finite.
Hence, \PLstar\ terminates.
\end{proof}

\begin{Theorem}\label{them:plstar-correctness_and_termination}
If $\lmodel$ is $\Equ$-regular then \PLstar\ terminates and computes a quotient PDFA isomorphic to $\ccla{\lmodel}_\Equ$.
\end{Theorem}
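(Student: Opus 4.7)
The plan is short because this theorem is essentially the conjunction of two results already established in this section. I would simply chain Proposition~\ref{prop:pLstar-termination} and Proposition~\ref{prop:plstar-correctness}: the first gives termination from the hypothesis of $\Equ$-regularity, and the second gives the isomorphism conclusion from termination. No new construction, induction, or estimate is needed.

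More concretely, first I would note that the hypothesis $\lmodel$ is $\Equ$-regular is, by Definition~\ref{def:regular}, precisely the finiteness of $\ccla{\lmodel}_\Equ$. This is exactly the assumption required by Proposition~\ref{prop:pLstar-termination}, so \PLstar\ halts on \lmodel. Second, Proposition~\ref{prop:plstar-correctness} takes as hypothesis only that \PLstar\ terminates (with no regularity assumption on \lmodel) and concludes that the output quotient PDFA is isomorphic to $\ccla{\lmodel}_\Equ$. Applying it to the terminating run produced by the first step yields the desired conclusion.

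I do not expect any obstacle here: the two propositions are stated in compatible terms, both over the same fixed language model \lmodel\ and equivalence \Equ\ used by \PLstar, and the isomorphism in Proposition~\ref{prop:plstar-correctness} is exactly the one claimed by the theorem. The only thing worth emphasising in the write-up is the logical order: one must establish termination first (otherwise Proposition~\ref{prop:plstar-correctness} cannot be invoked, since its statement is conditional on \PLstar\ terminating), and only then read off the correctness conclusion. A one-sentence proof citing both propositions in that order is therefore sufficient.
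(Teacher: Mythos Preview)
Your proposal is correct and matches the paper's own proof, which is a one-line citation of Proposition~\ref{prop:plstar-correctness} and Proposition~\ref{prop:pLstar-termination}. Your added remark about the logical order (termination first, then correctness) is a sensible clarification but not strictly required.
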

\begin{proof}
By Proposition~\ref{prop:plstar-correctness} and Proposition~\ref{prop:pLstar-termination}.    
\end{proof}

\subsection{PDFA \Equ-recognizability}

\begin{Definition}\label{def:E_recongizable}
Given an equivalence \Equ, we say a language model \lmodel\ is PDFA \Equ-\emph{recognizable} if there exists a PDFA \Aut\ such that $\lmodel \indist_\Equ \lmodel_\Aut$.   
\end{Definition}

~\\
For any equivalence \Equ, \Equ-regularity and PDFA \Equ-recognizability coincide.\\

\begin{Theorem}\label{thm:E_regular_recognizable}
For every equivalence \Equ\ and language model \lmodel, \lmodel\ is \Equ-regular if and only if \lmodel\ is PDFA \Equ-recognizable.   
\end{Theorem}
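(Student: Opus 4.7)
The plan is to prove the two implications separately, using the machinery built up in Sections \ref{sec:lmodels} and \ref{sec:pdfa}.

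For the direction \Equ-regular $\implies$ PDFA \Equ-recognizable, the idea is to construct a PDFA that realizes the quotient structure $\ccla{\lmodel}_\Equ$. Since $\lmodel$ is \Equ-regular, by Definition~\ref{def:regular} the set $\ccla{\Words}^\lmodel_\Equ$ is finite. I would define a PDFA $\Aut=(\Sta,\staI,\tra,\policy)$ by letting $\Sta\eqdef\ccla{\Words}^\lmodel_\Equ$, $\staI\eqdef\ccla{\emptyW}^\lmodel_\Equ$, $\tra\eqdef\qfunc$ (as in Def.~\ref{def:qfunc}), and, for each class $\ccla{\woru}^\lmodel_\Equ$, choose an arbitrary representative $\pdist_{\ccla{\woru}^\lmodel_\Equ}$ from the nonempty equivalence class $\qlmodel(\ccla{\woru}^\lmodel_\Equ)=\ccla{\lmodel(\woru)}_\Equ$ and set $\policy(\ccla{\woru}^\lmodel_\Equ)\eqdef\pdist_{\ccla{\woru}^\lmodel_\Equ}$. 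This is well defined (by Def.~\ref{def:lmodeldot}, the class does not depend on the chosen representative of $\ccla{\woru}^\lmodel_\Equ$), and all four components are total.

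To conclude the forward direction, I would show $\lmodel_\Aut\indist_\Equ \lmodel$. For every $\worv\in\Words$, a straightforward induction on $|\worv|$ using Def.~\ref{def:traW} and Def.~\ref{def:qfunc} gives $\traW(\worv)=\qfuncW(\worv)$, and then Proposition~\ref{prop:qfuncW} yields $\traW(\worv)=\ccla{\worv}^\lmodel_\Equ$. By construction $\policy(\ccla{\worv}^\lmodel_\Equ)\in\ccla{\lmodel(\worv)}_\Equ$, so by Def.~\ref{def:Aut} and Def.~\ref{def:policyW} we have $\lmodel_\Aut(\worv)=\policy(\traW(\worv))=_\Equ\lmodel(\worv)$. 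Hence, by Def.~\ref{def:indist_lm}, $\lmodel_\Aut\indist_\Equ\lmodel$, so $\lmodel$ is PDFA \Equ-recognizable in the sense of Def.~\ref{def:E_recongizable}.

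For the converse direction, suppose $\lmodel\indist_\Equ \lmodel_\Aut$ for some PDFA $\Aut=(\Sta,\staI,\tra,\policy)$. By Corollary~\ref{cor:algebraic_quotient}, $\ccla{\lmodel}_\Equ=\ccla{\lmodel_\Aut}_\Equ$. By Proposition~\ref{prop:qbijection}, $\ccla{\lmodel_\Aut}_\Equ$ is isomorphic to the quotient PDFA $\ccla{\Aut}_\Equ$, whose state set $\qSta=\ccla{\Sta}^\Aut_\Equ$ is the quotient of the finite set $\Sta$ and therefore finite. Hence $\ccla{\Words}^\lmodel_\Equ$ is finite, and so by Def.~\ref{def:regular}, $\lmodel$ is \Equ-regular.

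The routine work lies in the forward direction; the only subtlety is justifying that the construction above is a well-defined PDFA, which reduces to the observation that the equivalence classes $\ccla{\lmodel(\woru)}_\Equ$ depend only on $\ccla{\woru}^\lmodel_\Equ$ and are nonempty (allowing the choice of a representative distribution). The converse is essentially a direct corollary of results already established in Section~\ref{sec:pdfa}.
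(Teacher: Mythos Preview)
Your proof is correct. The backward direction matches the paper's argument exactly (Corollary~\ref{cor:algebraic_quotient} together with Proposition~\ref{prop:qbijection}).

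For the forward direction you take a genuinely different route from the paper. The paper proves \Equ-regular $\implies$ \Equ-recognizable by invoking the learning algorithm: since $\lmodel$ is \Equ-regular, Proposition~\ref{prop:pLstar-termination} guarantees that \PLstar\ terminates with some quotient PDFA \qpdfa, and then Proposition~\ref{prop:plstar-correctness-realization} gives $\lmodel\indist_\Equ\lmodel_\Aut$ for any realization \Aut\ of \qpdfa. Your argument instead bypasses the algorithm entirely and builds a recognizing PDFA directly from the finite quotient structure $\ccla{\lmodel}_\Equ$: you take states to be the classes, transitions to be $\qfunc$, and $\policy$ to be an arbitrary choice of representative from each class $\ccla{\lmodel(\woru)}_\Equ$. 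This is precisely the classical Myhill--Nerode construction transported to the present setting, and in the paper's own terminology it amounts to observing that, when $\ccla{\lmodel}_\Equ$ is finite, it is itself a quotient PDFA in the sense of Def.~\ref{def:qpdfa}, and any realization of it (Def.~\ref{def:realization}) witnesses recognizability via (the argument behind) Proposition~\ref{prop:realization_of_qpdfa}. Your approach is more elementary and self-contained, not depending on the correctness and termination analysis of \PLstar; the paper's approach has the virtue of reusing the algorithmic results already proved in Section~\ref{sec:plstar}, at the cost of a heavier dependency chain.
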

\begin{proof}
\begin{description}
\item[$\implies$] 
Suppose \lmodel\ is \Equ-regular. 
Then, by Prop.~\ref{prop:pLstar-termination}, \PLstar\ terminates.
Let \qpdfa\ be the output and \Aut\ be any realization of it.
By Prop.~\ref{prop:plstar-correctness-realization}, 
$\lmodel \indist_\Equ \lmodel_\Aut$.
Therefore, \lmodel\ is PDFA \Equ-recognizable.
\item[$\impliedby$] 
Suppose \lmodel\ is PDFA \Equ-recognizable. Then, there exists a PDFA \Aut\ such that $\lmodel \indist_\Equ \lmodel_\Aut$. 
By Corollary~\ref{cor:algebraic_quotient}, 
$\ccla{\lmodel}_\Equ = \ccla{\lmodel_\Aut}_\Equ$ and by Proposition~\ref{prop:qbijection}, $\ccla{\lmodel_\Aut}_\Equ$ is finite.
Therefore, \lmodel\ is \Equ-regular.
\qedhere
\end{description}
\end{proof}

\section{Learning with tolerance relations}\label{sec:tolerance}

Other works proposed active learning algorithms based on tolerance relations~\cite{weiss_WFA_learning}.
However, relying on a tolerance rather than on an congruence has two important consequences:
\begin{enumerate}
    \item Given a language model $\lmodel$ and a similarity $\Tol$ on $\ProbT$, there is no well defined notion of quotient structure for the tolerance $\wsim^\lmodel_\Tol$ as the one given in Definition \ref{def:qlmodel}.
    However, the concept of quotient gives a clear objective for learning and its key for termination (Proposition~\ref{prop:pLstar-termination}) which relies on the minimality of the quotient (Proposition~\ref{prop:congruence_stasim_minimal}).
    \item Given two tolerant language models $\lmodel_1$ and $\lmodel_2$, as in Definition \ref{def:tolerance_lm}, the tolerance relations $\wsim^{\lmodel_1}_\Tol$ and $\wsim^{\lmodel_2}_\Tol$ are not necessarily the same (see Example~\ref{ex:diff_tolerances}).
    However, equality of the relations for congruences is used in the proof of Proposition~\ref{prop:plstar-correctness} (correctness) which relies on Proposition~\ref{prop:same_partition} through Corollary~\ref{cor:algebraic_quotient}.
\end{enumerate}
Therefore, since the existence of a quotient structure and the equality of congruences are  cornerstone for correctness and termination of Algorithm~\ref{alg:pl_star}, it is worth studying the impact on learning when these properties do not hold.

%
We start by proving that if we are given an equivalence \Equ\ finer than a similarity \Tol, then Algorithm~\ref{alg:pl_star} can be used to learn a PDFA \Tol-tolerant to a target language model. \\
%

\begin{Proposition}
Let $\Equ$ be an equivalence finer than similarity $\Tol$ on $\ProbT$. Then, for any language model $\lmodel$, if $\Aut$ is a realization of the quotient PDFA output of \PLstar\ for \lmodel, then $\lmodel_\Aut\wsim_{\Tol}\lmodel$.    
\end{Proposition}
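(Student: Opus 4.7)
The plan is to chain together the correctness guarantee of \PLstar\ with the hypothesis that $\Equ$ refines $\Tol$. The structure is essentially a three-step unfolding, with no real difficulty beyond applying the definitions.

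First, since $\Aut$ is a realization of the quotient PDFA output by \PLstar\ on $\lmodel$, the \EQ\ oracle must have answered \True\ on that quotient. Proposition~\ref{prop:plstar-correctness-realization} then yields $\ccla{\lmodel}_\Equ = \ccla{\lmodel_\Aut}_\Equ$, and Corollary~\ref{cor:algebraic_quotient} translates this into $\lmodel \indist_\Equ \lmodel_\Aut$. Unfolding Definition~\ref{def:indist_lm}, this means $\lmodel(\woru) =_\Equ \lmodel_\Aut(\woru)$ for every $\woru \in \Words$.

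Second, I would invoke the hypothesis. By the definition of ``finer'' recalled just before Proposition~\ref{prop:equal_equ}, $\Equ$ being finer than $\Tol$ on $\ProbT$ says exactly that $\pdist =_\Equ \pdist'$ implies $\pdist \tsim_\Tol \pdist'$ for all $\pdist, \pdist' \in \ProbT$. Applying this pointwise to the conclusion of the first step gives $\lmodel(\woru) \tsim_\Tol \lmodel_\Aut(\woru)$ for every $\woru \in \Words$.

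Third, by Definition~\ref{def:tolerance_lm} this is precisely the statement $\lmodel \wsim_\Tol \lmodel_\Aut$, from which $\lmodel_\Aut \wsim_\Tol \lmodel$ follows by symmetry of $\wsim_\Tol$ (which itself comes from the symmetry of $\tsim_\Tol$, noted just below Definition~\ref{def:tolerance_lm}). The only subtlety worth flagging is that the statement tacitly assumes \PLstar\ in fact produces an output for $\lmodel$; otherwise ``realization of the output'' is vacuous. Under $\Equ$-regularity of $\lmodel$ this is guaranteed by Proposition~\ref{prop:pLstar-termination}, and in all other cases the implication holds trivially. No step requires inspecting the internals of the algorithm or of the quotient construction, so I expect no obstacle beyond correctly citing the already-proved results.
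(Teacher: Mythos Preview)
Your proposal is correct and follows essentially the same route as the paper: obtain $\lmodel_\Aut \indist_\Equ \lmodel$ from the correctness of \PLstar, unfold this pointwise, and then apply the hypothesis that $\Equ$ refines $\Tol$ to conclude $\lmodel_\Aut \wsim_\Tol \lmodel$. If anything, your chain of citations (Proposition~\ref{prop:plstar-correctness-realization} then Corollary~\ref{cor:algebraic_quotient}) is slightly more precise than the paper's single reference to Proposition~\ref{prop:plstar-correctness}, and your remark on the tacit termination assumption is a fair observation.
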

\begin{proof}
By Proposition~\ref{prop:plstar-correctness} we have $\lmodel_\Aut\indist_\Equ\lmodel$. By definition this means $\lmodel_\Aut(\woru)=_\Equ\lmodel(\woru)$ for any $\woru\in\Words$. Thus, for any $\woru\in\Words$, we have $\lmodel_\Aut(\woru)\tsim_\Tol\lmodel(\woru)$ since $\Equ$ is finer than $\Tol$. That is $\lmodel_\Aut\wsim_\Tol\lmodel$.
\end{proof}

\begin{figure}[htbp]
\centering
    \begin{tikzpicture}
        \node[state, initial] (qlambda) {\stackanchor{$\sta_0$}{0.6}};
        \node[state, right of = qlambda] (qa) {\stackanchor{$\sta_1$}{0.5}};
        \node[state, right of = qa] (qaa) {\stackanchor{$\sta_2$}{0.4}};
        \draw           
            (qlambda) edge[above] node{$a/0.4$} (qa)
            (qa) edge[above] node{$a/0.5$}(qaa)
            (qaa) edge[loop right] node{$a/0.6$}(qaa);
    \end{tikzpicture}
    \caption{PDFA $\Aut$}
    \label{fig:pdfa_quant_finer_tol}
\end{figure}

\begin{Example}\label{ex:quant_finer_tol}
By Proposition~\ref{prop:quantp_vs_tol} the equivalence $=_\quantp$ is finer than similarity $\tsim_{(\vd,\quantp^{-1})}$. Then $\mathrm{L}^\ast_{=_\quantp}$ can be used to learn a $(\vd,\quantp^{-1})$-tolerant PDFA.
Moreover, $\tsim_{(\vd,\tol_1)}$ is finer than $\tsim_{(\vd,\tol_2)}$ whenever $\tol_1\leq\tol_2$.
Then, for $\tol=0.15$, taking the quantization equivalence $=_\quantp$ with $\quantp\geq 7$, $\mathrm{L}^\ast_{=_\quantp}$ will return a quotient PDFA such that every realization is $(\vd,\quantp^{-1})$-tolerant with the target model, and therefore, $(\vd,0.15)$-tolerant.
For the PDFA \Aut\ in Figure~\ref{fig:pdfa_quant_finer_tol}, $\mathrm{L}^\ast_{=_\quantp}$ will return the quotient PDFA $\ccla{\Aut}_{=_\quantp}$ which has three states, since $\sta_0$, $\sta_1$, and $\sta_2$ are not $\indist_{=_\quantp}$-equivalent.
\qedexample
\end{Example}

~\\
Now, recall that, given a reflexive and symmetric relation $R$ in any set $X$, a \emph{clique} $\clique$ is a set of pairwise related elements in $X$ such that $x R y$ for all $x, y \in \clique$. A \emph{clique partition} $\Clique\subseteq\powerset{X}$ is a \emph{cover} of $R$ with pairwise disjoint cliques. 
Notice that a clique partition defines an equivalence relation $E$ that is finer than $R$ by letting $xEy$ if and only if $x$ and $y$ belong to the same clique of $C$.
Conversely, given an equivalence relation $E$ finer than $R$, the set of classes of $E$ defines a clique partition of $X$.
For $x\in X$, we denote $\ccla{x}_\Clique$ the clique of (the clique partition) $\Clique$ containing $x$. \\

\begin{Example}\label{ex:clique_tol}
Consider again the PDFA \Aut\ in Figure~\ref{fig:pdfa_quant_finer_tol}.
There are three clique partitions of the set of distributions of \Aut\ induced by $\tsim_{(\vd,0.15)}$, namely:
\begin{align*}
\Clique_1 &= \{\ \{[0.5, 0.5]\},\ \{[0.4, 0.6]\},\ \{[0.6, 0.4]\}\ \} \\
\Clique_2 &= \{\ \{[0.5, 0.5], [0.4, 0.6]\},\ \{[0.6, 0.4]\}\ \} \\
\Clique_3 &= \{\ \{[0.4, 0.6]\},\ \{[0.5, 0.5], [0.6, 0.4]\}\ \}    
\end{align*}
where $[x_1,x_2]$ is a shorthand for $\{a\mapsto x_1, \terminal\mapsto x_2\}$.
Clearly, $\Clique_1$ gives the same classes than $=_7$, so the output of $\mathrm{L}^\ast_{\Clique_1}$ is the same as $\mathrm{L}^\ast_{=_\quantp}$.
For $\Clique_2$, the output quotient PDFA still has three states because 
\begin{align*}
\ccla{\policy(\tra(\sta^\Aut_0, a))}_{\Clique_2} 
&= \{[0.5, 0.5], [0.4, 0.6]\} 
\neq \{[0.6, 0.4]\} 
= \ccla{\policy(\tra(\sta^\Aut_1, a))}_{\Clique_2} 
\end{align*}
which implies that $\sta^\Aut_0 \not\indist^\Aut_{\Clique_2} \sta^\Aut_1$.
On the other hance, for $\Clique_3$, the output quotient PDFA has two states because
\begin{align*}
\ccla{\policy(\tra(\sta^\Aut_1, a))}_{\Clique_3} 
&= \{[0.5, 0.5], [0.6, 0.4]\} 
= \ccla{\policy(\tra(\sta^\Aut_2, a))}_{\Clique_3} 
\end{align*}
which implies that $\sta^\Aut_1 \indist^\Aut_{\Clique_3} \sta^\Aut_2$.
\qedexample
\end{Example}

~\\
Another natural alternative consists in defining a clique partition of \Words. This idea is found in~\cite{weiss_WFA_learning}. The following definition generalizes this concept. 
Given a similarity \Tol\ and a language model \lmodel, a $(\Tol,\lmodel)$-\emph{clique congruence} is a clique partition of \Words\ induced by tolerance $\wsim^\lmodel_\Tol$ that satisfies:
\begin{align}\label{def:clique_congruence}
\forall\woru,\worv\in\Words.\ 
    \forall\symb\in\Symb.\ 
        \ccla{\woru}_\Clique = \ccla{\worv}_\Clique\
                \implies \ccla{\woru\symb}_\Clique = \ccla{\worv\symb}_\Clique
\end{align}

\subsection{\Tol-regularity and \Tol-recognizability}

\begin{Definition}\label{def:S_regular}
Given a similarity $\Tol$ we say that a language $\lmodel$ is $\Tol$-\emph{regular} if there exists a finite $(\Tol,\lmodel)$-clique congruence.
\end{Definition}

~\\

\begin{Definition}\label{def:S_recognizable}
Given a similarity $\Tol$ we say that a language $\lmodel$ is PDFA $\Tol$-\emph{recognizable} if there exists a PDFA \Aut\ such that $\lmodel \wsim_\Tol \lmodel_\Aut$.
\end{Definition}

~\\
\Tol-regularity implies PDFA \Tol-recognizability. \\

\begin{Proposition}\label{prop:S_regular_implies_recongnizable}
For every similarity \Tol\ and a language model \lmodel, if \lmodel\ is \Tol-regular then it is \Tol-recognizable by a PDFA. 
\end{Proposition}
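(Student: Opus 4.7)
The plan is to construct a PDFA $\Aut$ directly from the finite $(\Tol,\lmodel)$-clique congruence $\Clique$ whose existence is guaranteed by $\Tol$-regularity, and then verify that $\lmodel\wsim_\Tol\lmodel_\Aut$ by reducing tolerance at a single word to tolerance at the clique representative.

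First, I would define $\Aut \eqdef (\Sta, \staI, \policy, \tra)$ by taking $\Sta \eqdef \Clique$ (which is finite by Def.~\ref{def:S_regular}), $\staI \eqdef \ccla{\emptyW}_\Clique$, and $\tra(\ccla{\woru}_\Clique, \symb) \eqdef \ccla{\woru\symb}_\Clique$. The key point is that $\tra$ is well defined precisely because $\Clique$ satisfies Eq.~\ref{def:clique_congruence}: if $\woru$ and $\worv$ represent the same clique then $\woru\symb$ and $\worv\symb$ do too. For the output distribution, for each clique $C\in\Clique$ I would pick a representative $\woru_C\in C$ (nonempty since $\Clique$ partitions $\Words$) and set $\policy(C) \eqdef \lmodel(\woru_C)$.

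Second, a straightforward induction on the length of $\woru\in\Words$, using Eqs.~\ref{def:tau_star_a} and~\ref{def:tau_star_b} together with the definition of $\tra$ above, gives $\traW(\woru) = \ccla{\woru}_\Clique$. This simultaneously shows that every state is reachable (so $\Aut$ is a legitimate PDFA under the reachability convention of Section~\ref{sec:pdfa}) and yields, by Defs.~\ref{def:Aut} and~\ref{def:policyW},
\begin{align*}
\lmodel_\Aut(\woru) \;=\; \policy(\traW(\woru)) \;=\; \policy(\ccla{\woru}_\Clique) \;=\; \lmodel\bigl(\woru_{\ccla{\woru}_\Clique}\bigr).
\end{align*}

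Third, to conclude $\lmodel\wsim_\Tol\lmodel_\Aut$, by Def.~\ref{def:tolerance_lm} it suffices to show $\lmodel(\woru)\tsim_\Tol \lmodel_\Aut(\woru)$ for every $\woru\in\Words$. Since $\woru$ and $\woru_{\ccla{\woru}_\Clique}$ lie in the same clique of $\Clique$ and $\Clique$ covers the tolerance $\wsim^\lmodel_\Tol$, we have $\woru \wsim^\lmodel_\Tol \woru_{\ccla{\woru}_\Clique}$. Instantiating Def.~\ref{def:simil_str} with $\worw=\emptyW$ gives $\lmodel(\woru)\tsim_\Tol \lmodel\bigl(\woru_{\ccla{\woru}_\Clique}\bigr) = \lmodel_\Aut(\woru)$, which is the required conclusion.

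The construction is essentially forced, so there is no deep obstacle. The only subtle point is the well-definedness of $\tra$, which is exactly what distinguishes a clique congruence from an arbitrary clique cover of the tolerance; this is why the compatibility condition in Eq.~\ref{def:clique_congruence} was built into the definition of a clique congruence in the first place. Note that no minimality, uniqueness or quotient-style property of $\Aut$ is claimed — and, as discussed in Section~\ref{sec:tolerance}, none should be expected in the tolerance setting.
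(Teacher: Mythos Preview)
Your proof is correct and matches the paper's own argument essentially step for step: the same PDFA built from the clique congruence, the same well-definedness check via Eq.~\ref{def:clique_congruence}, the same induction giving $\traW(\woru)=\ccla{\woru}_\Clique$, and the same final appeal to $\woru\wsim^\lmodel_\Tol\woru_{\ccla{\woru}_\Clique}$ instantiated at $\worw=\emptyW$.
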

\begin{proof}
Let $\Clique$ be a finite $(\Tol,\lmodel)$-clique congruence. We can build the following PDFA $\Aut_\Clique = ( \Sta, \staI, \policy, \tra )$ where:
\begin{itemize}
    \item$\Sta\eqdef\Clique$
    \item$\staI \eqdef \ccla{\emptyW}_\Clique$
    \item$\tra\left(\ccla{\woru}_\Clique, \symb\right) \eqdef \ccla{\woru \symb}_\Clique$
    \item$\policy\left(\ccla{\woru}_\Clique\right) \eqdef \text{ choose an arbitrary element of }\{\lmodel(\worv) \mid \worv\in\ccla{\woru}_\Clique\}$
\end{itemize}
By Definition~\ref{def:clique_congruence}\ $\tra$ is well defined, and by an argument analogous to Proposition~\ref{prop:qfuncW}, it satisfies $\traW(\woru)=\ccla{\woru}_\Clique$ for all $\woru\in\Words$. Then, for all $\woru\in\Words$, we have
\begin{align*}
\lmodel_{\Aut_\Clique}(\woru) &  = \policy\left(\ccla{\woru}_\Clique\right)& \text{since $\traW(\woru)=\ccla{\woru}_\Clique$}\\
& = \lmodel(\worv) & \text{for some $\worv\in\ccla{\woru}_\Clique$}\\
& \tsim_\Tol \lmodel(\woru) & \text{by definition of clique and $\worv\wsim^\lmodel_\Tol \woru$}
\end{align*}
Hence $\lmodel_{\Aut_\Clique}\wsim_\Tol\lmodel$.
\end{proof}

~\\
On the contrary, PDFA \Tol-recognizability does not imply \Tol-regularity. \\
\begin{Proposition}\label{prop:S_regular_vs_recongnizable}
There exists a similarity \Tol\ and a PDFA \Tol-recognizable language model \lmodel\ which is not \Tol-regular.   
\end{Proposition}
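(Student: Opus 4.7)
The plan is to exhibit an explicit counterexample built on Example~\ref{ex:diff_tolerances}. Take $\Symb = \{a\}$, the similarity $\Tol = \tsim_{(\vd, 0.15)}$ on $\ProbT$, and the language model $\lmodel_1$ from Eq.~\ref{eq:example_lm_tolerance}, with $N_1 = \{n_k\}_{k\geq 1}$ chosen so that the gaps $n_{k+1} - n_k$ are strictly increasing (for instance, the triangular numbers $n_k = k(k+1)/2$, with gap $k+1$), and $N_2 = \Nat \setminus N_1$.

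For PDFA $\Tol$-recognizability, I would note that the constant language model $\lmodel_2$ is realized by the one-state PDFA whose unique policy is $[0.5, 0.5]$, and Example~\ref{ex:diff_tolerances} already verifies that $\lmodel_1 \wsim_\Tol \lmodel_2$. This yields $\lmodel_1$ PDFA $\Tol$-recognizable by Def.~\ref{def:S_recognizable}.

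For the failure of $\Tol$-regularity, I would show that $\{a^{n_k} : k \geq 1\}$ is an infinite antichain for $\wsim^{\lmodel_1}_\Tol$. Since $\lmodel_1$ takes only two values $[0.4, 0.6]$ and $[0.6, 0.4]$, whose variation distance is $0.2 > 0.15$, two distributions in the image of $\lmodel_1$ are $\Tol$-tolerant iff they are equal. Hence $a^{n_k} \wsim^{\lmodel_1}_\Tol a^{n_j}$ reduces to the condition: for every $m \in \Nat$, $n_k + m \in N_1$ iff $n_j + m \in N_1$. For $k < j$ I would pick $m = n_{k+1} - n_k$: then $n_k + m = n_{k+1} \in N_1$, whereas $n_j + m$ lies strictly between $n_j$ and $n_{j+1}$ because the gap $n_{j+1} - n_j$ strictly exceeds $n_{k+1} - n_k$, so $n_j + m \in N_2$. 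This distinguishing suffix witnesses non-relatedness.

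Since the $\{a^{n_k}\}_{k \geq 1}$ are pairwise non-related, every clique of $\wsim^{\lmodel_1}_\Tol$ contains at most one of them, so every clique cover of $\Words$ --- in particular every $(\Tol,\lmodel_1)$-clique congruence --- must be infinite. By Def.~\ref{def:S_regular}, $\lmodel_1$ is not $\Tol$-regular, which together with the recognizability above proves the proposition. The only delicate step is the gap argument establishing pairwise non-relatedness of the $a^{n_k}$; the rest is unpacking definitions.
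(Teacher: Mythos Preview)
Your proposal is correct and follows essentially the same route as the paper's proof: both use $\lmodel_1$ from Example~\ref{ex:diff_tolerances} with strictly increasing gaps, invoke the one-state PDFA (the constant model $\lmodel_2$) for recognizability, and derive non-regularity from the fact that the $a^{n_k}$ are pairwise non-related for $\wsim^{\lmodel_1}_\Tol$. Your write-up is in fact more complete than the paper's, since you supply the explicit distinguishing-suffix argument (via $m=n_{k+1}-n_k$) that the paper and Example~\ref{ex:diff_tolerances} only assert.
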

\begin{proof}
Let $\Tol$ be $\tsim_{(\vd,\tol)}$ for $\tol=0.15$, \lmodel\ be the language model $\lmodel_1$ of Example~\ref{ex:diff_tolerances} where $N_1 = \{n_k\}_k$ is a set with the property that increments $n_{k+1}-n_k$ are strictly increasing, and \AutB\ the PDFA of Figure~\ref{fig:pdfa_quant_finer_tol} (right).
Then $\lmodel \wsim_{\Tol} \lmodel_{\AutB}$.
Suppose by contradiction that $\lmodel$ is $\Tol$-regular.
Then there exists a finite $(\Tol,\lmodel)$-clique congruence $\Clique$ of $\Words$.
Since the set $\{a^{n_k}\}_{k\geq 1}$ is infinite, there exists a clique $\clique\in\Clique$ with $\{a^{n_i},a^{n_j}\}\subset \clique$ and $i\neq j$.
This is a contradiction since all words in $\{a^{n_k}\}_{k\geq 1}$ are pairwise non related by $\wsim^\lmodel_\Tol$.
Hence, $\lmodel$ is not $\Tol$-regular.
\qedhere
\end{proof}
This example also shows that there are language models \lmodel\ and tolerances \Tol\ such that no learning algorithm based on constructing the $(\Tol,\lmodel)$-clique congruence could learn an \Tol-tolerant PDFA of \lmodel\ even if such PDFA exists. 

\section{Conclusions}\label{sec:conclusions}

The paper studied the problem of learning probabilistic deterministic finite automata from language models through the lens of tools given by the algebraic structures induced by similarities and equivalences, which provides a framework for understanding the foundations of algorithms proposed and implemented in the literature.
On one hand, it shows that relying on equivalences on distributions allows solving the problem using the same artifacts than for formal languages, which are derived from the fact that there is a canonically defined quotient. 
On the other, it points out that algorithmic learning with tolerances is not yet well understood and requires further theoretical developments.

\paragraph{Acknowledgements}
This work has been partially funded by ANII-Agencia Nacional de Investigaci\'on e Innovaci\'on, Uruguay, under grants IA\_1\_2022\_1\_173516 and FMV\_1\_2023\_1\_175864.

\bibliography{main.bib}
\bibliographystyle{plain}

\end{document}